\newtheorem{theorem}{Theorem}[section]
\newtheorem{corollary}[theorem]{Corollary}
\newtheorem{proposition}[theorem]{Proposition}
\newtheorem{lemma}[theorem]{Lemma}
\newtheorem{claim}[theorem]{Claim}
\theoremstyle{definition}
\theoremstyle{definition}\newtheorem{definition}[theorem]{Definition}
\theoremstyle{observation}
\newcommand{\comment}[1]{}
\newcommand{\QED}{\mbox{}\hfill \rule{3pt}{8pt}\vspace{10pt}\par}
\def\polylog{\operatorname{polylog}}
\newcommand{\ignore}[1]{}
\newcommand{\eat}[1]{}
\newcommand{\squishlist}{
 \begin{list}{$\bullet$}
  { \setlength{\itemsep}{0pt}
     \setlength{\parsep}{3pt}
     \setlength{\topsep}{3pt}
     \setlength{\partopsep}{0pt}
     \setlength{\leftmargin}{1.5em}
     \setlength{\labelwidth}{1em}
     \setlength{\labelsep}{0.5em} } }
\newcommand{\squishend}{
  \end{list}  }
\def\e{{\rm E}}
\def\bone{{\bf 1}}
\def\anisur#1{\marginpar{$\leftarrow$\fbox{D}}\footnote{$\Rightarrow$~{\sf #1 --Anisur}}}
\def\anisur#1{}
\def\gopal#1{}
\def\atish#1{}
\begin{document}

\title{Fast Distributed Computation in Dynamic Networks \\ via Random Walks}

\begin{titlepage}
\author{Atish {Das Sarma} \thanks{eBay Research Labs, eBay Inc., CA, USA.
\hbox{E-mail}:~{\tt atish.dassarma@gmail.com}} \and  Anisur Rahaman Molla \thanks{Division of Mathematical
Sciences, Nanyang Technological University, Singapore 637371. \hbox{E-mail}:~{\tt anisurpm@gmail.com}. Supported in part by Nanyang Technological University grant M58110000.} \and Gopal Pandurangan \thanks{Division of Mathematical
Sciences, Nanyang Technological University, Singapore 637371 and Department of Computer Science, Brown University, Providence, RI 02912, USA. \hbox{E-mail}:~{\tt gopalpandurangan@gmail.com}. Supported in part by the following research grants: Nanyang Technological University grant M58110000, Singapore Ministry of Education (MOE) Academic Research Fund (AcRF) Tier 2 grant MOE2010-T2-2-082, and a grant from the US-Israel Binational Science Foundation (BSF).}}

\date{}

\maketitle \thispagestyle{empty}

%\vspace*{-0.2in}

\maketitle
%\vspace{-0.3in}
\begin{abstract}
The paper investigates efficient distributed computation in {\em dynamic} networks in which the
network topology changes (arbitrarily) from round to round.   
Random walks are a fundamental primitive  in a wide variety of network applications; the local
 and lightweight nature of random walks is especially useful for  providing uniform and
efficient solutions to distributed control of dynamic networks.  
Given their  applicability in dynamic networks, we  focus on developing fast distributed algorithms for performing random walks
 in such networks.  %We then present fast algorithms for fundamental  distributed computing problems in dynamic networks using  random walks. 

%In prior work [Das Sarma et al., PODC 2010], a fast distributed
%algorithm for performing random walks was presented for {\em static} networks.  This algorithm's time complexity is 
%sublinear  in the length of the walk ---
%the algorithm performs a random walk of length $\ell$  in
%$\tilde{O}(\sqrt{\ell D})$  rounds\footnote{$\tilde{O}$ hides
%$\polylog{n}$ factors where $n$ is the number of nodes in the
%network.} (with high probability) on an undirected  network, where
%$D$ is the diameter of the network. However, this algorithm and its underlying framework does not apply to a dynamic network. 
Our first contribution is a rigorous framework for design and analysis of distributed random walk algorithms in dynamic networks. We then develop a fast distributed random walk
based algorithm that runs in $\tilde{O}(\sqrt{\tau \Phi})$ rounds\footnote{$\tilde{O}$ hides
$\polylog{n}$ factors where $n$ is the number of nodes in the
network.} (with high probability), where $\tau$ is the {\em dynamic mixing time}
and $\Phi$ is the {\em dynamic diameter} of the network respectively, and  returns a sample close to a suitably defined stationary distribution of the dynamic network.  We also apply our fast random walk algorithm to devise fast distributed algorithms for two key problems, namely,  information dissemination and decentralized computation
of spectral properties in a dynamic network.

Our next contribution is  a fast distributed algorithm 
for the fundamental problem of  information dissemination 
(also called as {\em gossip}) in a dynamic network. In gossip, or more generally,
$k$-gossip, there are $k$ pieces of information (or tokens) that are
initially present in some nodes and the problem is to disseminate the
$k$ tokens to all nodes. 
%In an $n$-node network, solving $n$-gossip  allows nodes to  distributively compute any computable function of their initial inputs using messages of size $O(\log n + d)$, where $d$ is the size of the input to the single node. 
We present a random-walk based algorithm that runs in  $\tilde{O}(\min\{n^{1/3}k^{2/3}(\tau \Phi)^{1/3}, nk\})$ rounds (with high probability).  To the best of our knowledge, this is the first $o(nk)$-time  fully-distributed {\em token forwarding} algorithm that improves over  the previous-best $O(nk)$ round distributed algorithm [Kuhn et al., STOC 2010], although in an oblivious adversary model.

Our final contribution is a simple and fast distributed
algorithm for estimating the dynamic mixing time and  related spectral properties of the
underlying dynamic network.

\end{abstract}

\noindent {\bf Keywords:} Dynamic Network, Distributed Algorithm, Random walks, Random sampling,   Information Dissemination, Gossip. \\

%\noindent {\bf Format:} Regular Presentation.

%\noindent {\bf Eligible for Best Student Paper Award:} Student
%recommended for award - Anisur Rahaman Molla.

\end{titlepage}

%\vspace{-0.05in}
\section{Introduction}
\label{sec:intro}
%\vspace{-0.05in}
Random walks play a central role in computer science  spanning a
wide range of areas in both theory and practice. 
Random walks are used
as an integral subroutine in a wide variety of network applications
ranging from token management and load balancing
to search, routing, information propagation and gathering,
network topology construction and building random spanning
trees (e.g., see \cite{DNP09-podc} and the references therein).
They are particularly useful in providing uniform and
efficient solutions to distributed control of dynamic networks
\cite{BBSB04, ZS06}.  Random walks are local and lightweight and
require little index or state maintenance which make them especially
attractive to self-organizing dynamic networks such as peer-to-peer,
overlay, and ad hoc wireless networks. In fact, in highly dynamic networks,
where the topology can change arbitrarily from round to round (as assumed
in this paper), extensive distributed algorithmic techniques that have been developed 
for the last few decades for {\em static} networks (see e.g., \cite{peleg,lynch,tel})
are not readily applicable. On the other hand, we would like  distributed algorithms to  work correctly and terminate even in networks that keep changing continuously over time (not assuming any eventual stabilization).
  Random walks being so simple and  very local (each subsequent step in the walk depends only on the neighbors of the current node and does not depend on the topological changes taking place elsewhere in the network) can serve as a powerful tool to design distributed algorithms for such highly dynamic networks.
However, it remains a challenge to show that one can indeed use random
walks to solve non-trivial distributed computation problems efficiently in such networks, with provable guarantees. Our paper is a step in this direction. \\
\indent A key purpose of random walks in  many of the network applications 
is to perform  node sampling.  While the sampling requirements in
different applications vary, whenever a true  sample is required from
a random walk of certain steps, typically all applications perform
the walk naively
--- by simply passing a token from one node to its neighbor: thus to
perform a random walk of length $\ell$ takes time linear in $\ell$. 
In prior work \cite{DNP09-podc, DasSarmaNPT10}, the problem of performing random walks
in time that is significantly faster, i.e., sublinear in $\ell$, was studied.
In \cite{DasSarmaNPT10}, a fast distributed random walk algorithm
was presented that ran in time sublinear in $\ell$, i.e., in  $\tilde{O}(\sqrt{\ell D})$ rounds (where $D$ is the network diameter). This algorithm used only small sized messages (i.e., it assumed the  standard CONGEST model of distributed computing \cite{peleg}). 
However, a main drawback of this result is that it applied only to {\em static}
networks. A major problem left open in \cite{DasSarmaNPT10}  is whether a similar approach
can be used to speed up random walks in dynamic networks. 

%\vspace{-0.04in}
The goals of this
paper are two fold: (1) giving fast distributed algorithms for performing  random walk sampling efficiently in dynamic networks,  and (2) 
applying random walks as a key subroutine to solve non-trivial distributed computation problems in dynamic networks.
Towards the first goal, we first present a rigorous framework for studying random walks in a dynamic network (cf. Section \ref{sec:model}).  (This is necessary, since it is not immediately obvious what the output of  random walk sampling in a changing network means.)  The main purpose of  our random walk algorithm is to output a random  sample close to the ``stationary distribution" (defined precisely in Section \ref{sec:model}) of the underlying dynamic network.   Our random walk algorithms work under an oblivious adversary that fully controls the dynamic network topology, but does not know the random choices made by the algorithms (cf. Section \ref{sec:results} for a precise
problem statements and results).
   We present a fast distributed random walk
algorithm that runs in $\tilde{O}(\sqrt{\tau \Phi})$ with high probability (w.h.p.) \footnote{With high probability means with probability at least $1 - 1/n^{\Omega(1)}$, where $n$ is the number of nodes in the network.}, where $\tau$ is (an upper bound on) the dynamic mixing time
and $\Phi$ is the dynamic diameter of the network respectively (cf. Section \ref{sec:algo}).  Our algorithm uses small-sized messages only and  returns a node sample that is  ``close" to the stationary distribution of the dynamic network (assuming the stationary distribution remains fixed even as the network changes). (The precise definitions of these terms are deferred to Section \ref{sec:model}). We further extend our algorithm to efficiently perform and return $k$ independent random walk samples in  $\tilde{O}(\min\{\sqrt{k\tau \Phi}, k+\tau\})$ rounds (cf. Section \ref{sec:k-algo}). This is directly useful
in the  applications considered in this paper. 

Towards the second goal, we present two main applications of our fast random walk sampling algorithm (cf. Section \ref{sec:apps}). The first key application is a  fast  distributed algorithm 
for the fundamental problem of  {\em information dissemination}  
(also called as {\em gossip}) in a dynamic network.  In gossip, or more generally,
$k$-gossip, there are $k$ pieces of information (or tokens) that are
initially present in some nodes and the problem is to disseminate the
$k$ tokens to all nodes. In an $n$-node network, solving $n$-gossip  allows nodes to  distributively compute any computable function of their initial inputs using messages of size $O(\log n + d)$, where $d$ is the size of the input to the single node \cite{Kuhn-stoc}.  We present a random-walk based algorithm that runs in  $\tilde{O}(\min\{n^{1/3}k^{2/3}(\tau \Phi)^{1/3}, nk\})$ rounds with high probability (cf. Section \ref{sec:info-dissem}).  To the best of our knowledge, this is the first $o(nk)$-time  fully-distributed {\em token forwarding} algorithm that  improves over  the previous-best $O(nk)$ round distributed algorithm \cite{Kuhn-stoc}, albeit under an oblivious adversarial model.  A lower bound of $\Omega(nk/\log n)$  under the adaptive adversarial model of \cite{Kuhn-stoc}, was recently shown in \cite{DPRS-arxiv}; hence one cannot do substantially better than the $O(nk)$ algorithm in general under an adaptive adversary.

Our  second application is a decentralized algorithm for computing
  global metrics of the underlying dynamic network ---
 dynamic mixing time and related spectral properties (cf. Section \ref{sec:mixest}).   Such algorithms can be useful building
 blocks in the design of {\em topologically (self-)aware} dynamic networks, i.e., networks that can  monitor and regulate themselves in a decentralized fashion. For example,  efficiently computing the mixing time or the spectral gap allows  the network to monitor connectivity and expansion properties through time.

\iffalse
\paragraph{Organization of the paper.} In the next section (Section \ref{sec:model}) we formally define
the the dynamic graph model, random walks in dynamic networks, and also define the associated parameters. Section
\ref{sec:results} formally states the problems and presents our results. Section \ref{sec:related} talks about related work on dynamic networks and random walks and  gives a technical overview. The distributed random walk algorithms for single random walk and precise theorem statements are in Section \ref{sec:algo} and for $k$ random walks is in Section \ref{sec:k-algo}. Applications to information dissemination and decentralized spectral computation are in Section \ref{sec:apps}. We conclude with a summary and open problems in Section \ref{sec:conc}. For lack of space we place the pseudocode of the algorithms, their analysis, and full proofs in the Appendix.   
\fi

\section{Network Model and Definitions}\label{sec:model}
%\vspace{-0.1in}
\subsection{Dynamic Networks}
%\vspace{-0.03in}
We study a general model to describe a dynamic network with a {\em fixed} set of nodes. We consider an oblivious adversary which can make {\em arbitrary} changes to the graph topology in every round as long as the graph is {\em connected}. Such a dynamic graph process (or dynamic graph, for short) is also known as an {\em Evolving Graph} \cite{AKL08}. Suppose $V = \{v_1, v_2, \ldots,v_n\}$ be the set of nodes (vertices) and $\mathcal{G} = G_1, G_2, \ldots$ be an infinite sequence of undirected (connected) graphs on $V$. We write $G_t = (V, E_t)$ where $E_t \in 2^{V\times V} $ is the dynamic edge set corresponding  to round $t \in \mathbb{N}$. The adversary has complete control on the topology of the graph at each round, however it does not know the random choices made by the  algorithm.
In particular, in the context of random walks,  we assume that it does not know the position of the random walk in any round (however, the adversary may know the starting position).\footnote{Indeed, an adaptive adversary that always knows the current position of the random walk can easily choose graphs in each step, so that the walk never really progresses to all nodes in the network.}    Equivalently, we can assume that the adversary chooses the entire sequence $\langle G_t \rangle$ of the graph process $\mathcal{G}$ in advance before execution of the algorithm. This adversarial model has also been used  in \cite{AKL08} in their study of random walks in dynamic networks. 

We say that the dynamic graph process $\mathcal{G}$ has some property when each $G_t$ has that property. 
%This model captures most interesting scenarios of dynamic networks, but most natural problems are NP-complete such as finding strongly connected components and the equivalence of minimum spanning tree~\cite{Feria04, AKL08}. 
For technical reasons, we will assume that each graph $G_t$ is  $d$-regular and  non-bipartite. 
%This ensures  % since in a non-regular dynamic graph it is easy for an oblivious adversary (that knows the starting position)
%to make sure that the walk does not progress.\footnote{For example, consider a line with the walk starting at the first vertex (one end),
%and the first and second vertices are interchanged in every round. However, in a }
%that the random walk will make progress and visit every vertex \cite{AKL08}, and in particular, will ``mix" well (cf. Section \ref{sec:rwd} and  Theorem~\ref{thm:mixtime} in the Appendix).  
Later we will show that our results can be generalized to apply to non-regular graphs as well (albeit at the cost
of a slower running time). The assumption on non-bipartiteness ensures that the mixing time is well defined, however this restriction can be removed using a standard technique: adding self-loops on each vertices (e.g., see \cite{AKL08}). 
Henceforth, we assume that the dynamic graph is a {\em $d$-regular evolving graph}  unless otherwise stated (these two terms will be used interchangeably).  Also we will assume that each $G_t$ is non-bipartite (and connected).

%The work of \cite{AKL08} showed that there are evolving graphs in which a simple random walk takes exponential time to visit all the nodes.  However this is not possible if the evolving graph is regular, see in \cite{AKL08}.  in Theorem~\ref{thm:mixtime} (Section~\ref{mixing_time} in appendix) that after a finite number of step of a simple random walk in a regular evolving graph, the distribution on the vertex set becomes close to the uniform distribution.  \\ %We call it {\em Stationary Evolving Graph} defined below. 
%\footnote{This terminology is similar to the stationarity assumption in information theory, where a stationary stochastic process is one whose entropy is  well-defined and fixed.} 
%({\bf Anisur:} I am removing the definition of stationary evolving graph. There will be no stationary evolving graph, only evolving graph. Also I include the part why we need regularity assumption)

\iffalse
\begin{definition}
A {\em stationary evolving graph} is a dynamic graph $\mathcal{G} = G_1, G_2, \ldots$ where each graph $G_t$ has the same spectral properties including the same eigenvalues and mixing time. In particular, each graph $G_t$ has the same spectral gap i.e. the same second largest eigenvalue ($\lambda_2$) of the transition matrix.  %he mixing time ($\tau_{mix}$) of all the graphs $G_t$ varies at most within $\log n$ factor as it is known that $\frac{1}{1-\lambda_2}\leq \tau_{mix}\leq \frac{\log n}{1-\lambda_2}$~\cite{JS89}. 
\end{definition}
\fi
%{\bf anisur:} I changed below.
%\noindent

%, at the cost of some loss in performance.
%We will also assume that each $G_t$ is non-bipartite, which ensures that the mixing time is well-defined (however, this assumption can be easily removed by considering a lazy random walk e.g., \cite{AKL08}).

%\vspace{-0.15in}
\subsection{Distributed Computing Model}
\label{sec:distmodel}
%\vspace{-0.05in}
We model the communication network as an $n$-node dynamic graph process $\mathcal{G} = G_1, G_2, \ldots$. Every  node has limited initial knowledge. Specifically, assume that each node is associated with a distinct identity number (id). (The node ids are of size $O(\log n)$.) At the beginning of the computation, each node $v$ accepts as input its own identity number and the identity numbers of its neighbors in $G_1$. The node may also accept some additional inputs as specified by the problem at hand (in particular, we assume that all nodes know $n$). The nodes are allowed to communicate through the edges of the graph $G_t$  in each round $t$. We assume that the communication occurs in  synchronous  {\em rounds}. 
In particular, all the nodes wake up simultaneously at the beginning of round $1$, and from this point on the nodes always know the number of the current round. 
We will use only small-sized messages. In particular, at the beginning of each round $t$, each node $v$ is allowed to send a message of size $B$ bits (typically $B$ is assumed to be $O(\polylog n)$) through each edge $e = (v, u) \in E_t$ that is adjacent to $v$.  The message  will arrive to $u$ at the end of the current round. 
This is a standard model of distributed computation known as the {\em CONGEST(B) model} \cite{peleg, PK09} and has been attracting a lot of research attention during last two decades (e.g., see \cite{peleg} and the references therein).
 %Our algorithms can be easily generalized if $B$ bits  are allowed (for any pre-specified parameter $B$) to be sent through each edge in a round. Typically, as assumed here, $B = O(\log n)$, which is number of bits needed to send a node id in an n-node network.
 %Gopal: This congest $\log^2 n$ should be write more carefully. 
 %Anisur: I include the part why we need congest $\log^2 n$ model. Please check it. 
For the sake of simplifying our analysis, we assume that $B = O(\log^2 n)$, although this is generalizable.\footnote{It turns out that the per-round congestion in any edge  in our random walk algorithm is $O(\log^2 n)$ bits  w.h.p.  Hence assuming this bound for $B$ ensures that the random walks can never be delayed due to congestion. This simplifies the correctness proof of our random walk algorithm (cf. Section \ref{sec:correctness}.)} 
%Congestion can affect  moving the walk at $t$-th step in the graph $G_t$. So considering {\em CONGEST$(\log^2 n)$ model} ensures that there will be no congestion in our algorithm and each walk can extend their length from $t$ to $t+1$ in the graph $G_t$. We refer to the reader to look on the correctness of our algorithm for more explicit discussion on this.  

There are several measures of efficiency of distributed algorithms,
but we will focus on one of them, specifically, {\em the
running time}, i.e. the number of {\em rounds} of distributed
communication. (Note that the computation that is performed by the
nodes locally is ``free'', i.e., it does not affect the number of
rounds.)
%This work addresses the
%problem of computing random walks in a time-efficient manner.

\subsection{Random Walks in a Dynamic Graph}
\label{sec:rwd}
%\vspace{-0.05in}
Throughout, we assume the {\em simple   random walk} in an undirected graph: In each step, the walk goes from the current node to a random neighbor, i.e., from the current node $v$, the probability to move in the next step to a neighbor $u$ is $\Pr(v,u) = 1/d(v)$  for $(v,u) \in E$ and $0$ otherwise  ($d(v)$ is the degree of $v$).

A {\em simple random walk} on dynamic graph $\mathcal{G}$ is defined as follows: assume that at time $t$ the walker is at node $v \in V$, and let $N(v)$ be the set of neighbors of $v$ in $G_t$, then the walker goes to one of its neighbors from $N(v)$ uniformly at random.  

%Gopal --- Define formally mixing time and dynamic mixing time in the Appendix and reference it below.

Suppose we have a random walk $v_0 \rightarrow v_1 \rightarrow \ldots \rightarrow v_t$ on a dynamic graph $\mathcal{G}$, where $v_0$ is the starting vertex. Then we get a probability distribution $P_t$ on $v_t$ starting from the initial distribution $P_0$ on $v_0$. We say that the distribution $P_r$ is stationary (or steady-state) for the graph process $\mathcal{G}$ if $P_{t+1} = P_t$ for all $t \ge r$. It is known that for every (undirected) static graph $G$, the distribution $\pi(v) = d(v)/2m$ is stationary. In particular, for a regular graph the stationary distribution is the uniform distribution.
The {\em mixing time} of a random walk on a static graph $G$ is the time $t$ taken to reach ``close'' to the stationary distribution of the graph (see Definition~\ref{def:mix-static} in Section \ref{mixing_time} below). Similar to the static case, for a $d$-regular evolving graph, it is easy to verify that the stationary distribution is the uniform distribution.  Also, for a $d$-regular evolving graph,  the notion of {\em dynamic mixing time}   (formally defined in Section \ref{mixing_time}) is similar to the static case  and is well defined due to the monotonicity property of distributions (cf. Lemma~\ref{lem:monotonicity} in Section \ref{mixing_time}). %\footnote{For lack of space, we place the pseudocode of  our algorithms, their analysis, and full proofs in the Appendix.}.  
We show (cf. Theorem~\ref{thm:mixtime} in the next Section) that the dynamic mixing time is  bounded by $O(\frac{1}{1-\lambda}\log n)$ rounds, where $\lambda$ is an upper bound of the second largest eigenvalue in absolute value of any graph in $\mathcal{G}$.  Note that $O(\frac{1}{1-\lambda}\log n)$ is also an upper bound on the mixing time of the graph having $\lambda$ 
as its second largest eigenvalue and hence the dynamic mixing time is upper bounded by the worst-case mixing time of any graph in $\mathcal{G}$, which will be (henceforth) denoted by  $\tau$.  
Since the second eigenvalue of the transition matrix of any regular graph is bounded by $1-1/n^2$ (cf. Corollary~\ref{cor:second-eigen-bound}), this implies that $\tau$ of a $d$-regular evolving graph is bounded by $\tilde{O}(n^2)$ (cf. Section \ref{mixing_time}). 
In general, the dynamic mixing time can be significantly smaller than this bound, e.g., when all graphs in $\mathcal{G}$ have $\lambda$ bounded from above by a constant (i.e.,  they are expanders ---  such dynamic graphs occur in applications e.g., \cite{JGPE:soda12,Kuhn-stoc}), the dynamic mixing time is $O(\log n)$. 

Another parameter affecting the efficiency of distributed computation in a dynamic graph is its dynamic diameter (also called flooding time, e.g., see~\cite{BCF09, CMMPS08}).  
The {\em dynamic diameter} (denoted by $\Phi$) of an $n$-node dynamic graph $\mathcal{G}$ is the worst-case time (number of rounds) required to broadcast a piece
of information from any given node to all $n$-nodes. 
The dynamic diameter  can be much larger than the diameter ($D$) of any (individual) graph $G_t$.

\section{Mixing Time of Regular Dynamic Graph} \label{mixing_time}
%\subsection{Definition of (dynamic) mixing time}\label{formal-def-mixing-time}
We have discussed the notion of random walk, probability distribution of a random walk , mixing time etc. on (dynamic) graph in Section~\ref{sec:rwd} above. Here we formally define those notions.  
\begin{definition} [Distribution vector]
Let $\pi_x(t)$ define the probability distribution vector reached after $t$ steps when the initial distribution starts with probability $1$ at node $x$. Let $\pi$ denote the stationary distribution vector.
\end{definition}

\begin{definition}\label{def:mix-static} [$\tau^x(\epsilon)$ ($\epsilon$-near mixing time for source $x$), $\tau^x_{mix}$ (mixing time for source $x$) and $\tau_{mix}$ (mixing time)]
Define $\tau^x(\epsilon) = \min t : ||\pi_x(t) - \pi|| < \epsilon$. Define $\tau^x_{mix} = \tau^x(1/2e)$. Define $\tau_{mix} = \max_{x} \tau^x_{mix}$. ($\epsilon$ is a small constant). 
\end{definition}

We define the {\em dynamic mixing time} of a $d$-regular evolving graph $\mathcal{G} = G_1, G_2, \ldots$ as the maximum time taken for a simple random walk starting from any node to reach {\em close to} the uniform distribution on the vertex set. Therefore the definition of dynamic mixing time  is similar to the static case.   Let $\tau$ be the maximum mixing time of any (individual) graph $G_t$ in $\mathcal{G}$. (Note that  $\tau$ is bounded by $\tilde{O}(n^2)$ --- follows from Theorem \ref{thm:mixtime} and Corollary \ref{cor:second-eigen-bound}).  We show that dynamic mixing time is well defined due to Theorem \ref{thm:mixtime} and monotonicity property of distribution (cf. Section \ref{sec:monotonicity}).   

\begin{definition}\label{def:mix-dynamic}[Dynamic mixing time]
Define $\tau^x(\epsilon)$ ($\epsilon$-near mixing time for source $x$) is $\tau^x(\epsilon) = \min t : ||\pi_x(t) - \pi|| < \epsilon$. Note that $\pi_x(t)$ is the probability distribution on the graph $G_t$ in the dynamic graph process $\{ G_t : t\geq 1\}$ when the initial distribution ($\pi_x(1)$) starts with probability $1$ at node $x$ on $G_1$. Define $\tau^x_{mix}$ (mixing time for source $x$) $ = \tau^x(1/2e)$ and $\tau_{mix} = \max_{x} \tau^x_{mix}$. The dynamic mixing time is upper bounded by   $\tau = \max \{$mixing time of all the static graph $G_t : t \geq 1\}$. Notice that $\tau \geq \tau_{mix}$ in general. 
\end{definition} 

%\subsection{Mixing time}
It is known that simple random walks on regular, connected, non-bipartite static graph have mixing time $O(\frac{\log n}{1 - \lambda_2})$, where $\lambda_2$ is the second largest eigenvalue in absolute value of the graph. Interestingly, it turns out that similar result holds for $d$-regular, connected, non-bipartite evolving graphs. We show that the mixing time of a simple random walk on a dynamic graph $\mathcal{G} = G_1, G_2, \ldots$ is $O(\frac{\log n}{1 - \lambda})$, where $\lambda$ is an upper bound of the second largest eigenvalue in absolute value of the graphs $\{G_t : t \geq 1\}$. 
\iffalse
\begin{lemma}\label{lem:lemma1}
Let $G$ be an undirected connected non-bipartite $d$-regular graph on $n$ vertices. Let $A_G$ be the transition matrix of $G$. If $\lambda_1 \geq \lambda_2 \geq \ldots \geq \lambda_n$ are the eigenvalues of $A_G$, then $\lambda_1 = 1$ and for $i \geq 2, \lambda_i \leq 1 - \frac{1}{d D n}$, where $D$ is the diameter of the graph.  
\end{lemma}
\begin{proof}
The normalized eigenvector corresponding to $\lambda_1 = 1$ is $X^1 = \frac{1}{\sqrt{n}}(1, 1, \ldots , 1)$. Consider any normalized real eigenvector $X \perp X^1$ with its eigenvalue $\lambda$. Hence, $\sum_{i=1}^n x_{i}^2 = 1$ and $ \sum_{i=1}^n x_i = 0$, where $X = (x_1, x_2, \ldots ,x_n)$. Let $x_t$ and $x_s$ be the respectively largest and smallest co-ordinates of $X$. Clearly $ x_t \geq 1/\sqrt{n}$ and $x_s < 0$. Let $s=v_1 \to v_2 \to \ldots \to v_k =t$ be the vertices on a shortest path from $s$ to $t$ in $G$. Consider $X(\mathbb{I} - A_G)X^T$. Then,
\begin{align*}
1 - \lambda & = X(\mathbb{I} - A_G)X^T = \frac{1}{d} \cdot \sum_{\{i, j\} \in E(G)} (x_i - x_j)^2 \\
& \geq \frac{1}{d} \cdot \sum_{i=1}^{k-1} (x_{v_i} - x_{v_{i+1}})^2 \\
& \geq \frac{1}{d(k - 1)} \cdot \left( \sum_{i = 1}^{k-1} x_{v_i} - x_{v_{i+1}} \right) \\
& [\text{by Cauchy-Schwarz inequality}] \\
& = \frac{1}{d (k-1)} \cdot (x_s - x_t)^2 \\
& \geq \frac{1}{d D n}
\end{align*}
where $k-1 \leq D$, the diameter of the graph.
\end{proof}
\fi

\begin{lemma}\label{lem:lemma2}
Let $G$ be an undirected connected non-bipartite $d$-regular graph on $n$ vertices and $p=(p_1, \ldots ,p_n)$ be any probability distribution on its vertices. Let $A_G$ be the transition matrix of a simple random walk on $G$. Then,
$$ \bigl\Vert pA_G - \frac{\textbf{1}}{n} \bigr\Vert \leq \bar{\lambda_2}\cdot \bigl\Vert p - \frac{\textbf{1}}{n} \bigr\Vert $$ where $\bar{\lambda_2} = \max_{i=2,\ldots,n} \lvert \lambda_i \rvert = \max \{\lambda_2, - \lambda_n\}$ be the second largest eigenvalue in absolute value.
\end{lemma}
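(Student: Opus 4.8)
The plan is to reduce the statement to a routine spectral computation. The key structural fact to invoke first is that, because $G$ is $d$-regular, the transition matrix $A_G = \frac{1}{d}M$ (with $M$ the adjacency matrix) is a \emph{real symmetric} doubly stochastic matrix. By the spectral theorem it therefore admits an orthonormal basis of real eigenvectors $X^1, X^2, \ldots, X^n$, which I will regard as \emph{row} vectors, with eigenvalues $1 = \lambda_1 \ge \lambda_2 \ge \cdots \ge \lambda_n \ge -1$ and $X^1 = \textbf{1}/\sqrt{n}$; symmetry gives the left-eigenvector relation $X^i A_G = \lambda_i X^i$ for every $i$, and double stochasticity gives in particular $(\textbf{1}/n) A_G = \textbf{1}/n$.

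Next I would set $q := p - \textbf{1}/n$ and observe that $q$ is orthogonal to $X^1$, since $q \cdot \textbf{1} = \sum_i p_i - 1 = 0$. Hence $q$ lies in the span of $X^2, \ldots, X^n$, say $q = \sum_{i=2}^n c_i X^i$, and orthonormality gives $\|q\|^2 = \sum_{i=2}^n c_i^2$. Using $(\textbf{1}/n)A_G = \textbf{1}/n$ I can rewrite
\[ p A_G - \frac{\textbf{1}}{n} \;=\; \Bigl(p - \frac{\textbf{1}}{n}\Bigr) A_G \;=\; q A_G \;=\; \sum_{i=2}^n c_i \lambda_i X^i , \]
and then, again by orthonormality,
\[ \Bigl\Vert p A_G - \frac{\textbf{1}}{n} \Bigr\Vert^2 \;=\; \sum_{i=2}^n c_i^2 \lambda_i^2 \;\le\; \bar{\lambda_2}^2 \sum_{i=2}^n c_i^2 \;=\; \bar{\lambda_2}^2 \,\Bigl\Vert p - \frac{\textbf{1}}{n} \Bigr\Vert^2 , \]
where the inequality just uses $|\lambda_i| \le \bar{\lambda_2}$ for all $i \ge 2$. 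Taking square roots yields the claimed bound.

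I do not expect a genuine obstacle here; the only points that need care are (i) the appeal to symmetry of $A_G$, which is precisely where $d$-regularity is used --- for a non-regular graph the transition matrix is only self-adjoint with respect to the degree-weighted inner product, so this step (and much of the later treatment) would have to be redone in that inner product; and (ii) keeping track of the fact that $p$ acts on $A_G$ from the left, so the relevant relation is $X^i A_G = \lambda_i X^i$ rather than $A_G X^i = \lambda_i X^i$. Note that connectedness and non-bipartiteness play no role in this particular inequality --- they are only needed elsewhere to guarantee $\bar{\lambda_2} < 1$, which is what makes repeated application of the lemma actually drive the distribution to uniform.
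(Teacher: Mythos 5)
Your argument is correct and is essentially the same as the paper's: both expand $p - \frac{\textbf{1}}{n}$ in an orthonormal eigenbasis of the symmetric matrix $A_G$, note that the component along $X_1 = \textbf{1}/\sqrt{n}$ vanishes, apply $A_G$ coordinatewise, and bound the result by $\bar{\lambda_2}$ times the norm. The only cosmetic difference is that the paper expands $p$ itself and computes $\beta_1 = 1/\sqrt{n}$ explicitly, whereas you expand $q = p - \textbf{1}/n$ directly; the substance is identical.
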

\begin{proof}
Let $X_1, X_2, \ldots ,X_n$ be an orthonormal set of eigenvectors of $A_G$ with corresponding eigenvalues $\lambda_1 \geq \lambda_2 \geq \ldots \geq \lambda_n$. Since $A_G$ is symmetric stochastic matrix, $\lambda_1 = 1$ and $X_1 = \frac{\textbf{1}}{\sqrt{n}}$ and all eigenvectors and eigenvalues are real. Clearly, 
$$ \bigl\Vert pA_G - \frac{\textbf{1}}{n} \bigr\Vert = \bigl\Vert pA_G - \frac{\textbf{1}}{n}A_G \bigr\Vert = \bigl\Vert (p - \frac{\textbf{1}}{n}) A_G \bigr\Vert $$
Since $p$ is a probability distribution, we can write it as $p = \beta_1X_1 + \beta_2X_2 + \ldots + \beta_n X_n$, where $\beta_1, \beta_2, \ldots ,\beta_n \in \mathbb{R}$. Then $\beta_1 = p\cdot X_1^T = \frac{1}{\sqrt{n}} \sum_{i}p_i = \frac{1}{\sqrt{n}}$, so that $\beta_1 X_1 = (\frac{1}{n}, \frac{1}{n}, \ldots ,\frac{1}{n}) $. Therefore, $p - \frac{\textbf{1}}{n} = \sum_{i=2}^n \beta_i X_i $. Hence, $$ \bigl\Vert p - \frac{\textbf{1}}{n} \bigr\Vert = \sqrt{\sum_{i=2}^n \beta_i^2} $$
Furthermore, 
\begin{align*} \bigl\Vert (p - \frac{\textbf{1}}{n}) A_G \bigr\Vert & = \bigl\Vert \sum_{i=2}^n \beta_i X_i A_G \bigr\Vert \\
& = \bigl\Vert \sum_{i=2}^n \lambda_i \beta_i X_i \bigr\Vert \\ 
& = \sqrt{\sum_{i=2}^n \lambda_i^2 \beta_i^2} \\
& \leq \max_{i=2,\ldots,n} \vert \lambda_i \vert \cdot \sqrt{\sum_{i=2}^n \beta_i^2} \\
& = \bar{\lambda_2}\cdot \bigl\Vert p - \frac{\textbf{1}}{n} \bigr\Vert 
%& \leq \left(1 - \frac{1}{d D n} \right) \sqrt{\sum_{i=2}^n \beta_i^2} && [\text{By the Lemma~\ref{lem:lemma1}]} \\
%& = \left(1 - \frac{1}{d D n} \right) \left\Vert p - \frac{\textbf{1}}{n} \right\Vert 
\end{align*}
Thus, $$\bigl\Vert pA_G - \frac{\textbf{1}}{n} \bigr\Vert \leq \bar{\lambda_2}\cdot \bigl\Vert p - \frac{\textbf{1}}{n} \bigr\Vert  $$
\end{proof}

%Therefore the above lemma becomes: 
%\begin{lemma}\label{lem:lemma3}
%Let $G$ be an undirected connected $d$-regular graph on $n$ vertices and $p=(p_1, \ldots ,p_n)$ be a probability distribution on its vertices. Let $A_G$ be the transition matrix of a simple random walk on $G$. Then,
%$$ \left\Vert pA_G - \frac{\textbf{1}}{n} \right\Vert \leq \left( 1 - \frac{1}{n^2} \right) \left\Vert p - \frac{\textbf{1}}{n} \right\Vert $$
%\end{lemma}

An immediate corollary follows from the previous lemma: 
\begin{corollary}\label{cor:cor1}
Let $\mathcal{G} = G_1, G_2, \ldots $ be a sequence of undirected connected non-bipartite $d$-regular graphs on the same vertex set $V$. If $p_0$ is the initial probability distribution on $V$ and we perform a simple random walk on $\mathcal{G}$ starting from $p_0$, then the probability distribution $p_t$ of the walk after $t$ steps satisfies, 
$$\bigl\Vert p_t - \frac{\textbf{1}}{n} \bigr\Vert \leq {\lambda}^t \bigl\Vert p_0 - \frac{\textbf{1}}{n} \bigr\Vert $$ where $\lambda$ is an upper bound on the second largest eigenvalue in absolute value of the graphs $\{G_t : t \geq 1\}$. 
\end{corollary}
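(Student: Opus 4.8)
The plan is to prove the statement by a straightforward induction on the number of steps $t$, using Lemma~\ref{lem:lemma2} as the one-step contraction estimate. The key structural fact that makes this work is that every graph $G_t$ in the sequence is $d$-regular, so the uniform distribution $\frac{\textbf{1}}{n}$ is stationary for each individual transition matrix $A_{G_t}$; hence the ``target'' vector $\frac{\textbf{1}}{n}$ does not change as we pass from one graph in the sequence to the next, and the errors composed at each step all measure deviation from the same fixed vector.

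First I would fix notation: write $p_t = p_{t-1} A_{G_t}$ for $t \ge 1$ (the walk takes its $t$-th step along the edges of $G_t$), so that $p_t = p_0 A_{G_1} A_{G_2} \cdots A_{G_t}$. The base case $t = 0$ is immediate since $\lambda^0 = 1$. For the inductive step, assume $\bigl\Vert p_{t-1} - \frac{\textbf{1}}{n} \bigr\Vert \le \lambda^{t-1} \bigl\Vert p_0 - \frac{\textbf{1}}{n} \bigr\Vert$. Apply Lemma~\ref{lem:lemma2} with $G = G_t$ and probability distribution $p = p_{t-1}$: this gives $\bigl\Vert p_{t-1} A_{G_t} - \frac{\textbf{1}}{n} \bigr\Vert \le \bar{\lambda_2}(G_t) \cdot \bigl\Vert p_{t-1} - \frac{\textbf{1}}{n} \bigr\Vert$, where $\bar{\lambda_2}(G_t)$ is the second-largest-in-absolute-value eigenvalue of $A_{G_t}$. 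Since $\lambda$ is by hypothesis an upper bound on $\bar{\lambda_2}(G_t)$ for every $t \ge 1$, we get $\bigl\Vert p_t - \frac{\textbf{1}}{n} \bigr\Vert \le \lambda \cdot \bigl\Vert p_{t-1} - \frac{\textbf{1}}{n} \bigr\Vert \le \lambda^t \bigl\Vert p_0 - \frac{\textbf{1}}{n} \bigr\Vert$, completing the induction.

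Honestly, there is no serious obstacle here — the result is an immediate consequence of Lemma~\ref{lem:lemma2} iterated. The only point that requires a moment's care is making sure that Lemma~\ref{lem:lemma2} applies at each step: it requires $p_{t-1}$ to be a genuine probability distribution (true, since $A_{G_t}$ is stochastic and $p_0$ is a distribution) and it requires $G_t$ to be connected, non-bipartite, and $d$-regular (true by the standing assumption on the evolving graph $\mathcal{G}$). One should also note that the quantity $\bar{\lambda_2}(G_t)$ genuinely depends on $t$, which is why the hypothesis that $\lambda$ is a \emph{uniform} upper bound over the whole sequence is exactly what is needed to pull the bound out of the product. With these observations in place, the displayed inequality follows.
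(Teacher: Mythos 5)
Your proof is correct and matches the paper's intended argument: the paper presents this corollary as an immediate consequence of Lemma~\ref{lem:lemma2}, obtained precisely by iterating the one-step contraction over the sequence $G_1, G_2, \ldots$, which is exactly the induction you carry out. Your added remarks on why the uniform vector $\frac{\textbf{1}}{n}$ is the common stationary target at every step and why a uniform bound $\lambda$ is needed are the right points of care, and nothing further is required.
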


\begin{theorem}\label{thm:mixtime}
For any $d$-regular connected non-bipartite evolving graph $\mathcal{G}$, the dynamic mixing time of a simple random walk on $\mathcal{G}$ is bounded by $O(\frac{1}{1- {\lambda}}\log n)$, where $\lambda$ is an upper bound of the second largest eigenvalue in absolute value of any graph in $\mathcal{G}$.
\end{theorem}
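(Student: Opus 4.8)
The plan is to read the bound off directly from Corollary~\ref{cor:cor1} applied to a point-mass start. First I would fix an arbitrary source $x$ and set $p_0=\mathbf{e}_x$, the distribution concentrated at $x$. An elementary computation gives $\bigl\Vert p_0-\tfrac{\mathbf{1}}{n}\bigr\Vert^2=(1-\tfrac1n)^2+(n-1)\tfrac1{n^2}=1-\tfrac1n$, so $\bigl\Vert p_0-\tfrac{\mathbf{1}}{n}\bigr\Vert\le 1$ in the Euclidean norm used throughout Section~\ref{mixing_time}. By Corollary~\ref{cor:cor1}, the distribution $p_t$ of the dynamic walk after $t$ steps then satisfies $\bigl\Vert p_t-\tfrac{\mathbf{1}}{n}\bigr\Vert\le\lambda^t$.

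It then remains to pick $t$ large enough to push $\lambda^t$ below the mixing threshold. I would use the elementary inequality $\ln(1/\lambda)=-\ln\!\bigl(1-(1-\lambda)\bigr)\ge 1-\lambda$, valid since $\lambda<1$ for a connected non-bipartite $d$-regular graph (cf.\ Corollary~\ref{cor:second-eigen-bound}), to get $\lambda^t\le e^{-(1-\lambda)t}$. Hence choosing $t=\Theta\!\bigl(\tfrac{\log n}{1-\lambda}\bigr)$ makes $\lambda^t\le n^{-c}$ for any desired constant $c$; in particular $\lambda^t<\tfrac{1}{2e}$, and moreover $\lambda^t<\epsilon/\sqrt n$, so via $\Vert v\Vert_1\le\sqrt n\,\Vert v\Vert_2$ even the total-variation distance of $p_t$ from uniform is below any prescribed constant $\epsilon$. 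Thus $\tau^x_{mix}=\tau^x(1/2e)=O\!\bigl(\tfrac{\log n}{1-\lambda}\bigr)$, and since $\lambda^t$ is decreasing in $t$ the walk never leaves this ball afterwards, so this is a genuine mixing-time bound and not merely a first-hitting estimate. Taking the maximum over all sources $x$ yields $\tau_{mix}=O\!\bigl(\tfrac{\log n}{1-\lambda}\bigr)$, which bounds the dynamic mixing time (recall $\tau_{mix}\le\tau$).

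I do not expect a serious obstacle: Corollary~\ref{cor:cor1} already packages the spectral contraction for the whole graph sequence $\mathcal{G}$, so what remains is the standard ``geometric decay $\Rightarrow$ logarithmic mixing'' argument. The only points that need care are (i) computing the initial distance $\bigl\Vert p_0-\tfrac{\mathbf{1}}{n}\bigr\Vert$ correctly for the point mass, and (ii) passing from $1/\ln(1/\lambda)$ to $1/(1-\lambda)$ via the inequality above. Note that with the constant-$\epsilon$ Euclidean definition one in fact obtains the stronger bound $O\!\bigl(\tfrac{1}{1-\lambda}\bigr)$; the stated $\log n$ factor is slack that conveniently accommodates the small-error (e.g.\ $\ell_1$ or pointwise) notions of mixing invoked elsewhere in the paper.
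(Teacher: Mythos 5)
Your proposal is correct and follows essentially the same route as the paper's own proof: apply Corollary~\ref{cor:cor1} to the point-mass start, use $\bigl\Vert p_0-\frac{\mathbf{1}}{n}\bigr\Vert = O(1)$, and choose $t=O\bigl(\frac{\log n}{1-\lambda}\bigr)$ so that $\lambda^t$ is polynomially small. You merely spell out details the paper leaves implicit (the exact initial distance, the passage from $\ln(1/\lambda)$ to $1-\lambda$, the $\ell_1$ conversion, and monotonicity), which is fine.
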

\begin{proof}
Let the random walk starts from a given vertex with distribution $p_0 = (1, 0, \ldots, 0)$. From Corollary~\ref{cor:cor1} and the fact that $\bigl\Vert p_0 - \frac{\textbf{1}}{n} \bigr\Vert = O(1)$ we have,
$$\bigl\Vert p_t - \frac{\textbf{1}}{n} \bigr\Vert \leq {\lambda}^t $$
So for $t = O(\frac{1}{1 - {\lambda}} \log n)$ gives $\bigl\Vert p_t - \frac{\textbf{1}}{n} \bigr\Vert \leq \frac{1}{n^{O(1)}}$.  
\end{proof}

\begin{lemma}\label{lem:eigen-bound}
Let $G$ be an undirected connected $d$-regular graph on $n$ vertices. Let $A_G$ be the transition matrix of $G$. If $\lambda_1 \geq \lambda_2 \geq \ldots \geq \lambda_n$ are the eigenvalues of $A_G$, then $\lambda_1 = 1$ and for $i \geq 2, \lambda_i \leq 1 - \frac{1}{d D n}$, where $D$ is the diameter of the graph.  
\end{lemma}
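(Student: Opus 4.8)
The plan is to follow the standard Rayleigh-quotient (Dirichlet form) argument, mirroring the approach used for the analogous static bound. First I would observe that $A_G = \frac{1}{d}M$, where $M$ is the adjacency matrix of $G$, so $A_G$ is a real symmetric doubly-stochastic matrix; hence all its eigenvalues are real, lie in $[-1,1]$, and the all-ones vector is an eigenvector of eigenvalue $1$. This gives $\lambda_1 = 1$ with normalized eigenvector $X^1 = \frac{1}{\sqrt n}(1,\dots,1)$, and since $\lambda_i \le \lambda_2$ for every $i \ge 2$ it suffices to upper bound $\lambda_i$ for each eigenvector orthogonal to $X^1$.

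Next, fix any eigenvalue $\lambda = \lambda_i$ with $i \ge 2$ and a corresponding unit eigenvector $X = (x_1,\dots,x_n) \perp X^1$, so $\sum_i x_i^2 = 1$ and $\sum_i x_i = 0$. I would then evaluate the quadratic form $X(\mathbb{I} - A_G)X^{T}$ in two ways: on one hand it equals $1 - \lambda$ (since $X$ is a unit eigenvector), and on the other hand, from the incidence structure of $G$, it equals $\frac{1}{d}\sum_{\{i,j\}\in E(G)}(x_i - x_j)^2$. Because $\sum_i x_i = 0$ and $\sum_i x_i^2 = 1$, the largest coordinate $x_t$ satisfies $x_t \ge 1/\sqrt n$ and the smallest coordinate $x_s$ satisfies $x_s < 0$, so $x_t - x_s > 1/\sqrt n$, i.e. $(x_s - x_t)^2 > 1/n$.

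Finally, using connectedness I would pick a shortest path $s = v_1 \to v_2 \to \cdots \to v_k = t$ in $G$, which has $k - 1 \le D$ edges, keep only the terms of the edge sum corresponding to path edges (legitimate since each summand is nonnegative), and apply Cauchy--Schwarz to the telescoping sum: $\sum_{\ell=1}^{k-1}(x_{v_\ell} - x_{v_{\ell+1}})^2 \ge \frac{1}{k-1}\bigl(\sum_{\ell=1}^{k-1}(x_{v_\ell} - x_{v_{\ell+1}})\bigr)^2 = \frac{(x_s - x_t)^2}{k-1}$. Combining the pieces yields $1 - \lambda \ge \frac{(x_s - x_t)^2}{d(k-1)} > \frac{1}{dDn}$, which rearranges to $\lambda \le 1 - \frac{1}{dDn}$, as claimed. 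None of the steps is a serious obstacle; the only points needing care are verifying that the Dirichlet-form identity counts each undirected edge exactly once, and the Cauchy--Schwarz/telescoping step, which is the crux that converts a sum of squared edge differences into a single squared endpoint difference.
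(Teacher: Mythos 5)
Your proposal is correct and takes essentially the same route as the paper's own proof: the Rayleigh quotient $X(\mathbb{I}-A_G)X^{T}=1-\lambda=\frac{1}{d}\sum_{\{i,j\}\in E}(x_i-x_j)^2$, the bounds $x_t\geq 1/\sqrt{n}$, $x_s<0$ from orthogonality to the all-ones vector, restriction to a shortest path of at most $D$ edges, and Cauchy--Schwarz on the telescoping sum. The only difference is cosmetic: you state the Cauchy--Schwarz step with the square on the telescoped sum explicitly (the paper's display drops it, evidently a typo), so no gap remains.
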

\begin{proof}
The normalized eigenvector corresponding to $\lambda_1 = 1$ is $X_1 = \frac{1}{\sqrt{n}}(1, 1, \ldots , 1)$. Consider any normalized real eigenvector $X \perp X_1$ with its eigenvalue $\lambda$. Hence, $\sum_{i=1}^n x_{i}^2 = 1$ and $ \sum_{i=1}^n x_i = 0$, where $X = (x_1, x_2, \ldots ,x_n)$. Let $x_t$ and $x_s$ be the respectively largest and smallest co-ordinates of $X$. Clearly $ x_t \geq 1/\sqrt{n}$ and $x_s < 0$. Let $s=v_1 \to v_2 \to \ldots \to v_k =t$ be the vertices on a shortest path from $s$ to $t$ in $G$. Consider $X(\mathbb{I} - A_G)X^T$. Then,
\begin{align*}
1 - \lambda & = X(\mathbb{I} - A_G)X^T = \frac{1}{d} \cdot \sum_{\{i, j\} \in E(G)} (x_i - x_j)^2 \\
& \geq \frac{1}{d} \cdot \sum_{i=1}^{k-1} (x_{v_i} - x_{v_{i+1}})^2 \\
& \geq \frac{1}{d(k - 1)} \cdot \left( \sum_{i = 1}^{k-1} x_{v_i} - x_{v_{i+1}} \right) && [\text{by Cauchy-Schwarz inequality}] \\
& = \frac{1}{d (k-1)} \cdot (x_s - x_t)^2 \\
& \geq \frac{1}{d D n}
\end{align*}
where $k-1 \leq D$, the diameter of the graph.
\end{proof}

\begin{corollary}\label{cor:second-eigen-bound}
$(1- \frac{1}{n^2})$ is an upper bound of the second largest eigenvalue $\bar{\lambda_2}$ of the transition matrix of any undirected connected regular graph on $n$-vertices. 
\end{corollary}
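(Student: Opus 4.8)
The plan is to apply Lemma~\ref{lem:eigen-bound} and bound the quantity $dDn$ uniformly over all connected $d$-regular graphs on $n$ vertices. From that lemma we already have $\bar{\lambda_2} \le \lambda_2 \le 1 - \frac{1}{dDn}$ for the given graph, where $d$ is its degree and $D$ its diameter; hence it suffices to show $dDn \le n^2$, i.e. $dD \le n$. First I would note the trivial bound $d \le n-1$ is too weak on its own, so instead I would use the interplay between degree and diameter: in a connected $d$-regular graph a BFS layering from any vertex adds at least one new vertex per layer, so $D \le n-1$, and more usefully, each of the first $D$ layers past the root is nonempty, giving $D \le n - 1$. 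Combined with $d \le n-1$ this only yields $dD \le (n-1)^2$, which is not quite $n$; so the argument needs the sharper observation that the ball of radius $1$ already contains $d+1$ vertices, and each subsequent layer contributes at least one more, so $n \ge (d+1) + (D-1) = d + D$. This is still additive, not multiplicative.

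The correct route, I believe, is to avoid Lemma~\ref{lem:eigen-bound} as stated and instead directly sharpen the Cauchy--Schwarz computation in its proof. Repeating that argument but being careful, one gets $1 - \lambda \ge \frac{1}{d(k-1)}(x_s - x_t)^2$ where $k-1 \le D$ and $(x_s - x_t)^2 \ge x_t^2 \ge 1/n$; the key is to replace the crude bound $d(k-1) \le dD$ by something of size at most $n^2$. Here I would use that the shortest path $v_1, \dots, v_k$ has $k$ distinct vertices so $k \le n$, hence $k - 1 \le n - 1 < n$, and separately $d < n$, giving $d(k-1) < n^2$ and therefore $1 - \lambda > \frac{1}{n^2}$, i.e. $\lambda < 1 - \frac{1}{n^2}$. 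Applying this to every eigenvector orthogonal to $X_1$ (both the one realizing $\lambda_2$ and the one realizing $-\lambda_n$, noting $1 - \lambda_i \le 2$ automatically handles the negative eigenvalues since $\lambda_i \ge -1 > 1 - \frac{1}{n^2}$ fails only if\ldots) — more simply, $\bar\lambda_2 = \max\{\lambda_2, -\lambda_n\}$, and $\lambda_2 \le 1 - \frac1{n^2}$ by the above while $-\lambda_n \le 1 \le 1 - \frac{1}{n^2}$ is false, so one must observe $\lambda_n \ge -1$ and in the non-bipartite case $\lambda_n > -1$; but the corollary as stated does not assume non-bipartite, so one should interpret $\bar\lambda_2$ as $\lambda_2$ or simply note $-\lambda_n \le 1$ trivially is weaker than needed — I would restrict attention to $\lambda_2$ and rely on the fact that the corollary is invoked (in Section~\ref{sec:rwd}) together with the non-bipartite assumption.

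So the key steps, in order: (i) take any connected $d$-regular graph $G$ on $n$ vertices and any unit eigenvector $X \perp X_1$ with eigenvalue $\lambda$; (ii) reproduce the Dirichlet-form computation $1 - \lambda = \frac1d \sum_{\{i,j\}\in E}(x_i-x_j)^2$, restrict the sum to a shortest path between the coordinates $s,t$ of minimum and maximum value, and apply Cauchy--Schwarz to get $1-\lambda \ge \frac{(x_s-x_t)^2}{d(k-1)}$; (iii) bound $(x_s - x_t)^2 \ge x_t^2 \ge 1/n$ and $d(k-1) < n\cdot n = n^2$ using $d \le n-1$ and $k \le n$; (iv) conclude $\lambda < 1 - 1/n^2$, hence $\lambda_2 \le 1 - 1/n^2$, which is the claimed bound on $\bar\lambda_2$. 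The main obstacle is the bookkeeping around $\bar\lambda_2 = \max\{\lambda_2, -\lambda_n\}$: the bound $\lambda_2 \le 1 - 1/n^2$ comes out cleanly, but controlling $-\lambda_n$ genuinely requires non-bipartiteness (for a bipartite graph $\lambda_n = -1$ and the statement is false), so the cleanest fix is to state and use the corollary for $\lambda_2$ only, or add the non-bipartite hypothesis that is in any case present wherever the corollary is applied.
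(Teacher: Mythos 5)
Your final step does not deliver the claimed bound. In step (iii) you bound $(x_s-x_t)^2 \ge 1/n$ and $d(k-1) < n^2$, but plugging these into $1-\lambda \ge \frac{(x_s-x_t)^2}{d(k-1)}$ gives only $1-\lambda \ge \frac{1}{n^3}$, i.e.\ $\lambda \le 1-\frac{1}{n^3}$, not $\lambda \le 1-\frac{1}{n^2}$; the sentence ``giving $d(k-1)<n^2$ and therefore $1-\lambda > \frac{1}{n^2}$'' silently drops the $1/n$ factor coming from $(x_s-x_t)^2$. The multiplicative saving you were looking for in your first paragraph (something like $dD \lesssim n$) is exactly the missing ingredient, and it is true: in a connected $d$-regular graph any three consecutive BFS layers contain at least $d+1$ vertices (the closed neighborhood of a vertex in the middle layer lies inside them), so $D = O(n/d)$ and hence $dDn = O(n^2)$. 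This is precisely how the paper proves the corollary: it cites Lemma~\ref{lem:eigen-bound} ($\lambda_i \le 1-\frac{1}{dDn}$ for $i\ge 2$) together with the fact that the diameter of a connected regular graph is $O(n/d)$. Without this degree--diameter tradeoff your route only certifies a spectral gap of order $1/n^3$, which would weaken the downstream bound $\tau = \tilde O(n^2)$ to $\tilde O(n^3)$. (The paper is itself cavalier about the constant hidden in $O(n/d)$, but the $\Theta(1/n^2)$ order of the gap is what matters.)

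Your side remark about $\bar\lambda_2=\max\{\lambda_2,-\lambda_n\}$ is a fair observation: Lemma~\ref{lem:eigen-bound} only controls the eigenvalues from above, so the corollary as literally stated needs non-bipartiteness (or laziness) to control $-\lambda_n$, and the paper's own one-line proof shares this looseness; restricting to $\lambda_2$, or invoking the standing non-bipartite assumption, is the right reading. But this does not repair the main issue: you still need $D=O(n/d)$ (or an equivalent sharpening of the Dirichlet-form estimate) to reach $1-\frac{1}{n^2}$.
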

\begin{proof}
This follows from the Lemma~\ref{lem:eigen-bound} and the fact that the diameter of any connected regular graph is bounded by $O(\frac{n}{d})$. 
\end{proof}

\subsection{Monotonicity property of the distribution vector}\label{sec:monotonicity}
Let $\pi_x(t)$ define the probability distribution vector of a simple random walk reached after $t$ steps when the initial distribution starts with probability $1$ at node $x$. Let $\pi$ denote the stationary distribution vector. We show in the following lemma that the vector $\pi_x(t)$ gets closer to $\pi$ as $t$ increases. 
\begin{lemma} %[\textbf{Monotonicity}]
\label{lem:monotonicity}
$||\pi_x(t+1) - \pi|| \leq  ||\pi_x(t) - \pi||$.
\end{lemma}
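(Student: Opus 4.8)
The plan is to reduce the statement to a single application of Lemma~\ref{lem:lemma2}. Recall that for a $d$-regular evolving graph the stationary distribution is the uniform distribution, so $\pi = \frac{\textbf{1}}{n}$. By the definition of a simple random walk on a dynamic graph, if $A_t$ denotes the transition matrix of the graph used in step $t+1$ (which is one of the graphs in the sequence $\mathcal{G}$), then $\pi_x(t+1) = \pi_x(t)\, A_t$. Since, by our standing assumption, every graph in $\mathcal{G}$ is undirected, connected, non-bipartite and $d$-regular, the matrix $A_t$ satisfies the hypotheses of Lemma~\ref{lem:lemma2}.

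Applying that lemma with $p = \pi_x(t)$ and $A_G = A_t$ gives
$$ \bigl\Vert \pi_x(t+1) - \pi \bigr\Vert = \bigl\Vert \pi_x(t) A_t - \frac{\textbf{1}}{n} \bigr\Vert \leq \bar{\lambda_2}(A_t) \cdot \bigl\Vert \pi_x(t) - \frac{\textbf{1}}{n} \bigr\Vert, $$
where $\bar{\lambda_2}(A_t)$ is the second largest eigenvalue in absolute value of $A_t$. It then suffices to observe that $\bar{\lambda_2}(A_t) \leq 1$: since $A_t$ is a symmetric stochastic matrix, all of its eigenvalues are real and lie in $[-1,1]$, so $\bar{\lambda_2}(A_t) = \max\{\lambda_2, -\lambda_n\} \leq 1$. (In fact, connectivity and non-bipartiteness make this inequality strict, but $\leq 1$ is all that is needed.) Combining the two inequalities yields $\bigl\Vert \pi_x(t+1) - \pi \bigr\Vert \leq \bigl\Vert \pi_x(t) - \pi \bigr\Vert$, as claimed.

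There is essentially no real obstacle here beyond bookkeeping; the one point to be careful about is that ``the graph used in step $t+1$'' is well defined and is indeed one of the $d$-regular graphs of $\mathcal{G}$ (so that Lemma~\ref{lem:lemma2} applies with the correct matrix), and that $\frac{\textbf{1}}{n}$ is simultaneously stationary for \emph{every} graph in the process --- both of which are immediate from the $d$-regular evolving graph model fixed in Section~\ref{sec:model}. I would also remark that iterating this argument re-proves Corollary~\ref{cor:cor1}, and that this monotonicity is precisely what makes Definition~\ref{def:mix-dynamic} meaningful: once $\Vert \pi_x(t) - \pi \Vert$ falls below the mixing threshold, it remains below it in all subsequent rounds.
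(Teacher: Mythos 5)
Your proof is correct and follows essentially the same route as the paper's: both reduce the claim to a single application of Lemma~\ref{lem:lemma2} (together with $\bar{\lambda_2} \leq 1$) to the one-step transition matrix of the current round's graph. Your version is, if anything, slightly cleaner in making explicit that the matrix $A_t$ changes from round to round, whereas the paper writes the argument with a single matrix $A$ while noting the same caveat in words.
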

\begin{proof}
We need to show that the definition of mixing times are consistent, i.e. monotonic in $t$, the walk length of the random walk. Let $A$ be the transition matrix of a simple random walk on a $d$-regular evolving graph $\mathcal{G}$ which in fact changes from round to round. The entries $a_{ij}$ of $A$ denotes the probability of transitioning from node $i$ to node $j$. The monotonicity follows from the fact that for any transition matrix $A$ of any regular graph and for any probability distribution vector $p$, $$||(p - \frac{\mathbf{1}}{n}) A|| < ||p - \frac{\mathbf{1}}{n}||.$$ This result follows from the above Lemma~\ref{lem:lemma2} and the fact that $\bar{\lambda_2} < 1$.

Let $\pi$ be the stationary distribution of the matrix $A$. Then $\pi = (\frac{1}{n}, \frac{1}{n}, \ldots, \frac{1}{n})$. This implies that if $t$ is $\epsilon$-near mixing time, then $||p A^t - \pi|| \leq \epsilon$, by definition of $\epsilon$-near mixing time. Now consider $||p A^{t+1} - \pi||$. This is equal to $||p A^{t+1} - \pi A||$ since $\pi A = \pi$.  However, this reduces to $||(p A^{t} - \pi) A|| < ||p A^t - \pi|| \leq \epsilon$. It follows that $(t+1)$ is $\epsilon$-near mixing time and $||p A^{t+1} - \pi|| < ||p A^t - \pi||$.
\end{proof}

\section{Problem Statements and Our Results}\label{sec:results}
We formally state the problems and our main results.
\paragraph{The Single Random Walk problem.} Given a $d$-regular evolving graph $\mathcal{G} = (V, E_t)$ and a starting node $s \in V$, our goal is to devise a fast distributed  {\em random walk} algorithm such that, at the end, a destination node, sampled from a $\tau$-length walk, outputs the source node's ID (equivalenty, one can require $s$ to output the destination node's ID), where $\tau$ is (an upper bound on) the dynamic mixing time of $\mathcal{G}$ (cf. Section \ref{mixing_time}), under the assumption that $\mathcal{G}$ is modified by an oblivious adversary (cf. Section \ref{sec:model}). Note that this
distribution will be ``close" to the stationary distribution of $\mathcal{G}$ (stationary distribution and $\tau$
are both well-defined  --- cf. Section  \ref{mixing_time}).  
Since we are assuming a $d$-regular evolving graph,  our goal is to sample from (or close to) the uniform distribution (which is the stationary distribution) using as few rounds as possible. Note that we would like to sample fast via  random walk --- this is also very important for
the applications considered in this paper. On the other hand, if one had to simply get a  uniform random sample, it can be accomplished by other means, e.g.,
it is easy to obtain it in $O(\Phi)$ rounds (by using flooding).

For clarity, observe that the following naive algorithm solves the
above problem in $O(\tau)$ rounds: The walk of length $\tau$ is
performed by sending a token for $\tau$ steps, picking a random
neighbor in each step. Then, the destination node $v$ of this walk
outputs the ID of $s$.
Our goal is to perform such sampling with significantly less number
of rounds, i.e., in time that is sublinear in $\tau$, in the CONGEST model, and  using  random walks rather than naive flooding techniques. As mentioned earlier this is needed for the applications discussed in this paper. Our  result is as follows. 
\begin{theorem}\label{thm:maintheorem}
The algorithm {\sc Single-Random-walk} (cf. Algorithm \ref{alg:single-random-walk}) solves the Single Random Walk problem in a dynamic graph and with high probability finishes in $\tilde{O}(\sqrt{\tau \Phi})$ rounds. 
\end{theorem}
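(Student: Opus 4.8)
The plan is to adapt the two-phase ``prepare-and-stitch'' strategy of \cite{DasSarmaNPT10} to the dynamic, oblivious-adversary model. Fix a parameter $L$, to be optimized as $L=\Theta(\sqrt{\tau\Phi})$. In \emph{Phase~1} every node prepares $\eta=\Theta(\log n)$ independent \emph{short} random walks of length $L$, all run in parallel over the first $L$ rounds of the dynamic graph; when a short walk started by $c$ reaches its endpoint $e$, the node $e$ records that it holds the $i$-th short walk of $c$ (the starter $c$ need not learn $e$, only remember that index $i$ is still available). In \emph{Phase~2} we grow a walk from $s$ to length $\ge\tau$ by $\lceil\tau/L\rceil$ \emph{stitches}: a single ``connector'' token sits at the current endpoint of the partial walk, and at each stitch the connector $c$ floods a short message naming one of its still-unused short walks; the node holding that short walk hears the flood, becomes the new connector, and the recorded walk length increases by $L$. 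The final holder of the connector token outputs $s$'s ID.

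\emph{Correctness.} Since the adversary is oblivious, the sequence $\langle G_t\rangle$ may be treated as fixed before any coin toss, so the $\eta$ short walks at a node are genuinely i.i.d.\ samples of the $L$-step transition and are independent of everything revealed before one of them is consumed. Hence, conditioned on the partial walk ending at $c$, attaching a fresh short walk of $c$ extends it exactly by the Markov kernel, so the stitched object is an honest simple random walk of length $\ge\tau$ on the reordered/repeated dynamic graph $\mathcal G'=G_1,\dots,G_L,G_1,\dots,G_L,\dots$ --- \emph{provided no connector ever runs out of unused short walks}. As every $G_t$ is $d$-regular, the uniform vector is stationary for each transition matrix, each second eigenvalue is at most the bound $\lambda$ of $\mathcal G$, and $\tau$ exceeds the worst-case mixing time (Theorem~\ref{thm:mixtime}); so by Corollary~\ref{cor:cor1} the distribution after $\ge\tau$ steps of \emph{any} such regular graphs is within $n^{-\Omega(1)}$ of uniform, and the output is the required near-stationary sample. (Reusing the first $L$ graphs, rather than $G_1,\dots,G_\tau$, is what keeps Phase~1 cheap, and is harmless precisely because mixing over regular graphs does not care \emph{which} regular graphs.)

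\emph{Round complexity.} For Phase~1, since distinct tokens move independently and the uniform vector is stationary for each $d$-regular $G_t$, the expected number of tokens at any node at any step equals $\eta$; a Chernoff bound and a union bound over nodes and over the $\le\mathrm{poly}(n)$ steps give $O(\eta+\log n)$ tokens per node throughout w.h.p., and a further Chernoff bound over a node's $d$ outgoing edges gives $O(\eta/d+\log n)=O(\log n)$ token-ids per edge per round w.h.p. As $B=\Theta(\log^2 n)$ bits ($\Theta(\log n)$ ids) fit on an edge per round, no token is delayed and Phase~1 runs in $\tilde O(L)$ rounds. Each Phase-2 stitch is one flood of an $O(\log n)$-bit control message, which finishes in $O(\Phi)$ rounds by definition of the dynamic diameter, so Phase~2 costs $\tilde O((\tau/L)\Phi)$ rounds. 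The total $\tilde O(L+\tau\Phi/L)$ is minimized at $L=\sqrt{\tau\Phi}$, giving $\tilde O(\sqrt{\tau\Phi})$.

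\emph{Main obstacle.} The delicate step is showing that w.h.p.\ no node is chosen as connector more than $\eta$ times, since this is exactly what both correctness (the Markov property above) and the running time rely on. The connectors $s=c_0,c_1,\dots,c_{\lceil\tau/L\rceil}$ form a Markov chain one of whose steps is an $L$-step walk on $\mathcal G'$, so consecutive connectors are $L$ steps apart. The plan is to bound $\Pr[c_j=v]$ using the spectral contraction of Corollary~\ref{cor:cor1}, supplemented by return-probability estimates in the regime where $L$ is too small for the crude $\ell_2$ bound to bite, then sum over the $\lceil\tau/L\rceil$ stitches and apply a Chernoff/balls-into-bins argument to conclude that each node is a connector only $O(\log n)$ times w.h.p., so $\eta=\Theta(\log n)$ suffices. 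If some node does exhaust its pool --- a probability-$n^{-\Omega(1)}$ event --- the algorithm falls back to the naive $O(\tau)$-round token walk, preserving both correctness and the w.h.p.\ bound. A final routine check, as in \cite{DasSarmaNPT10}, is that the Phase-2 control traffic and the bookkeeping by which short-walk endpoints recognize the requests addressed to them also fit within the $O(\log^2 n)$ bandwidth, so the walks are never slowed by congestion.
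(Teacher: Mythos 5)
Your overall architecture is the paper's: prepare many short walks in parallel, stitch them by flooding (each stitch costing $O(\Phi)$ by the definition of dynamic diameter), reuse the graph sequence $G_1,\dots,G_L$ so that mixing over $d$-regular graphs is unaffected, and balance $\tilde O(L+\tau\Phi/L)$ at $L=\Theta(\sqrt{\tau\Phi})$. The correctness and congestion parts of your argument match Lemma~\ref{lem:correctness}, Lemma~\ref{lem:phase1} and Lemma~\ref{lem:lemma2.3}. However, there is a genuine gap exactly at the step you flag as the ``main obstacle,'' and it is the step that constitutes the technical core of the paper. You allot only $\eta=\Theta(\log n)$ short walks per node and assert that w.h.p.\ no node serves as a connector more than $O(\log n)$ times, but you only state a plan (``spectral contraction plus return-probability estimates plus Chernoff/balls-into-bins'') without carrying it out. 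The available return-probability estimate for $d$-regular (evolving) graphs --- the Lyons-type bound $Q^k(x,y)\le 1/n+O(d/\sqrt{k+1})$, which itself must first be extended to dynamic graphs (this is Lemma~\ref{lem:lyons} and Claim~\ref{clm:claim-norm} in the paper, a nontrivial piece of the argument) --- yields only $\sum_j \Pr[c_j=v]=\tilde O(d/\sqrt{\Phi})$ over the $\tau/L$ stitch times, which is far larger than $O(\log n)$ when $d$ is large. Accordingly, the paper does not prove your $O(\log n)$ claim; it proves the weaker bound $\tilde O(d\sqrt{\tau}/\lambda)$ connector-visits per node (via the Random Walk Visits Lemma~\ref{lem:visit-bound} plus Lemma~\ref{lem:connector-bound}) and therefore has each node prepare $d$ coupons, which is precisely what makes the budget argument close. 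Your stronger claim is unsupported and may well be false in general; at minimum it needs a proof you do not supply.

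Two further specific problems with the sketched plan. First, your short walks have a fixed length $L$, whereas the paper deliberately randomizes each short-walk length in $[\lambda,2\lambda-1]$; this randomization is what Lemma~\ref{lem:connector-bound} uses to rule out the walk's visits to a node aligning periodically with the stitch times, and with fixed $L$ you would need pointwise-in-time return-probability bounds strong enough to replace it (the crude $\ell_2$ contraction of Corollary~\ref{cor:cor1} does not ``bite'' for $jL$ below the mixing time, which is exactly the relevant regime since $L=\sqrt{\tau\Phi}\ll\tau$ in general). Second, the concluding ``Chernoff/balls-into-bins'' step is not available as stated: the connector positions $c_1,c_2,\dots$ are highly dependent (successive positions of one walk), so a union/Chernoff argument over them requires the kind of subwalk-splitting concentration argument the paper develops in Lemmas~\ref{lemma:whp one walk one node bound} and~\ref{lemma:k walks one node bound}, not independence. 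Your fallback to the naive $O(\tau)$ walk preserves correctness, but the w.h.p.\ running-time claim still hinges on the unproven exhaustion bound, so as written the proof of Theorem~\ref{thm:maintheorem} is incomplete.
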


The above algorithm assumes that nodes have knowledge of $\tau$ (or at least some good estimate of it). (In many applications, it is easy to have a good estimate of  $\tau$  when there is knowledge of the structure of the individual graphs --- e.g.,  each $G_t$ is an expanders as in \cite{JGPE:soda12,FOCS2001}.)  Notice that in the worst case the value of $\tau$ is $\tilde{\Theta}(n^2)$, and hence this bound can be used even if nodes have no knowledge. Therefore putting $\tau = \tilde{\Theta}(n^2)$ in the above Theorem \ref{thm:maintheorem}, we see that our algorithm samples a node from the uniform distribution through a random walk in $\tilde O(n \sqrt{\Phi})$ rounds w.h.p. Our algorithm can also be generalized to work for non-regular evolving graphs also (cf. Section \ref{sec:nonregular}).

We also consider the following extension of the Single Random Walk problem,
called  \textbf{the $k$ Random Walks problem}: We have $k$ sources $s_1, s_2, ..., s_k$ (not necessarily distinct) and we want each of 
the $k$ destinations to output an ID of its corresponding source, assuming that each source initiates an independent
random walk of length $\tau$. (Equivalently, one can ask each source to output the ID of its corresponding destination.) The goal is to output all the ID's in as few rounds as possible. We show that:

\begin{theorem}\label{thm:kwalks} The algorithm {\sc Many-Random-Walks} (cf. Algorithm \ref{alg:many-random-walk}) solves the $k$ Random Walks problem in a dynamic graph and with high probability finishes in
$\tilde O\left(\min(\sqrt{k\tau \Phi}, k+\tau)\right)$ rounds.
\end{theorem}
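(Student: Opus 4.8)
The plan is to establish the two terms in the minimum by two different schemes and have {\sc Many-Random-Walks} run both (the nodes know $n$, $k$, and the estimates of $\tau$ and $\Phi$ needed to set the short-walk length) and output the first one to finish; for $k=1$ the second scheme degenerates to the single-walk algorithm of Theorem~\ref{thm:maintheorem} and recovers it.

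For the $\tilde O(k+\tau)$ term I would run all $k$ walks simultaneously by naive token forwarding for $\tau$ steps, resolving the $O(\log^2 n)$-bit bandwidth by queueing: a node forwards $O(\log n)$ of its waiting tokens along each incident edge per round and delays the rest. Two things need checking. First, correctness: a token received through a uniformly random incident edge in each of $\tau$ rounds performs a faithful simple random walk on $\mathcal{G}$, so by Corollary~\ref{cor:cor1} its position is within $n^{-\Omega(1)}$ of uniform and it may report its source's ID. Second, that queueing never delays any token by more than $\tilde O(k)$ rounds in total: since each $G_t$ is $d$-regular, the expected number of tokens at a node in any round is $O(k/n)$, so by a Chernoff bound over the mutually independent walks the per-node (hence per-edge) load is $\tilde O(1+k/n)$ w.h.p., the only exception being the at most $k$ source nodes during the first $O(\tau)$ rounds, where the surplus is at most $k$ and is drained within $\tilde O(k)$ rounds.

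For the $\tilde O(\sqrt{k\tau\Phi})$ term I would generalize the short-walk/stitching approach behind Theorem~\ref{thm:maintheorem}. Fix a short-walk length $\lambda$. In Phase~1 each node constructs $\eta$ independent random walks of length $\lambda$, done in $\eta$ successive batches of $\lambda$ rounds each; $d$-regularity keeps the per-edge congestion within a batch at $\tilde O(1)$, so Phase~1 costs $\tilde O(\lambda \eta)$ rounds. In Phase~2 each of the $k$ walks is grown from its source by repeatedly consuming an as-yet-unused short walk stored at its current head node; one such extension advances the walk by $\lambda$ steps, and the head coordinates with the endpoint of the chosen short walk via an $O(\log n)$-bit message routed through the dynamic graph in $O(\Phi)$ rounds. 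With $\lceil \tau/\lambda \rceil$ extensions per walk and a careful schedule of the at-most-$k$ simultaneous extension requests per level, Phase~2 costs $\tilde O\!\big((\tau/\lambda)\,\Phi\big)$; choosing $\eta$ and $\lambda$ (roughly $\eta=\tilde O(k)$ and $\lambda=\Theta(\sqrt{\tau\Phi/k})$) so that $\lambda\eta$ and $(\tau/\lambda)\Phi$ are both $\tilde O(\sqrt{k\tau\Phi})$ gives the bound, with the first scheme taking over whenever this forces $\lambda<1$ or $k+\tau$ is already smaller.

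It remains to argue correctness of the stitched walk and to control its congestion, and the latter is the main obstacle. Correctness uses that the adversary is oblivious: $\langle G_t\rangle$ is fixed in advance, each batch's short walks are genuine simple random walks on $\lambda$ consecutive $d$-regular graphs using randomness independent across nodes and across batches, so the walk produced by stitching $\tau/\lambda$ of them is distributed exactly as a $\tau$-step simple random walk on the $d$-regular evolving graph obtained by concatenating the corresponding $\lambda$-blocks; that concatenation is itself a legal instance, so Corollary~\ref{cor:cor1} and Theorem~\ref{thm:mixtime} place its position within $n^{-\Omega(1)}$ of uniform. This is valid only if the randomness consumed along each stitched walk is always fresh, i.e.\ no node is asked for a short walk it has already used --- equivalently, no node is the head of more than $\eta$ (walk, extension-level) pairs over the whole run. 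Proving the corresponding concentration bound ($\tilde O(1+k\tau/(\lambda n))$ visits per node w.h.p., via a union bound over the $k$ walks together with the mixing estimate of Corollary~\ref{cor:cor1} to control how the extension-levels spread out), and fitting the at-most-$k$ concurrent Phase~2 routings into $\tilde O(\Phi)$ amortized rounds without exceeding the bandwidth, is where the real work lies; low-probability violations are handled by letting an over-subscribed walk fall back to naive token forwarding, which affects the running time only in lower-order terms.
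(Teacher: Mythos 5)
Your second scheme --- the one responsible for the $\tilde O(\sqrt{k\tau\Phi})$ term --- has a genuine gap at exactly the point you flag as ``where the real work lies'': fitting up to $k$ concurrent stitch requests per level into $\tilde O(\Phi)$ amortized rounds. With $\lambda=\Theta(\sqrt{\tau\Phi/k})$, your Phase~2 budget of $\tilde O\bigl((\tau/\lambda)\Phi\bigr)=\tilde O(\sqrt{k\tau\Phi})$ only works if, at each of the $\tau/\lambda$ levels, all $k$ walks can simultaneously locate and notify the endpoints of their sampled coupons within $\tilde O(\Phi)$ rounds. In this model the endpoint's location is unknown to the head and the topology changes adversarially, so each notification is effectively a flood; delivering $k$ distinct flooded messages under the $O(\log^2 n)$-bit per-edge bandwidth is a $k$-dissemination task that cannot in general be completed in $\tilde O(\Phi)$ (or even $\tilde O(\Phi+k)$) rounds --- indeed the paper's own best distributed bound for spreading $k$ tokens is $\tilde O(n^{1/3}k^{2/3}(\tau\Phi)^{1/3})$, and that result is built \emph{on top of} this theorem. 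The paper sidesteps the issue by never parallelizing the stitching: it keeps only $d$ coupons per node, chooses the short-walk length large, $\lambda=(32\sqrt{k\tau\Phi+1}\log n+k)(\log n)^2$, and stitches the $k$ walks one at a time, so the total number of stitches is $O(k\tau/\lambda)$ and Phase~2 costs $\tilde O(k\tau\Phi/\lambda)=\tilde O(\sqrt{k\tau\Phi})$, each stitch being a single $O(\Phi)$-round flood; Phase~1 (run once, shared by all sources) costs $O(\lambda)=\tilde O(\sqrt{k\tau\Phi}+k)$.

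Two secondary problems. Your connector-load estimate of $\tilde O(1+k\tau/(\lambda n))$ visits per node is an average-case (stationary) figure; the oblivious adversary can concentrate visits on a vertex, and the correct worst-case bound is the Random Walk Visits Lemma (Lemma~\ref{lem:visit-bound}): $\tilde O(d\sqrt{k\tau})+k$ visits, hence $\tilde O\bigl((d\sqrt{k\tau}+k)/\lambda\bigr)$ uses as a connector per node --- with your small $\lambda$ this can exceed $\eta=\tilde O(k)$ by a factor of up to $d$, so the coupon supply and the Phase~1 budget would have to be re-balanced. Likewise, in your naive scheme the claimed per-round load of $\tilde O(1+k/n)$ is not justified before mixing (nodes at or near the sources can carry far more in early rounds); the paper's Case~1 instead bounds the \emph{total} number of visits per node by $\tilde O(d(\sqrt{k\tau}+k))$ via the same Visits Lemma, treats this as congestion, and adds the dilation $\tau$ to get $\tilde O(\sqrt{k\tau}+k+\tau)=\tilde O(k+\tau)$. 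These two points are patchable along the paper's lines; the parallel-stitching step is not, without a genuinely new idea.
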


\paragraph{Information dissemination (or $k$-gossip) problem.} In $k$-gossip, initially $k$ different tokens are assigned to a set $V$ of $n (\ge k)$ nodes. A node may have more than one token. The  goal is to disseminate all the $k$ tokens to all the $n$ nodes. We present a fast distributed randomized algorithm for $k$-gossip in a dynamic network. Our algorithm uses {\sc Many-Random-Walks} as a key subroutine; to the best of our knowledge, this is the first subquadratic time fully-distributed {\em token forwarding} algorithm. %An interesting case is when $k = n$ and each node contains a single token to disseminate to all other nodes. 
%Our theoretical result is follows.
%\vspace{-0.05in}
\begin{theorem}\label{thm:token-bound}
The algorithm {\sc K-Information-Dissemination} (cf. Algorithm~\ref{alg:token-dissemination}) solves {\em $k$-gossip} problem in a dynamic graph with high probability in $\tilde{O}(\min\{n^{1/3}k^{2/3}(\tau \Phi)^{1/3}, nk\})$ rounds. 
\end{theorem}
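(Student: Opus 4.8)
The plan is to view \textsc{K-Information-Dissemination} (Algorithm~\ref{alg:token-dissemination}) as a sequence of \emph{phases}, each of which calls \textsc{Many-Random-Walks} (Theorem~\ref{thm:kwalks}) essentially as a black box to move tokens around the dynamic graph, and then to derive the bound by balancing the number of phases against the per-phase cost. The $nk$ term in the $\min$ needs no new argument: whenever $k$ is so large that the random-walk bound would be the worse of the two, one simply runs instead the naive $O(nk)$-round token-forwarding algorithm of \cite{Kuhn-stoc}, which is correct against an arbitrary (hence also oblivious) adversary.

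I would first describe a single phase, governed by a tunable parameter $\ell$ (the number of walks launched per token, resp.\ per node, in that phase). In a phase, every node currently holding tokens attaches a bounded number of them (at most $O(\log n)$) to each of $\ell$ freshly started length-$\tau$ walks; \textsc{Many-Random-Walks} routes all of these walks in parallel; whenever two walks meet at a node, or a token-carrying walk reaches a node, the node records the new tokens; and an $O(\Phi)$ ``return/broadcast'' step lets a node learn the tokens collected by the walks it launched. The key structural fact, from Theorem~\ref{thm:mixtime} together with the monotonicity of Lemma~\ref{lem:monotonicity} and $d$-regularity (so the stationary distribution is uniform on $V$), is that after $\tau$ steps the endpoint of each walk is within $1/\mathrm{poly}(n)$ of uniform; and because the adversary is \emph{oblivious} it cannot bias these endpoints, so I may analyse the set of nodes that ``catch'' a given token in a phase as (essentially) a uniformly random set of the expected size. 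This near-uniformity is the engine of the whole analysis.

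For \textbf{correctness} I would fix a token $t$ and a target node $v$ and bound, for a single phase, the probability that $v$ fails to acquire $t$: by a birthday-paradox / balls-into-bins estimate on the near-uniform walk endpoints (and using that the set of nodes already holding $t$ is nondecreasing across phases), this probability is at most $1-\Omega(\ell\cdot(\text{current spread of } t)/n)$, which after $R=\tilde{O}(n/\ell)$ phases (a coupon-collector bound) is driven below $n^{-\Omega(1)}$; a union bound over all $nk$ pairs $(t,v)$ then shows every token reaches every node w.h.p. The subtlety here is that the per-phase failure events be independent (or negatively correlated) enough for the product bound to hold, which is true since each phase launches fresh walks whose routing depends only on the algorithm's private coins.

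For the \textbf{running time}, one collection phase costs at most the \textsc{Many-Random-Walks} bound of Theorem~\ref{thm:kwalks} applied to its $\tilde{O}(\ell n)$ walks, i.e.\ $\tilde{O}(\min\{\sqrt{\ell n\tau\Phi},\,\ell n+\tau\})$, plus $\tilde{O}(\Phi)$ for the return step (replication phases, launching only $\tilde{O}(\ell k)$ walks, are cheaper); multiplying by $R=\tilde{O}(n/\ell)$ phases and optimizing over $\ell$ (subject to $1\le\ell\le n$, balancing the replication and collection contributions) yields, after a short calculation, $\tilde{O}(n^{1/3}k^{2/3}(\tau\Phi)^{1/3})$, and combining with the $O(nk)$ fallback gives $\tilde{O}(\min\{n^{1/3}k^{2/3}(\tau\Phi)^{1/3},\,nk\})$. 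I expect the correctness argument to be the main obstacle: one must handle the ``timing'' issue peculiar to dynamic networks --- a token deposited at a node is useful only to a later walk that passes through \emph{after} it arrives --- and show that the near-uniform mixing guarantee, under an oblivious adversary that re-wires the graph every round, is strong enough to force steady per-phase progress; pinning down the per-phase failure bound cleanly and getting the parameters to line up with the stated exponents is where the real work lies.
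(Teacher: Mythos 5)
Your proposal does not analyze the algorithm the theorem refers to, and the scheme you substitute for it misses the idea that makes the paper's bound work. \textsc{K-Information-Dissemination} has exactly two phases: first, each token is replicated $f=n^{2/3}(k/\tau\Phi)^{1/3}$ times and these $kf$ copies are shipped to (near-)uniformly random nodes by a \emph{single} invocation of \textsc{Many-Random-Walks}, costing $\tilde{O}(\sqrt{kf\tau\Phi})$ rounds by Theorem~\ref{thm:kwalks}; second, each token is simply \emph{flooded} for $O(n\log n/f)$ rounds, costing $\tilde{O}(kn/f)$ in total. The correctness engine is not a walk-meeting or coupon-collector argument but a dynamic-connectivity fact about flooding: since every $G_t$ is connected, broadcasting for $T$ rounds informs at least one new node per round, so for any target $v$ and token $t$ the set $S_v^t$ of nodes from which $v$ is reachable within $T=2n\log n/f$ rounds has size at least $T$; a token survives to $v$ iff one of its $f$ near-uniform copies lands in $S_v^t$, which fails with probability at most roughly $(1-\tfrac{2\log n}{f}\cdot\Theta(1))^{f}\le n^{-2}$, and a union bound over the $nk$ pairs $(t,v)$ finishes correctness. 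Balancing $\sqrt{kf\tau\Phi}$ against $kn/f$ gives $f=n^{2/3}(k/\tau\Phi)^{1/3}$ and the bound $\tilde{O}(n^{1/3}k^{2/3}(\tau\Phi)^{1/3})$; the $nk$ term comes from running the trivial sequential-broadcast algorithm in parallel and stopping with whichever finishes first. None of this structure appears in your write-up.

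Beyond analyzing a different algorithm, your scheme does not plausibly achieve the stated complexity, so the gap is not merely cosmetic. You use $R=\tilde{O}(n/\ell)$ phases each costing $\tilde{O}(\min\{\sqrt{\ell n\tau\Phi},\,\ell n+\tau\})$, which is at least $\tilde{\Omega}\bigl((n/\ell)\sqrt{\ell n\tau\Phi}\bigr)=\tilde{\Omega}(n^{3/2}\sqrt{\tau\Phi/\ell})\ge\tilde{\Omega}(n\sqrt{\tau\Phi})$ even at $\ell=n$; for small $k$ (say $k=O(\operatorname{polylog} n)$) the target bound is $\tilde{O}(n^{1/3}(\tau\Phi)^{1/3})$, so your total is off by polynomial factors, and the ``short calculation'' that is supposed to produce the exponents is never carried out and cannot come out right with these ingredients. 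Two further unresolved issues are specific to your design and absent from the paper's: (i) walks that carry $O(\log n)$ tokens each break the congestion accounting behind Lemma~\ref{lem:phase1} and Theorem~\ref{thm:kwalks}, which assume each walk message is a small coupon/token of ID size, so the per-phase cost you quote is not justified in the CONGEST model; and (ii) the ``timing'' problem you flag at the end --- that a deposited token helps only walks passing through \emph{later} --- is a genuine obstacle for a pick-up-en-route scheme, whereas the paper sidesteps it entirely because its walks only transport copies from their own sources and all remaining dissemination is done by flooding, whose progress is guaranteed deterministically by per-round connectivity.
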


%Anisur: I changed this part 
%\vspace{-0.2in}
\paragraph{Mixing time estimation.} Given a dynamic network, we are interested in  (approximately) computing the dynamic mixing time, assuming that the mixing time of the (individual) graphs do not change. We present an efficient distributed algorithm  for estimating the mixing time. In
particular, we show the following result where $\tau^x_{mix}$ is the dynamic mixing time with respect to a starting node $x$. We formally define these notions in Section~\ref{mixing_time}.
This gives an alternative algorithm to the only previously known
approach by Kempe and McSherry \cite{kempe}  that can be used to estimate
$\tau^x_{mix}$ in a {\em static} graph in $\tilde O(\tau^x_{mix})$ rounds. 
%\vspace{-0.05in}
\begin{theorem}\label{thm:complexity_bound_mixing_time}
Given connected $d$-regular evolving graphs with dynamic diameter $\Phi$, a node $x$ can find, in $\tilde{O}(n^{1/4}\sqrt{\Phi \tau^x(\epsilon)})$ rounds, a time
$\tilde{\tau}^x_{mix}$ such that $\tau^x_{mix}\leq \tilde{\tau}^x_{mix}\leq \tau^x(\epsilon)$, where $\epsilon = \frac{1}{6912e\sqrt{n}\log n}$.
\end{theorem}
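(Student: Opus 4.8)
The plan is for the source $x$ to run a doubling-then-binary search over the walk length $\ell$, and, for each candidate $\ell$, to launch a batch of $K=\tilde\Theta(\sqrt n)$ mutually independent length-$\ell$ random walks from $x$ (simultaneously, via {\sc Many-Random-Walks} in its natural extension to a prescribed length $\ell$, with running time $\tilde O(\min(\sqrt{k\ell\Phi},\,k+\ell))$ --- cf.\ Theorem~\ref{thm:kwalks}) in order to \emph{estimate} the deviation $\|\pi_x(\ell)-\pi\|_2$ of the $\ell$-step distribution from the uniform stationary vector $\pi=\mathbf 1/n$, and then to apply a threshold test to that estimate to decide whether $\ell$ is ``past mixing''. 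Each walk carries a distinct label in $\{1,\dots,K\}$ ($O(\log n)$ bits since $K\le n$); when walk $i$ halts at a node $v$, $v$ records that walk $i$ ended there. Writing $N_v$ for the number of walks ending at $v$ and $C=\sum_v\binom{N_v}{2}$ for the number of colliding unordered pairs, node $x$ learns $C$ by a pipelined convergecast of the (at most $K$) nonzero counts, in $\tilde O(\Phi+\sqrt n)$ extra rounds. The test at $\ell$ is a threshold on $\hat d^2:=C/\binom{K}{2}-1/n$, and the algorithm returns the smallest $\ell$ for which the test declares ``mixed''.

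\noindent\textbf{Accuracy of the estimator.}
Since the $K$ walks are independent and each ends at $v$ with probability $\pi_x(\ell)_v$, two fixed walks collide with probability $\sum_v\pi_x(\ell)_v^2=\|\pi_x(\ell)\|_2^2$, and, since $\pi_x(\ell)-\pi\perp\mathbf 1$, the Pythagorean identity $\|\pi_x(\ell)\|_2^2=\tfrac1n+\|\pi_x(\ell)-\pi\|_2^2$ makes $\hat d^2$ an unbiased estimator of $\|\pi_x(\ell)-\pi\|_2^2$. The quantitative core is to bound the fluctuation of $\hat d^2$: $C$ is a sum of $\binom{K}{2}$ pairwise-dependent Bernoulli$(\|\pi_x(\ell)\|_2^2)$ indicators whose covariances involve the three-walk collision probability $\sum_v\pi_x(\ell)_v^3$; a mean/variance computation (write $\pi_x(\ell)_v=\tfrac1n+\delta_v$, $\sum_v\delta_v=0$, $\sum_v\delta_v^2=\|\pi_x(\ell)-\pi\|_2^2$) yields, in the near-uniform regime, $\mathrm{Var}(\hat d^2)=O\!\big(\tfrac1{nK^2}+\tfrac{\|\pi_x(\ell)-\pi\|_2^2}{nK}\big)$, hence standard deviation $O(1/(K\sqrt n))$; a median over $O(\log n)$ independent repetitions boosts this to an additive error $O(1/(K\sqrt n))$ with high probability. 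Choosing $K=\tilde\Theta(\sqrt n)$ makes this error $O(\epsilon^2)$ for $\epsilon=\tfrac1{6912e\sqrt n\log n}$ --- this is precisely where the value of $\epsilon$ (and the explicit constant, which bookkeeps the constants in the concentration bound and in the threshold below) originates. Declare $\ell$ ``mixed'' iff $\hat d^2\le(3\epsilon)^2$. Then, with high probability, (i) if $\|\pi_x(\ell)-\pi\|_2<\epsilon$ the estimate is at most $\epsilon^2+O(\epsilon^2)\le(3\epsilon)^2$ and the test passes, and (ii) if $\|\pi_x(\ell)-\pi\|_2\ge\tfrac1{2e}$ then $\hat d^2\ge\tfrac1{4e^2}-O(\epsilon^2)\gg(3\epsilon)^2$ and the test fails.

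\noindent\textbf{From the test to the sandwich.}
By monotonicity of $\|\pi_x(\cdot)-\pi\|$ (Lemma~\ref{lem:monotonicity}), $\|\pi_x(\ell)-\pi\|_2\ge\tfrac1{2e}$ for \emph{every} $\ell<\tau^x_{mix}$, while $\tau^x(\epsilon)\le\tau=\tilde O(n^2)$ by Theorem~\ref{thm:mixtime} and Corollary~\ref{cor:second-eigen-bound}, so the search range $[1,\tilde O(n^2)]$ is finite. Running the test for $\ell=1,2,4,\dots$ until it first passes, then binary-searching inside that interval for the smallest passing length $\ell^\star$, and setting $\tilde\tau^x_{mix}:=\ell^\star$: a union bound over all (polylogarithmically many) tests gives, w.h.p., that the returned length satisfies $\|\pi_x(\tilde\tau^x_{mix})-\pi\|_2<\tfrac1{2e}$ --- because the test fails whenever $\|\pi_x(\ell)-\pi\|_2\ge\tfrac1{2e}$ --- hence $\tilde\tau^x_{mix}\ge\tau^x_{mix}$; and, because the test passes at $\tau^x(\epsilon)$, the smallest passing length obeys $\tilde\tau^x_{mix}\le\tau^x(\epsilon)$, which is exactly the claimed sandwich.

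\noindent\textbf{Running time and the main obstacle.}
A batch runs $K=\tilde\Theta(\sqrt n)$ independent walks of length at most $\tau^x(\epsilon)$, costing $\tilde O(\sqrt{K\cdot\tau^x(\epsilon)\cdot\Phi})=\tilde O(n^{1/4}\sqrt{\Phi\,\tau^x(\epsilon)})$ rounds by Theorem~\ref{thm:kwalks}, plus $\tilde O(\Phi+\sqrt n)$ to convergecast the counts; since only $\tilde O(1)$ batches are used overall (across all lengths and repetitions), the total is $\tilde O(n^{1/4}\sqrt{\Phi\,\tau^x(\epsilon)})$. I expect the main obstacle to be the concentration analysis of $C$: because the collision indicators are only pairwise independent, one must carefully bound the variance contributed by triples of walks meeting at the same node (the $\sum_v\pi_x(\ell)_v^3$ term) and show it remains negligible throughout the near-uniform regime, so that $\tilde\Theta(\sqrt n)$ samples really do certify an additive-$\Theta(\epsilon^2)$ estimate of $\|\pi_x(\ell)-\pi\|_2^2$; tracking the constants so as to land exactly on $\epsilon=\tfrac1{6912e\sqrt n\log n}$, and invoking the independence of the samples produced by {\sc Many-Random-Walks}, are the remaining technical points.
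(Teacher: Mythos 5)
Your proposal is correct in its overall skeleton and that skeleton coincides with the paper's: run $K=\tilde\Theta(\sqrt n)$ independent length-$\ell$ walks from $x$ via {\sc Many-Random-Walks} (cost $\tilde O(\sqrt{K\ell\Phi})=\tilde O(n^{1/4}\sqrt{\Phi\,\tau^x(\epsilon)})$ by Theorem~\ref{thm:kwalks}), test whether the induced distribution is close to uniform, and double/binary-search over $\ell$ using the monotonicity lemma to get the sandwich $\tau^x_{mix}\le\tilde\tau^x_{mix}\le\tau^x(\epsilon)$. Where you genuinely diverge is the closeness test itself: the paper does not build its own estimator but invokes the Batu et al.\ tester (Theorem~\ref{thm:batu}) as a black box with $\epsilon_0=1/12e$, so that the PASS threshold $\epsilon_0^3/(4\sqrt n\log n)$ is precisely where the constant in $\epsilon=\frac{1}{6912e\sqrt n\log n}$ comes from and where the $\tilde O(\sqrt n)$ sample complexity is certified; you instead replace that black box with a birthday-paradox collision estimator of $\|\pi_x(\ell)-\pi\|_2^2$ and carry out the variance/median-of-means analysis by hand. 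Each route buys something: yours is self-contained, and it works natively in the $\ell_2$ norm used in the paper's Definition~\ref{def:mix-dynamic} and Lemma~\ref{lem:monotonicity} (the paper's proof silently mixes the $L_1$ guarantee of the Batu test with its $\ell_2$ mixing-time definitions), while the paper's route avoids any concentration analysis and gives the stated constant for free --- in your version the specific value $6912e$ has no provenance, so you could only claim the theorem for some threshold $\epsilon=\Theta(1/(\sqrt n\log n))$, not that exact constant. Two small cautions: your claim that in the far regime ($\|\pi_x(\ell)-\pi\|_2\ge 1/2e$) the estimate deviates by only $O(\epsilon^2)$ is an overclaim --- there the fluctuation of $\hat d^2$ is of order $\|\pi_x(\ell)\|_2/\sqrt K\approx n^{-1/4}$, which is still far below $1/4e^2$ so the test verdict is unaffected, but the stated error term should be corrected; and your $\tilde O(\Phi+\sqrt n)$ ``pipelined convergecast'' of the collision counts has no off-the-shelf analogue in an adversarial dynamic network (there is no stable spanning tree), though the paper is equally informal about how the source gathers its $\tilde O(\sqrt n)$ sample identities, so this does not put you at a disadvantage relative to the paper's own proof.
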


\iffalse
\noindent {\bf Our Contributions.}
\begin{itemize}
\item We present an efficient distributed algorithm to sample from uniform distribution in dynamic networks.
\item We present a fast distributed randomized algorithm for the fundamental problem of {\em information dissemination} in a dynamic networks. Our algorithm is based on performing random walks in networks; to the best of our knowledge, this is the first subquadratic time fully-distributed {\em token forwarding} algorithm.
\item We also present a simple and fast distributed algorithm for estimating the dynamic mixing time and  related spectral properties of the underlying dynamic network.
\end{itemize}  
\fi

%\vspace{-0.1in}
\section{Related Work and Technical Overview}\label{sec:related}
\paragraph{Dynamic networks.} As a step towards understanding the fundamental computational power in
dynamic networks, recent studies (see e.g., \cite{dynamic-survey, Kuhn-stoc, kuhn-podc, DPRS-arxiv} and the references therein) have investigated dynamic networks in which the network
topology changes arbitrarily from round to round. In the
worst-case model that was studied by Kuhn, Lynch, and
Oshman~\cite{Kuhn-stoc}, the communication links for
each round are chosen by an online adversary, and nodes  do not know
who their neighbors for the current round are before they broadcast
their messages. 
%Note that in this model (and as assumed in our paper also), only edges change and nodes
%are assumed to be fixed.  The only constraint on the adversary is
%that the network should be connected at each round.  
Unlike prior
models on dynamic networks, the model of~\cite{Kuhn-stoc} (like ours) does
not assume that the network eventually stops changing; therefore it requires
that the {\em algorithms work correctly and terminate even in networks that
change continually over time}.

The work of~\cite{AKL08} studied the {\em cover time} of  random walks in an evolving graph (cf. Section \ref{sec:model}) in an oblivious adversarial model. 
%As mentioned in Section \ref{sec:model},  this dynamic graph model is more general than ours because there is no assumption of  stationarity in spectral properties. 
%Anisur: Can we reduce this part, as we said it earlier in model section 
%They showed that there exists an evolving graph in which a simple random walk takes exponential time to cover all the nodes (i.e. the  cover time), even under an oblivious adversary. 
In a {\em regular} evolving graph, they show that the cover time is always polynomial, while this is not true in
general if the graph is not regular --- the cover time can be exponential.  
However, they show that a lazy random walk (i.e., walk with self loops)  has polynomial cover time on all graphs. 
We also use a similar strategy to show that our distributed random walk algorithms can work on non-regular graphs also, albeit at the cost of an increase in run time.
While the work of \cite{AKL08} addressed the cover time of random walks
on dynamic graphs, this paper is concerned with distributed algorithms for computing random walk samples fast with the goal towards applying it to fast distributed computation problems in dynamic networks. 

Recently, the work of \cite{clementi-podc12}, studies the flooding time of {\em Markovian} evolving dynamic graphs, a special class of evolving graphs.

%\vspace{-0.07in}
\paragraph{Distributed random walks.} Our fast distributed random walk algorithms are based on previous such algorithms designed for
{\em static} networks \cite{DNP09-podc, DasSarmaNPT10}. These were the first sublinear  (in the length of the walk) time algorithms for performing random walks in graphs.  The algorithm of \cite{DasSarmaNPT10} performed a random walk of length $\ell$  in $\tilde{O}(\sqrt{\ell D})$  rounds (with high probability) on an undirected  network, where $D$ is the diameter of the network. %This improved over the previous best algorithm that ran in $\tilde{O}(\ell^{2/3}D^{1/3})$ rounds \cite{DNP09-podc}.
(Subsequently, the algorithm of \cite{DasSarmaNPT10} was shown to be almost time-optimal (up to polylogarithmic factors) in \cite{NanongkaiDP11}.)
The general high-level idea   of the above algorithm is using  a few short walks in the
beginning (executed in parallel) and then carefully concatenating these
walks together later as necessary.  
A main contribution of the present work is showing that building on the approach
of \cite{DasSarmaNPT10} yields speed up in random walk computations even in dynamic networks. However, there are some challenging
technical issues to overcome in this extension given the continuous dynamic nature (cf. Section \ref{sec:algo}). 
%Anisur: Is the sentence below not well defined? Some reviewer pointed out it. 
%In the dynamic setting, we focus on walks of length equal to the dynamic
%mixing time ($\tau$) which we show is a well-defined quantity (cf. Section \ref{mixing_time} in Appendix). 
One key  technical lemma (called the {\em Random walk visits Lemma}) that was used
to show the almost-optimal run time of $\tilde{O}(\sqrt{\ell D})$ does not directly apply to dynamic networks. 
%In other words, the above  does not imply a run time of of $\tilde{O}(\sqrt{\tau \Phi})$ for dynamic networks.
In the static setting, this lemma
gives a bound on the number of times  any node is visited in an $\ell$-length walk, for any  length that is not much larger than the cover time.  More precisely, the lemma states that w.h.p. any node $x$ is visited at most $\tilde{O}(d(x)\sqrt{\ell})$ times, in an $\ell$-length walk from any starting node ($d(x)$ is the degree of $x$).
In this paper, we show that a similar bound applies to an $\ell$-length random walk
on any $d$-regular evolving graph (cf. Lemma \ref{lem:visit-bound}).
A key ingredient in the above proof is  showing that a technical result due to Lyons \cite{Lyons}
can be made to work on an evolving graph.

Other recent work involving multiple random walks in {\em static} networks, but in 
different settings include Alon et. al.~\cite{AAKKLT}, Els{\"a}sser
et. al.~\cite{berenbrink+ceg:gossip},  and Cooper et al. \cite{frieze}.
%\vspace{-0.05in}
\paragraph{Information spreading.} The main application of our random walks algorithm is an improved algorithm for 
information spreading or gossip in dynamic networks. To the best of our knowledge, it gives the first
subquadratic, fully distributed, token forwarding algorithm in dynamic networks, partially  answering an open question raised in \cite{DPRS-arxiv}.  Information spreading is a fundamental
primitive in networks which has been extensively studied  (see e.g., \cite{DPRS-arxiv} and the references therein). Information spreading can be used to solve other problems such as
broadcasting and leader election. 
%Indeed, solving $n$-gossip (cf. Section \ref{sec:intro}), where
%the number of tokens is equal to the number of nodes in the network,
%and each node starts with exactly one token, allows any function of
%the initial states of the nodes to be computed, assuming that the
%nodes know $n$~\cite{Kuhn-stoc}.  

This  paper's focus is on {\em token-forwarding} algorithms, which do not manipulate tokens in any way other than
storing and forwarding them.  Token-forwarding algorithms are simple,
often easy to implement, and typically incur low overhead.  \cite{Kuhn-stoc} showed that under their adversarial
model, $k$-gossip can be solved by token-forwarding in $O(nk)$ rounds,
but that any deterministic online token-forwarding algorithm needs
$\Omega(n \log k)$ rounds. In \cite{DPRS-arxiv}, an almost matching lower bound of 
$\Omega(nk/\log n)$ is shown.
% rounds against an adversary that, at the start of
%each round, knows the randomness used by the algorithm in that round.
%This also implies that any deterministic online token-forwarding
%algorithm takes $\Omega(nk/\log n)$ rounds.  This  result applies even
%to centralized token-forwarding algorithms that have a global
%knowledge of the token distribution.
The above lower bound indicates that one cannot obtain efficient (i.e.,
subquadratic) token-forwarding algorithms for gossip in the
adversarial model of~\cite{Kuhn-stoc}.   This motivates considering
other weaker (and perhaps more realistic) models of dynamic networks.
%In fact, it is not clear whether one can solve the problem
%significantly faster even in an offline setting, in which the network
%can change arbitrarily each round, but the entire evolution is known
%to the algorithm in advance. 

\cite{DPRS-arxiv} presented a polynomial-time offline {\em centralized} token-forwarding algorithm that
solves the $k$-gossip problem on an $n$-node dynamic network in
$O(\min\{nk, n \sqrt{k \log n}\})$ rounds with high probability. This is the first known {\em subquadratic} time
token-forwarding algorithm but it is not distributed, and furthermore, the centralized algorithm needs
to know the complete evolution of the dynamic graph in advance. 
It was left open in \cite{DPRS-arxiv}
whether one can obtain a fully-distributed  and localized algorithm that also does not know anything
about how the network evolves.
In this paper, we resolve this open question in the affirmative. Our  algorithm  runs in  $\tilde{O}(\min\{n^{1/3}k^{2/3}(\tau \Phi)^{1/3}, nk\})$ rounds with high probability.  This is significantly faster than the $O(nk)$-round
algorithm of \cite{Kuhn-stoc} as well as the above centralized algorithm of \cite{DPRS-arxiv} when $\tau$ and $\Phi$ are not too large. Note that $\Phi$ is bounded by $O(n)$  and in regular graphs $\tau$ is  $O(n^2)$ ($O(n^3)$ in general graphs) and so in general, our bounds cannot be better than $O(nk)$. 

\iffalse
Our distributed algorithm
is based on the algorithm of \cite{DPRS-arxiv} which consists of two phases. The first phase is the key one, which  consists
of sending the $k$ tokens to a set of {\em random} locations. In \cite{DPRS-arxiv}, this is implemented by a centralized  
algorithm assuming that the algorithm knows the entire sequence of graphs in advance. Here, we show that
this can be efficiently implemented in a distributed and localized fashion using our ``many" random walks algorithm (cf. Section \ref{sec:k-algo}) --- which shows how to efficiently perform many independent random walks simultaneously.
\fi

We note that  an alternative approach
based on network coding was due to
~\cite{haeupler:gossip,haeupler+k:dynamic}, which achieves an
$O(nk/\log n)$ rounds using $O(\log n)$-bit messages (which is not
significantly better than the $O(nk)$ bound using token-forwarding),
and $O(n + k)$ rounds with large message sizes (e.g., $\Theta(n\log n)$ bits).  
It thus follows that for large token and message sizes
there is a factor $\Omega(\min\{n,k\}/\log n)$ gap between
token-forwarding and network coding. We note that in our model we
allow only one token per edge per round and thus our bounds hold
regardless of the token size.
%For further references to using network
%coding for gossip and related problems, we refer to the recent works
%of
%~\cite{haeupler:gossip,haeupler+k:dynamic,avin1,avin2,deb+mc:coding,shah}
%and the references therein.

\iffalse
\paragraph{Decentralized computation of spectral properties.} The
work of \cite{mihail-topaware}  discusses spectral algorithms for
enhancing the topology awareness, e.g., by identifying and assigning
weights to critical links. However, the algorithms are centralized,
and it is mentioned that obtaining efficient decentralized
algorithms is a major open problem. Our algorithms are fully decentralized
and  based on performing random walks,
and so more amenable to dynamic and self-organizing networks.
\fi

%\input{mixing_time}  

\section{Algorithm for Single Random Walk}\label{sec:algo}
%\vspace{-0.07in}
\subsection{Description of the Algorithm}
We develop an algorithm called {\sc Single-Random-Walk} (cf. Algorithm~\ref{alg:single-random-walk}) for $d$-regular evolving graph ($\mathcal{G} = (V, E_t)$).  The algorithm performs a random walk of length $\tau$ (the dynamic mixing time of $\mathcal{G}$ --- cf. Section \ref{sec:rwd}) in order to sample a destination from  (close to) the uniform distribution on the vertex set $V$. 
%We initiate the analysis with the following observation.
%\begin{observation}\label{obs:observation1}
%From the Theorem~\ref{thm:mixtime}, we have $\tau = O(\frac{1}{1 - \bar{\lambda_2}} \log n)$ which will be assumed throughout. Recall that we consider stationary %evolving graphs, so $\bar{\lambda_2}$ is same for all $G_t$. 
%\end{observation} 

The high-level idea of the algorithm is to perform ``many"  short random walks in parallel and later ``stitch" the short walks to get the desired walk of length $\tau$. In particular, we perform the algorithm in two phases, as follows. For simplicity we call the messages used in Phase~1 as ``coupons" and in Phase~2 as ``tokens". 
In Phase 1, we perform $d$ (degree of the graph) ``short"  (independent) random walks of length $\lambda$ (to bound the running time correctly, we show later that we do short walks of length approximately $\lambda$, instead of $\lambda$) from each node $v$, where $\lambda$ is a parameter whose value will be fixed in the analysis. This is done simply  by forwarding $d$ ``coupons" having the ID of $v$ from $v$ (for each node $v$) for $\lambda$ steps via random walks. 
%The coupons end up in some destinations which are used in stitching in Phase 2.
\iffalse
\begin{quote}
\begin{algorithmic}[1]
\STATE Initially, each node $v$ in $G_1$creates $\eta$ messages (called coupons) $C_1,C_2, \ldots,C_{\eta}$ and writes its ID on them.

\FOR{$i = 1$ to $\lambda$}

\STATE  This is the $i$-th round. Each node $v$ does the following: Consider each coupon
$C$ held by $v$ which is received in the $(i-1)$-th round. Now $v$ picks a neighbor $u$ from the graph $G_i$
uniformly at random and forwards $C$ to $u$ after incrementing the counter on the coupon to $i$.
\ENDFOR
\end{algorithmic}
\end{quote}
\fi

In Phase 2, starting at source $s$, we ``stitch" (see Figure \ref{fig:connector}) some of short walks prepared in Phase 1 together to form a longer walk. The algorithm starts from $s$ and randomly picks one coupon distributed from $s$ in Phase 1. We now discuss how to sample one such coupon randomly and go to the destination vertex of that coupon. This can be done easily as follows: In the beginning of Phase~1, each node $v$ assigns a coupon number for each of its $d$ coupons. At the end of Phase 1, the coupons originating at $s$ (containing ID of $s$ plus a coupon number) are distributed throughout the network (after Phase 1). When  a coupon needs to be  sampled, node $s$  chooses a random coupon number (from the unused set of coupons) and informs the destination node (which will be the next stitching point) holding the coupon $C$  through flooding. 
Let $C$ be the  sampled coupon and $v$ be the destination node of $C$. $s$ then sends a ``token" to $v$ (through flooding) and $s$ deletes coupon $C$ (so that $C$ will not be sampled again next time at $s$, otherwise, randomness will be destroyed). The process then repeats. That is, the node $v$ currently holding the token samples one of the coupons it distributed in Phase 1 and forwards the token to the destination of the sampled coupon, say $v'$. Nodes $v, v'$ are called ``connectors" - they are the endpoints of the short walks that are stitched. A crucial observation is that the walk of length $\lambda$ used to distribute the corresponding coupons from $s$ to $v$ and from
$v$ to $v'$ are independent random walks. Therefore, we can stitch them to get a random walk of length $2\lambda$. We therefore can generate a random walk of length $3\lambda, 4\lambda, \ldots $ by repeating this process. We do this until we have completed more than $\tau - \lambda$ steps. Then, we complete the rest of the
walk by doing the naive random walk algorithm. 

\begin{figure}[h]
\centering
\includegraphics[width=0.98\linewidth]{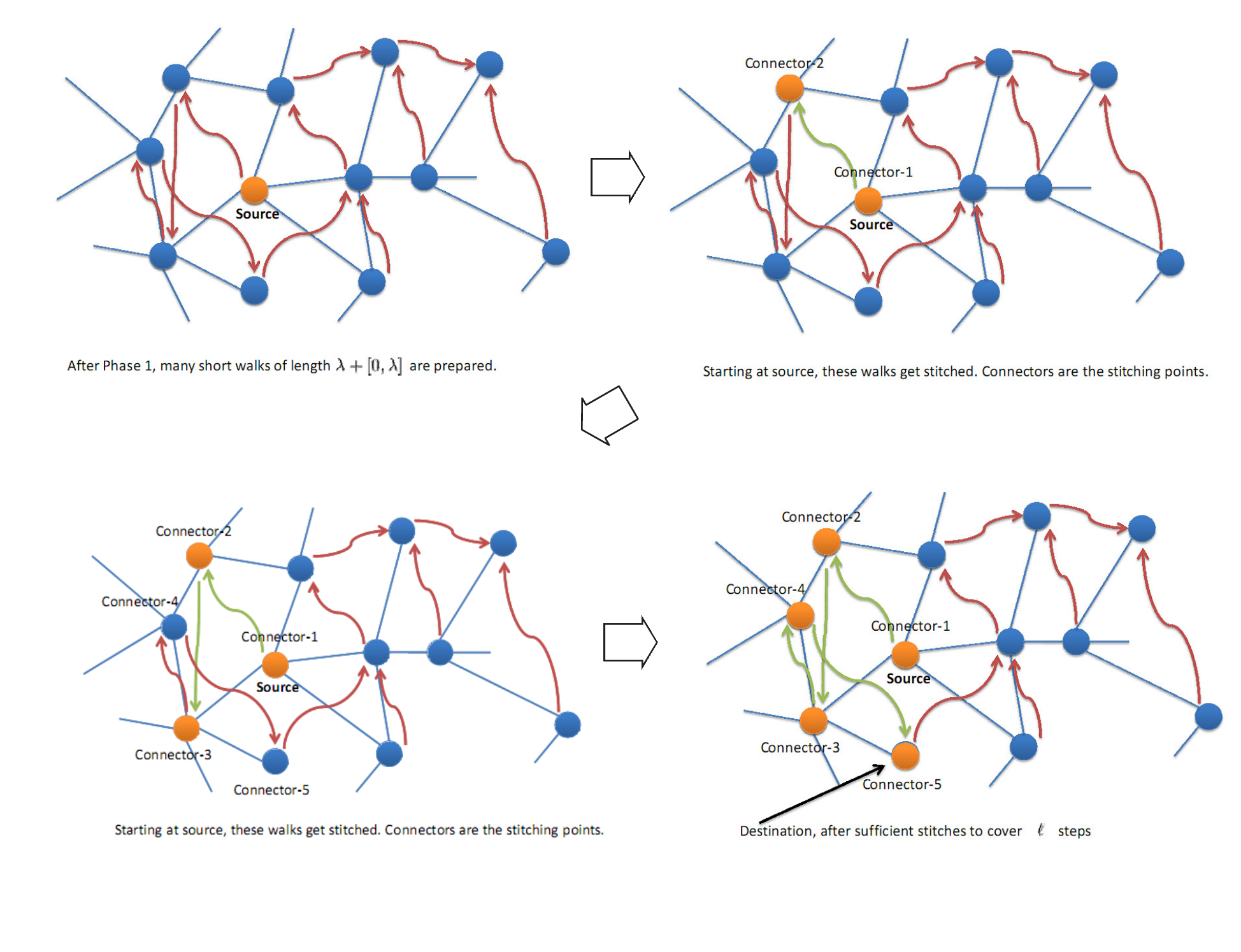}
\caption{Figure illustrating the Algorithm of stitching short walks
together.} \label{fig:connector}
\end{figure}

\iffalse
The algorithm for Phase 2 is thus the following. 
\begin{quote}
\begin{algorithmic}[1]
\STATE The source node $s$ creates a message called ``token" which contains the ID of $s$

\WHILE{Length of walk completed is at most $\tau - \lambda$}

\STATE  Let $v$ be the node that is currently holding the token.
\STATE $v$ sample one of the coupons distributed by $v$ (in Phase 1) uniformly at random. Let $C$ be the sampled coupon.
\STATE Let $v'$ be the node holding coupon $C$. 
\STATE $v$ sends the token to $v'$ (through broadcast) and $v$ deletes $C$ so that $C$ will not be sampled again.
\STATE The length of walk completed has now increased by $\lambda$.
\ENDWHILE
\STATE Walk naively (i.e., forward the token to a random neighbor) until $\tau$ steps are completed.
\STATE A node holding the token outputs the ID of $s$.
\end{algorithmic}
\end{quote}
\fi

To understand the intuition behind this algorithm, let us analyze its running time. First, we claim that Phase 1 needs $O(\lambda)$(see Lemma~\ref{lem:phase1}) rounds with high probability. Recall that, in Phase 1, each node prepares $d$ independent random walks of length $\lambda$ (approximately). We start with $d =\deg(v)$  coupons from each node $v$ at the same time, each edge in the current graph should receive two coupons in the average case. In other words, there is essentially no congestion (i.e., not too many coupons are sent through the same edge). Therefore sending out (just) $d$ coupons from each node for $\lambda$ steps will take $O(\lambda)$ rounds in expectation. This argument can be modified to show that we need $O(\lambda)$ rounds with high probability in our model (see full proof of the Lemma~\ref{lem:phase1}). Now by the definition of dynamic diameter, flooding takes $\Phi$ rounds. We show that sample a coupon can be done in $O(\Phi)$ rounds (cf. Lemma~\ref{lem:lemma2.3}) and it follows that Phase 2 needs $\tilde O(\Phi \cdot \tau/\lambda)$ rounds. Therefore, the algorithm needs $\tilde{O}(\lambda+ \Phi \cdot \tau/\lambda)$ which is $\tilde{O}(\sqrt{\tau \Phi})$ when we set $\lambda =\sqrt{\tau \Phi}$. 

The reason the above algorithm for Phase 2 is incomplete is that it is possible that $d$ coupons are not enough: We might forward the token to some node $v$ many times in
Phase 2 and all coupons distributed by $v$ in the first phase are deleted. (In other words, $v$ is chosen as a connector node many times, and all its coupons have been exhausted.)
If this happens then the stitching process cannot progress. To fix this problem, we will show (in the next section) an important property of the random walk which
says that a random walk of length $O(\tau)$ will visit each node $v$ at most $\tilde{O}(\sqrt{\tau} d)$ times (cf. Lemma \ref{lem:visit-bound}). But this bound is not enough to get the desired running time, as it does not say anything about the distribution of the connector nodes. We use the following idea to overcome it: Instead of nodes performing walks of length $\lambda$, each such walk $i$ do a walk of length $\lambda + r_i$ where $r_i$ is a random number in the range $[0, \lambda-1]$. Since the random numbers are independent for each walk, each short walks are now of a random length in the range $[\lambda, 2\lambda-1]$. This modification is needed to claim that each node will be visited as a connector only $\tilde{O}(\sqrt{\tau} d/\lambda)$ times (cf. Lemma \ref{lem:connector-bound}). This implies that each node does not have to prepare too many short walks. It turns out that this aspect requires quite a bit more work in the dynamic setting and therefore needs new ideas and techniques. The compact pseudo code is given in Algorithm \ref{alg:single-random-walk}. 

\newcommand{\mindegree}[0]{\delta}
\begin{algorithm}[H]
\caption{\sc Single-Random-Walk($s$, $\tau$)}
\label{alg:single-random-walk}
\textbf{Input:} Starting node $s$, desired walk length $\tau$ and parameter $\lambda$.\\
\textbf{Output:} Destination node of the walk outputs the ID of $s$.\\

\textbf{Phase 1: (Each node $v$ performs $d = \deg(v)$ random walks of length $\lambda + r_i$ where $r_i$ (for each $1\leq i \leq d$) is chosen independently at random in the range $[0, \lambda - 1]$. At the end of the process, there are $d$ (not necessarily distinct) nodes holding a ``coupon" containing the ID of v.)}
\begin{algorithmic}[1]
\FOR{each node $v$}
\STATE  Generate $d$ random integers in the range $[0, \lambda - 1]$, denoted by $r_1, r_2, \ldots,r_{d}$.
\STATE Construct $d$ messages containing its ID, a counter number and in addition, the $i$-th message contains the desired walk length of $\lambda + r_i$. 
We will refer to these messages created by node $v$ as ``coupons created by $v$".
\ENDFOR

\FOR{$i=1$ to $2 \lambda$}

\STATE This is the $i$-th round. Each node $v$ does the following: Consider each coupon $C$ held by $v$ which is received in the $(i - 1)$-th round. If the coupon $C$'s desired walk length is at most $i$, then $v$ keeps this coupon ($v$ is the desired destination). Else, $v$ picks a neighbor $u$ uniformly at random  for each coupon $C$ and forward $C$ to $u$.

%\COMMENT{Note that any iteration could require more than 1 round.}

\ENDFOR

\end{algorithmic}

\textbf{Phase 2: (Stitch short walks by token forwarding. Stitch $\Theta (\tau/\lambda)$ walks, each of length in $[\lambda, 2 \lambda -1]$.)}
\begin{algorithmic}[1]
\STATE The source node $s$ creates a message called ``token'' which contains the ID of $s$

\STATE The algorithm will forward the token around and keep track of a set of connectors, denoted by $C$. Initially, $C = \{s\}$

\WHILE {Length of walk completed is at most $\tau-2 \lambda$}

  \STATE Let $v$ be the node that is currently holding the token.
  
 \STATE $v$ sample one of the coupons distributed by $v$ uniformly at random (by randomly chosen one counter number from the unused set of coupons). Let $v'$ be the destination node of the sampled coupon, say $C$.

 % \STATE $v$ calls {\sc Sample-Destination($v$)} and let $v'$ be the
  %returned value (which is a destination of an unused random walk starting at $v$
  %of length between $\lambda$ and $2\lambda-1$.)

  %\IF{$v'$ = {\sc null} (all walks from $v$ have already been used up)}

  %\STATE $v$ calls {\sc Get-More-Walks($v$, $\lambda$)} (Perform $\Theta(l/\lambda)$ walks
  %of length $\lambda$ starting at $v$)

%  \STATE $v$ calls {\sc Sample-Destination($v$)} and let $v'$ be the
  %returned value

  %\ENDIF

  \STATE $v$ sends the token to $v'$ through broadcast and delete the coupon $C$.  

  \STATE $C = C \cup \{v\}$

\ENDWHILE

\STATE Walk naively until $\tau$ steps are completed (this is at
most another $2 \lambda$ steps)

\STATE A node holding the token outputs the ID of $s$

\end{algorithmic}

\end{algorithm}

\subsection{Analysis}
%\vspace{-0.09in}
We first show the correctness of the algorithm and then analyze the time complexity.
\subsubsection{Correctness}
\label{sec:correctness}
%The correctness of the algorithm {\sc Single-Random-Walk} is shown in the following lemma.
\begin{lemma}\label{lem:correctness}
The algorithm {\sc Single-Random-Walk}, with high probability, outputs a node sample that is close to  the uniform probability distribution on the vertex set $V$. 
\end{lemma}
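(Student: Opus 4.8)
The plan is to show that the token, at the end of Phase~2, is located at a node distributed exactly as the endpoint of a genuine simple random walk of length $\tau$ on the evolving graph $\mathcal{G}$ starting from $s$, and then invoke the already-established bound on the dynamic mixing time (Theorem~\ref{thm:mixtime}, together with Corollary~\ref{cor:cor1}) to conclude that this endpoint distribution is within $1/n^{\Omega(1)}$ of the uniform distribution $\mathbf{1}/n$. So the argument splits into two parts: (i) the stitching produces a faithful $\tau$-step random walk, and (ii) a faithful $\tau$-step walk ends close to uniform. Part (ii) is immediate from the earlier section, since $\tau$ is (an upper bound on) the dynamic mixing time and the stationary distribution of a $d$-regular evolving graph is uniform; the only thing to note is that ``$\tau$-length walk'' is what the algorithm actually simulates, and $\|\pi_s(\tau)-\mathbf{1}/n\|$ is then $\le 1/n^{O(1)}$.

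The heart of the proof is part (i): arguing that concatenating the Phase~1 short walks yields the correct distribution. I would proceed by induction on the number of stitches. The key structural fact is that in Phase~1 each node $v$ independently launches $d$ short walks, each walk's trajectory being an independent simple random walk on the (adversarially fixed, but oblivious) sequence $\langle G_t\rangle$; crucially, because the adversary is oblivious it commits to $\langle G_t\rangle$ in advance, so the randomness of every short walk is independent of the graph sequence and of every other short walk. When the token sits at a connector node $u$ after having simulated a walk of length $\ell$, the algorithm samples a \emph{uniformly random unused} coupon among those $u$ launched in Phase~1 and jumps the token to that coupon's endpoint. The invariant to maintain is: conditioned on the token being at $u$ after $\ell$ simulated steps, the identity of which of $u$'s coupons is selected is independent of the history of the walk so far. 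This holds because the selection is done by choosing a fresh random coupon \emph{number}, and — here is where the random-length trick and the visit bounds matter — the supply of unused coupons at $u$ never runs dry (Lemma~\ref{lem:connector-bound} guarantees $u$ is used as a connector at most $\tilde O(\sqrt\tau\, d/\lambda)$ times while it launched $\Theta(d)$ walks... more precisely the algorithm is dimensioned so the coupon pool suffices w.h.p.). Given that, the next short walk appended is a genuine independent random walk of length in $[\lambda,2\lambda-1]$, and the Markov property for the \emph{time-inhomogeneous} chain guarantees that prepending the already-simulated $\ell$-step walk and appending this fresh short walk gives exactly the distribution of an $(\ell+\text{len})$-step walk. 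The final ``walk naively'' phase contributes at most $2\lambda$ genuine steps to reach exactly $\tau$; the overshoot bookkeeping (Phase~2 stops once length exceeds $\tau-2\lambda$, and short walks have length $<2\lambda$) needs a remark to confirm total length can be brought to exactly $\tau$.

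The main obstacle I anticipate is making the independence/Markov argument rigorous in the \emph{time-inhomogeneous} setting: a short walk ``of length $\lambda+r_i$'' launched at global round $1$ traverses graphs $G_1,\dots,G_{\lambda+r_i}$, whereas when we stitch it as the $j$-th segment of the long walk we are pretending it occupies simulated steps $(j-1)\lambda$ through $j\lambda$ (roughly), which correspond to \emph{different} graphs $G_{t}$ in the sequence. Thus the stitched object is \emph{not} literally a random walk on the original round-indexed sequence $\langle G_t\rangle$; rather, one must define the ``$\tau$-length dynamic walk'' whose distribution we are matching as a walk on a suitably re-indexed (or relabeled) sequence of graphs, consistent with how Definition~\ref{def:mix-dynamic} treats $\pi_x(t)$ as the distribution on $G_t$. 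I would handle this by fixing, up front, that the target distribution is $\pi_s(\tau)$ as in Definition~\ref{def:mix-dynamic} — the walk that uses $G_1$ for step $1$, $G_2$ for step $2$, etc. — and then observing that since \emph{every} $G_t$ is $d$-regular with the \emph{same} uniform stationary distribution, and since Corollary~\ref{cor:cor1}'s contraction bound $\|p_t-\mathbf{1}/n\|\le\lambda^t\|p_0-\mathbf{1}/n\|$ holds for \emph{any} sequence of $d$-regular graphs, the precise identity of the graph used at each simulated step is irrelevant for the conclusion: any concatenation of independent simple-random-walk segments on any $d$-regular graphs of total length $\tau$ lands within $1/n^{O(1)}$ of uniform. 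This is the clean way to sidestep the re-indexing issue; it does, however, rely on regularity, which is exactly why the paper restricts to $d$-regular evolving graphs here and defers the non-regular case. I would close by noting the ``with high probability'' qualifier comes solely from the event that no connector node exhausts its coupon pool (and from Phase~1 congestion staying $O(\log^2 n)$), on which event the output distribution is \emph{exactly} that of a stitched $\tau$-step walk, hence close to uniform.
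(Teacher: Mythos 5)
Your proposal is correct and follows essentially the same route as the paper's own proof: it reduces correctness to showing that the stitched token traces a genuine $\tau$-step random walk on \emph{some} $d$-regular evolving sequence (independent Phase~1 short walks, uniform sampling of unused coupons, no delay thanks to the $O(\log^2 n)$ congestion bound), and then invokes the dynamic mixing bound (Theorem~\ref{thm:mixtime} via Corollary~\ref{cor:cor1}), which is insensitive to which $d$-regular graphs occur at each step — exactly how the paper handles the re-indexing, by viewing the stitched walk as a true walk on the repeated sequence $G_1,\ldots,G_{\tilde{\lambda}},G_1,\ldots,G_{\tilde{\lambda}},\ldots$ with the later graphs used only for communication. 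Your additional care about coupon exhaustion and time-inhomogeneity only makes explicit details the paper defers to its running-time analysis.
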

\begin{proof} (sketch)
We know (from  Theorem \ref{thm:mixtime}) that any random walk on a regular evolving graph reaches ``close"  to the uniform distribution at step $\tau$  regardless of any changes of the graph in each round as long as it is $d$-regular, non-bipartite and connected. Therefore it is sufficient to show that {\sc Single-Random-Walk} finishes with a node $v$ which is the destination of a true random walk of length $\tau$ on some appropriate dynamic graph from the source node $s$. We show this below in two steps. \\
First we show that each short walk (of length approximately $\lambda$) created in phase~1 is a true random walk on a dynamic graph sequence $G_1, G_2, \ldots, G_{\tilde{\lambda}}$ ($\tilde{\lambda}$ is some approximate value of $\lambda$). This means that in every step $t$, each walk moves to some random neighbor from the current node on the graph $G_t$ and each walk is independent of others. The proof of the Lemma~\ref{lem:phase1}  shows that w.h.p there is at most $O(\log^2 n)$ bits congestion  in any edge in any round in Phase~1. Since we consider {\em CONGEST}($\log^2 n$) model, at each round $O(\log^2 n)$ bits can be sent through each edge from each direction. Hence effectively there will be no delay in Phase~1 and all walks can extend their length from $i$ to $i+1$ in one round. Clearly each walk is independent of others as every node sends messages independently in parallel. This proves that each short walk (of a random length in the range $[\lambda, 2\lambda-1]$) is a true random walk on the graph $G_1, G_2, \ldots, G_{\tilde{\lambda}}$. \\
In Phase~2, we stitch  short walks to get a long walk of length $\tau$. Therefore, the $\tau$-length random walk is not from the dynamic graph sequence $G_1, G_2, \ldots, G_{\tau}$; rather it is from the sequence:\\ $G_1, G_2, \ldots, G_{\tilde{\lambda}}, G_1, G_2, \ldots, G_{\tilde{\lambda}}, \ldots,$ ($\tau/\lambda$ times approximately). The stitching part is done on the graph sequence from $G_{\tilde{\lambda} +1}, G_{\tilde{\lambda} +2}, \ldots$ onwards. This does not affect the distribution of probability on the vertex set in each step, since the graph sequence from $G_{\tilde{\lambda} +1}, G_{\tilde{\lambda} +2}, \ldots$ is used only for communication. 
Also note that since we define $\tau$ to be the maximum of any static graph $G_t$'s  mixing time, it clearly reaches close to the uniform distribution after $\tau$ steps of walk
 in the graph sequence
 $G_1, G_2, \ldots, G_{\tilde{\lambda}}, G_1, G_2, \ldots, G_{\tilde{\lambda}}, \ldots,$ ($\tau/\lambda$ times approximately).\\
 Finally, when we stitch at a node $v$, we are sampling a coupon (short walk) uniformly at random among many coupons (and therefore, short walks starting at $v$) distributed by $v$. It is easy to see that this stitches  short random walks  independently and hence gives a true random
walk of longer length.
%One can do this easily by the following way. In the beginning of phase~1, each node $v$ assign a number with their $\eta d$ coupons (short walks). Therefore, each node $v$ has $\eta d$ (we choose $\eta = 1$ later in the analysis) coupons containing ID of $v$ plus a coupon number, distributed throughout the network. At each time of sampling a coupon ($C$), node $v$ choose a random coupon number (from the unused set of coupons) and inform the node holding the coupon $C$ as a stitching point through flooding.
Thus it follows that the algorithm {\sc Single-Random-Walk} returns a destination node of a $\tau$-length random walk (starting from $s$) on some evolving graph.     
\end{proof}

\subsubsection{Time Analysis}
%We show that the algorithm {\sc Single-Random-Walk} runs  with high probability in $\tilde{O}(\sqrt{\tau \Phi})$ rounds where $\tau$ and $\Phi$ are respectively dynamic mixing time and dynamic diameter of the network. 
%\begin{theorem}\label{thm:maintheorem}
%The algorithm {\sc Single-Random-walk} solves the Single random walk problem and with high probability finishes in $\tilde{O}(\sqrt{\tau \Phi})$ rounds. 
%\end{theorem}
We show the running time of algorithm {\sc Single-Random-Walk}  (cf. Theorem \ref{thm:maintheorem}) using the following lemmas.  
\begin{lemma}\label{lem:phase1}
Phase 1 finishes in $O(\lambda)$ rounds with high probability.  
\end{lemma}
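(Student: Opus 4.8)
The plan is to show that Phase 1 completes in $O(\lambda)$ rounds with high probability by controlling the congestion on every edge in every round. Recall that Phase 1 runs for exactly $2\lambda$ rounds (the loop index $i$ runs from $1$ to $2\lambda$), and a coupon is forwarded in round $i$ only if its desired walk length (which lies in $[\lambda, 2\lambda-1]$) exceeds $i$. So the only way Phase 1 could take longer than $O(\lambda)$ rounds is if coupons get delayed by congestion: in the CONGEST$(\log^2 n)$ model, an edge can carry only $O(\log^2 n)$ bits per direction per round, and each coupon is an $O(\log n)$-bit message, so an edge can forward only $O(\log n)$ coupons per direction per round. Thus it suffices to prove that, with high probability, in every round and across every edge, at most $O(\log n)$ coupons wish to traverse that edge in a given direction; then no coupon is ever delayed and all $dn$ coupons finish their (at most $2\lambda-1$)-step walks within $2\lambda = O(\lambda)$ rounds.

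The key step is the congestion bound. First I would observe that each node $v$ starts with exactly $d$ coupons, so there are $dn$ coupons in total, and since the graph is $d$-regular there are $dn/2$ edges; hence the \emph{expected} number of coupons at any node at any time is $d$ (the uniform distribution $\mathbf{1}/n$ is stationary for a $d$-regular graph, and each of the $dn$ coupons performs an independent random walk, so the expected number occupying node $u$ at any step is $dn \cdot (1/n) = d$). Actually, a cleaner route: fix an edge $e=(u,w)$ and a round $i$; the number of coupons that traverse $e$ from $u$ to $w$ in round $i$ equals the number of coupons sitting at $u$ at the start of round $i$ that independently pick $w$ among $u$'s $d$ neighbors. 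Conditioned on the number $N_u$ of coupons at $u$, each such coupon goes to $w$ with probability $1/d$ independently, so the count is $\mathrm{Bin}(N_u, 1/d)$. I would then bound $N_u$: since the coupons move independently and the marginal of each coupon's position is exactly the distribution of a random walk (which, starting from a point mass, need not be uniform yet, but the total over all starting points is uniform), the expected value of $N_u$ is exactly $d$. By a Chernoff bound over the $dn$ independent coupon indicators, $N_u = O(d + \log n)$ for all $u$ and all rounds simultaneously with high probability (union bound over $n$ nodes and $2\lambda = \mathrm{poly}(n)$ rounds). Then conditioned on $N_u = O(d+\log n)$, the $\mathrm{Bin}(N_u,1/d)$ count has mean $O(1 + (\log n)/d) = O(\log n)$, so another Chernoff bound gives that it is $O(\log n)$ with high probability, and a union bound over all $dn/2$ edges, both directions, and all $2\lambda$ rounds finishes it.

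Assembling: with high probability every edge carries $O(\log n)$ coupons per direction per round, which is $O(\log^2 n)$ bits, matching the CONGEST bandwidth $B = O(\log^2 n)$. Hence no coupon is ever queued, every coupon advances one step per round, and since every coupon's desired length is at most $2\lambda - 1$, all coupons reach their destinations by round $2\lambda$. Therefore Phase 1 finishes in $2\lambda = O(\lambda)$ rounds with high probability.

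The main obstacle I anticipate is making the ``expected number of coupons at a node is $d$'' argument fully rigorous when the individual random walks have not yet mixed — one must be careful that this is an \emph{exact} statement about expectations (linearity over the $dn$ coupons, each with its own deterministic starting node, using that the sum of the point-mass-initialized distributions over all $n$ possible starts is the all-ones vector, and each node starts $d$ of them) rather than something requiring mixing — and then ensuring the Chernoff/union-bound parameters are chosen so the failure probability is $n^{-\Omega(1)}$ even after union-bounding over $O(\lambda) = n^{O(1)}$ rounds and $O(n)$ edges. The independence of the coupons (each node forwards each of its coupons to an independently chosen random neighbor) is what licenses the Chernoff bounds, and that independence holds against an oblivious adversary precisely because the adversary fixes the graph sequence in advance.
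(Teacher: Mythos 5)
Your proposal is correct and follows essentially the same route as the paper's own proof: argue that by regularity (double stochasticity) the expected coupon load per node stays at $d$ throughout, apply Chernoff and union bounds to get $O(\log n)$ coupons, i.e.\ $O(\log^2 n)$ bits, per edge per round, and conclude that within the CONGEST$(\log^2 n)$ bandwidth no coupon is ever delayed, so all walks finish in $2\lambda = O(\lambda)$ rounds. Your version is in fact somewhat more careful than the paper's (the exact linearity-of-expectation argument for node occupancy and the explicit two-step Chernoff via $N_u$, where the paper bounds the per-edge expectation directly), but these are minor variations of the same argument.
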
 
\begin{proof}
In phase 1, each node $v$ performs $d$ walks of length $\lambda$. Initially all the node starts with $d$ coupons (or messages) and each coupon takes a random walk. We prove that after any given number of steps $j$, the expected number of coupons at node any $v$ is still $d$. Though the edges are changes round to round, but at any round, every node has $d$-neighbors connected with it. So at each step every node can send (as well as receive) $d$ messages. Now the number of messages we started at any node $v$ is proportional to its degree and stationary distribution is uniform here.Therefore, in expectation the number of messages at any node remains same. Thus in expectation the number of messages, say $X$ that go through an edge in any round is at most $2$ (from both end points). Using Chernoff's bound we get ($\Pr[X\geq 4 \log n] \leq 2^{-4\log n} = n^{-4}$). It follows easily from there that the number of messages can go through any edge in any round is at most $4 \log n$ with high probability. Hence there will be at most $O( \log^2 n)$ bits w.h.p. in any edge per round . Since we consider {\em CONGEST}($\log^2 n$) model, so there will be delay due to congestion. Hence, phase 1 finishes in $O(\lambda)$ rounds with high probability.     
\end{proof}
\begin{lemma}\label{lem:lemma2.3}
Sample-Coupon always finishes within $O(\Phi)$ rounds.
\end{lemma}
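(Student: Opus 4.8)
The claim to prove is Lemma~\ref{lem:lemma2.3}: \textsc{Sample-Coupon} always finishes within $O(\Phi)$ rounds. Here $\Phi$ is the dynamic diameter, defined as the worst-case number of rounds needed to broadcast a piece of information from any node to all nodes, and \textsc{Sample-Coupon} is the subroutine by which the node $v$ currently holding the token picks one of the (unused) coupons it distributed in Phase~1 and forwards the token to the destination node of that coupon.

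\medskip

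\noindent\textbf{Plan.} The proof is essentially a direct unwinding of the definition of dynamic diameter, so I would keep it short. First I would recall exactly what \textsc{Sample-Coupon} has to accomplish: (i) node $v$ locally chooses a uniformly random unused coupon number $c$ (this is free, as local computation costs no rounds); (ii) $v$ must learn which node currently holds the coupon labeled $(\mathrm{ID}(v), c)$; and (iii) the token must actually reach that node. Step (i) takes $0$ rounds. For step (ii), I would observe that $v$ can broadcast the query ``who holds coupon $(\mathrm{ID}(v),c)$?'' to the whole network; by the definition of $\Phi$, every node receives this query within $\Phi$ rounds. For step (iii), the unique node $v'$ holding that coupon can then broadcast its own ID (tagged with the coupon label) back; again this reaches $v$ within $\Phi$ rounds, after which $v$ sends the token toward $v'$ along a shortest dynamic path, which also takes at most $\Phi$ rounds. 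Summing, the whole procedure finishes in at most $3\Phi = O(\Phi)$ rounds. (One can fold steps (ii) and (iii) together: $v$ floods the request, and simultaneously the holder $v'$ floods back a "token-carrying" message toward $v$, so really $2\Phi$ suffices, but $O(\Phi)$ is all that is claimed.)

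\medskip

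\noindent\textbf{Handling congestion.} The one point requiring a sentence of care is that several things may be flooding the network at once, or that the flooded messages might collide. I would note that the message being disseminated — a coupon label consisting of an $O(\log n)$-bit ID and an $O(\log n)$-bit counter — fits in $O(\log n)$ bits, well within the $B = O(\log^2 n)$ bound of the CONGEST model, and that since the token is held by exactly one node at a time, only one \textsc{Sample-Coupon} invocation is active at any moment during Phase~2. Hence the flooding in step (ii)/(iii) does not compete with any other flooding and is not delayed by congestion, so the bound $\Phi$ per flood is genuinely realized. (Strictly, the Phase~1 coupon walks have already terminated before Phase~2 begins, so there is no interference from Phase~1 either.)

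\medskip

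\noindent\textbf{Main obstacle.} There is no real obstacle here — the lemma is a definitional consequence once one spells out that ``flooding'' is exactly the operation whose cost $\Phi$ measures, and that each \textsc{Sample-Coupon} call consists of a constant number of floods plus a shortest-path token delivery (itself another flood-like operation bounded by $\Phi$). The only thing to be slightly careful about is to make sure the claimed ``$O(\Phi)$'' is robust to the small bookkeeping messages and to the fact that the token must travel \emph{to} the coupon holder rather than merely a bit of information reaching $v$; bundling the return flood as a message the token rides along resolves this. I would therefore write the proof as: (a) \textsc{Sample-Coupon} $=$ local random choice ($0$ rounds) $+$ one broadcast of the query ($\le \Phi$ rounds) $+$ one broadcast carrying the token back to (equivalently, toward) the holder ($\le \Phi$ rounds); (b) each broadcast costs $\le \Phi$ by definition of dynamic diameter; (c) messages are $O(\log n)$ bits and only one invocation runs at a time, so no congestion delay; hence the total is $O(\Phi)$. \qed
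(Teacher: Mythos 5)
Your proof is correct and follows essentially the same route as the paper's: the lemma is a direct consequence of the definition of the dynamic diameter $\Phi$, since each invocation of \textsc{Sample-Coupon} amounts to a constant number of flooding operations, each costing at most $\Phi$ rounds. Your version merely spells out the decomposition (local choice, query flood, token delivery) and the congestion/message-size check in more detail than the paper's one-line argument.
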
 
\begin{proof}
The proof follows directly from the definition of dynamic diameter $\Phi$. Since one can sample-coupon by at most flooding time and $\Phi$ is maximum of all flooding time of all vertex.   
\end{proof}

We note that the adversary can force the random walk to visit any particular vertex several times. Then we need many short walks from each vertex which increases the round complexity.  We show the following key technical lemma (Lemma~\ref{lem:visit-bound}) that bounds the number of visits to each node in a random walk of length $\ell$.  
In a $d$-regular dynamic graph, we show that no node is visited more than $\tilde{O}(\sqrt{\tau} d/\lambda)$ times as a connector node of a $\tau$-length random walk. For this we need a technical result on  random walks that bounds the number of times a node will be visited in a $\ell$-length (where $\ell = O(\tau)$) random walk. Consider a simple random walk on a connected $d$-regular evolving graphs on n vertices. Let $N_x^t (y)$ denote the number of visits to vertex $y$ by time $t$, given the walk started at vertex $x$. 
Now, consider $k$ walks, each of length $\ell$, starting from (not necessary distinct) nodes $x_1, x_2, \ldots ,x_k$. 

\begin{lemma}\label{lem:visit-bound}
$(${\sc Random Walk Visits Lemma}$)$. For any nodes $x_1, x_2, \ldots, x_k$, \[\Pr\bigl(\exists y\ s.t.\
\sum_{i=1}^k N_\ell^{x_i}(y) \geq 32 \ d \sqrt{k\ell+1}\log n+k\bigr) \leq 1/n\,.\]
\end{lemma}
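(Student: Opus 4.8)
The plan is to adapt the static-network argument (from \cite{DasSarmaNPT10}, ultimately relying on the technical result of Lyons \cite{Lyons}) to the $d$-regular evolving graph setting, and then take a union bound over the vertices. First I would fix a target vertex $y$ and bound the single-source quantity $N_\ell^{x}(y)$, the number of visits to $y$ in an $\ell$-length walk started at $x$. The key observation is that, because each $G_t$ is $d$-regular, the uniform distribution $\mathbf{1}/n$ is stationary for \emph{every} step of the walk, so the ``visit'' structure behaves like the reversible/regular static case even though the graph changes; in particular the expected number of visits to $y$ in $\ell$ steps, starting from stationarity, is exactly $(\ell+1)/n$, and starting from an arbitrary $x$ one gets a bound of the form $O(d\sqrt{\ell})$ on $\mathbb{E}[N_\ell^{x}(y)]$ by the standard argument (comparing the walk from $x$ to the walk from stationarity and using that return probabilities to $y$ over $t$ steps are controlled). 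The crucial step here is to verify that the monotonicity/contraction facts already proved in the excerpt (Lemma~\ref{lem:lemma2}, Corollary~\ref{cor:cor1}) let Lyons' lemma go through on the evolving graph: this is the one place where genuinely new work beyond the static proof is needed, and it is the step I expect to be the main obstacle.

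Next, having controlled the \emph{expectation}, I would obtain a high-probability tail bound. The standard trick is that the number of visits to $y$ has a geometric-like tail: once the walk is at $y$, the probability that it ever returns to $y$ within the remaining steps is bounded away from $1$ (again using $d$-regularity, so that after leaving $y$ the walk mixes toward uniform, from which the chance of hitting $y$ in the next $O(\tau)$ steps is at most, say, a constant or $1 - 1/\text{poly}$), hence $N_\ell^{x}(y)$ is stochastically dominated by a sum of geometric random variables, and one gets $\Pr[N_\ell^{x}(y) \geq c\,d\sqrt{\ell}\log n + 1] \leq n^{-3}$ or so for an appropriate constant $c$. I would then handle the $k$ walks: by linearity the expectation of $\sum_{i=1}^k N_\ell^{x_i}(y)$ is $O(d\sqrt{\ell}\,k)$, but the stated bound $32\,d\sqrt{k\ell+1}\log n + k$ is much sharper, so the right move is to think of the $k$ length-$\ell$ walks as (a prefix of) a single walk of length $k\ell$: concatenating them, the total number of visits to $y$ across all $k$ walks is at most the number of visits to $y$ in a single walk of length $k\ell$ (plus the at most $k$ ``free'' visits at the start of each walk, which accounts for the additive $+k$), and then apply the single-walk tail bound with $\ell$ replaced by $k\ell$, giving $O(d\sqrt{k\ell}\log n)$.

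Finally I would take a union bound over all $n$ choices of $y$: since each $\Pr[\cdot]$ is at most $n^{-3}$ (with constants chosen so that the threshold is $32\,d\sqrt{k\ell+1}\log n + k$), the probability that \emph{some} $y$ is visited too often is at most $n^{-2} \leq 1/n$, which is the claimed bound. The routine parts are the Chernoff/geometric tail estimate and bookkeeping the constant $32$; the part that deserves care, and where I would spend most of the effort, is justifying that Lyons' bound on visit counts — originally a statement about a fixed reversible chain — remains valid when the transition matrix changes every round but always has $\mathbf{1}/n$ as its stationary distribution and always satisfies the spectral contraction of Lemma~\ref{lem:lemma2}. If that transfer turns out to be delicate, a fallback is to prove the expectation bound directly from Corollary~\ref{cor:cor1} by writing $\mathbb{E}[N_\ell^{x}(y)] = \sum_{t=0}^{\ell} (\pi_x(t))_y$ and splitting the sum at $t \approx \tau$: the tail $t > \tau$ contributes $O(\ell/n)$ since $\pi_x(t)$ is within $n^{-\Omega(1)}$ of uniform, and the head $t \leq \tau$ contributes $O(\tau \cdot \max_t (\pi_x(t))_y)$, which one bounds crudely, then optimizes — though getting the clean $\sqrt{\ell}$ dependence really does seem to require the Lyons-type argument rather than this naive split.
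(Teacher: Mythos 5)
Your outline has the same skeleton as the paper's proof: (i) a first-moment bound $\mathbb{E}[N_\ell^x(y)] = O(d\sqrt{\ell+1})$ obtained by transferring Lyons' return-probability estimate to the $d$-regular evolving graph (this is the paper's Proposition~\ref{proposition:first-moment} via Lemma~\ref{lem:lyons}, and indeed the bulk of the paper's effort is exactly the transfer you flag as the main obstacle, carried out through the doubly-stochastic norm monotonicity of Claim~\ref{clm:claim-norm} applied to the lazy walk $Q=(I+P)/2$); (ii) a reduction of the $k$ walks to a single walk of length $k\ell$ by concatenating, for each walk, the segment between its first and last visit to $y$, which is precisely where the additive $+k$ comes from (Lemma~\ref{lemma:k walks one node bound}); and (iii) a union bound over the $n$ vertices.

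The one step that would fail as you stated it is the high-probability amplification. You argue that $N_\ell^x(y)$ is dominated by a sum of geometric variables because, from $y$, ``the probability that the walk ever returns to $y$ within the remaining steps is bounded away from $1$.'' That premise is false in general: on a (static or evolving) cycle with $\ell$ of order up to $n^2$, the return probability within the remaining steps is $1-o(1)$ and the typical number of visits is $\Theta(\sqrt{\ell})$; per-visit geometric domination with a constant parameter would force $N_\ell^x(y)=O(\log n)$ w.h.p., contradicting the very expectation bound you just established. (Likewise, from near-uniformity the chance of hitting a fixed $y$ within $O(\tau)$ further steps need not be small.) The correct amplification — the paper's Lemma~\ref{lemma:whp one walk one node bound} — works in blocks of $r^* = 32\,d\sqrt{t+1}$ visits rather than single visits: Markov's inequality applied to the first-moment bound gives $\Pr\bigl(N^x_t(y)\ge r^*\bigr)\le 1/4$, i.e.\ the time $L^x_{r^*}(y)$ of the $r^*$-th visit exceeds $t$ with probability at least $3/4$; the event $N^x_t(y)\ge r^*\log n$ forces each of $\log n$ consecutive independent segments (each starting at $y$ by the strong Markov property and each required to accumulate $r^*$ fresh visits) to last at most $t$ steps, which has probability at most $(1/4)^{\log n}=n^{-2}$. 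With that repair — and noting that the concatenated $k\ell$-length walk lives on a reindexed graph sequence, which is harmless because the single-walk bound holds for any $d$-regular evolving sequence — your argument coincides with the paper's.
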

To prove the above lemma we need to go through some crucial results. We start with the bound of the first moment of the number of visits at each node by each walk.
\begin{proposition}\label{proposition:first-moment} For
any node $x$, node $y$ and $t = O(\tau)$,
\begin{equation}
\e[N_t^x(y)] \le 8 \ d \sqrt{t+1}
\end{equation}
\end{proposition}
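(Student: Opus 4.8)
The plan is to bound $\e[N_t^x(y)] = \sum_{s=0}^{t} \Pr(\text{walk is at } y \text{ at step } s)$ by splitting the sum into "early" steps $s \le C$ (for a threshold $C$ to be chosen around $\sqrt{t+1}$) and "late" steps $s > C$. For the late steps, I want to invoke the spectral decay: by Corollary~\ref{cor:cor1}, after $s$ steps the distribution $p_s$ satisfies $\|p_s - \tfrac{\mathbf{1}}{n}\| \le \lambda^s \|p_0 - \tfrac{\mathbf{1}}{n}\|$, and in particular each coordinate $p_s(y) \le \tfrac1n + \lambda^s$. Summing $\tfrac1n$ over at most $t = O(\tau)$ terms contributes $O(\tau/n)$, which (using $\tau = \tilde O(n^2)$ or more simply $\tau/n = O(\sqrt{t+1}\cdot d)$ when $d\ge 1$ and $\tau \le n^2$-ish) is absorbed into the claimed bound; and $\sum_{s\ge 0}\lambda^s = \tfrac{1}{1-\lambda}$ which, up to the $\log n$ in the mixing time, is comparable to $\tau$. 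Actually this crude bound is likely too lossy, so the cleaner route is the one I describe next.

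The cleaner approach, which I expect is the intended one, is a self-bounding / renewal-style argument on the number of visits during a single near-mixing window. Partition the $t+1$ steps into $\lceil (t+1)/\tau \rceil$ blocks of length $\tau$ each. Within one block of length $\tau$ starting at any step, the key observation is: regardless of where the walk is at the start of the block, the expected number of visits to $y$ during that block is $O(\sqrt{\tau}\, d)$ — no wait, that's not obviously true either. Let me instead follow the standard static-graph argument (as in \cite{DasSarmaNPT10}): for the walk started at $x$, write $q_s = \Pr(X_s = y)$. One shows $\sum_{s=0}^t q_s \le O(d\sqrt{t+1})$ by a Cauchy--Schwarz / $L_2$-contraction argument: $q_s = (p_0 A^{(1)}A^{(2)}\cdots A^{(s)})_y$ where $A^{(i)}$ is the transition matrix of $G_i$, and by Lemma~\ref{lem:lemma2} the "non-uniform part" contracts by $\lambda$ each step. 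The quantity $\sum_s q_s$ is essentially the expected number of visits, and the trick is to bound $\e[N_t^x(y)]$ by relating it to $\e[N_t^y(y)]$ (the walk from $y$ itself), since a visit to $y$ "restarts" the walk at $y$ on a fresh sequence of graphs; then bound the return probabilities $\Pr(X_s = y \mid X_0 = y)$, which for a $d$-regular graph and a one-step move is at most... here I'd use that from $y$, the probability of being back at $y$ after two steps is at most $1/d + (\text{off-diagonal})$, giving the $\sqrt{t}$ scaling after summing, with the factor $d$ coming from the regularity normalization of the walk.

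The main obstacle — and the place that genuinely needs the "Lyons trick" mentioned in the excerpt (the technical result of Lyons \cite{Lyons} adapted to evolving graphs) — is that on a dynamic graph the walk from $x$ and the walk from $y$ do not see the same graph sequence when they arrive at $y$ at different times, so the usual renewal identity $\e[N_t^x(y)] \le \e[N_t^y(y)]$ is not immediate; one must argue a monotonicity/domination statement to the effect that delaying the start of the walk to the graph $G_{s+1}, G_{s+2},\dots$ can only decrease (or not too much increase) the visit count, which is where the regularity and the structure of the stationary distribution (uniform) are used. I would isolate this as the one nontrivial lemma, prove the domination via Lyons' technique applied to the evolving sequence, and then the rest is the Cauchy--Schwarz bookkeeping: $\e[N_t^y(y)] = \sum_{s=0}^t \Pr(X_s=y\mid X_0 = y) \le \sum_{s=0}^t\bigl(\tfrac1n + \lambda^{?}\bigr)$ refined via the $L_2$ bound to $O(d\sqrt{t+1})$. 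Once the domination lemma is in hand, plugging in $t = O(\tau)$ and the constant $8$ is routine constant-chasing.
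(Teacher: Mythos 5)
There is a genuine gap: neither of your two routes supplies the one ingredient the statement actually rests on, namely a \emph{uniform polynomial} decay of the time-inhomogeneous transition probabilities. The paper's proof is a one-liner once that is available: writing $P^i(x,y)$ for the $i$-step probability of the (lazy) walk through $G_1,\dots,G_i$, one has $\e[N_t^x(y)]=\sum_{i=0}^t P^i(x,y)\le 4d\sum_{i=0}^t (i+1)^{-1/2}\le 8d\sqrt{t+1}$, where the per-step bound $P^i(x,y)\le 4d/\sqrt{i+1}$ is inequality~(\ref{one_sided_decay}), i.e.\ the evolving-graph adaptation of Lyons' lemma (Lemma~\ref{lem:lyons}, proved via the Dirichlet-form/sup-norm argument of Claims~\ref{lem:lavin}--\ref{clm:claim-norm}). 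Your first route replaces this by the $L_2$-contraction of Lemma~\ref{lem:lemma2}/Corollary~\ref{cor:cor1}, giving $p_s(y)\le \frac1n+\lambda^s$; but $\lambda$ can be as large as $1-\Theta(1/n^2)$, so $\sum_{s\le t}\lambda^s=\Theta(\min\{t,\frac{1}{1-\lambda}\})$ can be of order $n^2$, which is not $O(d\sqrt{t+1})$ for small $d$ and $t=O(\tau)$ --- exponential decay at rate $\lambda$ simply cannot produce the $\sqrt{t}$ scaling; you noted this is lossy, but it is not merely lossy in constants, it fails to prove the statement.

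Your second route introduces a reduction to return probabilities, $\e[N_t^x(y)]\lesssim \e[N_t^y(y)]$, which is both unnecessary and unproven. It is unnecessary because the Lyons-type bound is already an \emph{off-diagonal} bound valid for every pair $(x,y)$, so no renewal/domination step is needed for the expectation; and it is unproven because, as you yourself observe, in a time-inhomogeneous chain a walk that reaches $y$ at time $s$ continues on the shifted sequence $G_{s+1},G_{s+2},\dots$, and you only defer the needed domination to ``Lyons' technique'' --- but Lyons' lemma is not a domination statement, it is precisely the $O(d/\sqrt{k+1})$ heat-kernel decay whose evolving-graph proof (doubly stochastic products, sup-norm monotonicity, summing the Dirichlet-form inequality along the sequence) is the real content here. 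Moreover, even granting the reduction, your bound on returns (``two-step return probability at most $1/d$ plus off-diagonal'') does not yield $\Pr(X_s=y\mid X_0=y)\le O(d/\sqrt{s+1})$; some genuine argument (spectral in the static case, the Lyons adaptation here) is required. The stitching/restart idea you gesture at does appear in the paper, but only later and in a different role: Lemma~\ref{lemma:k walks one node bound} uses a concrete coupling of realized walks to prove a high-probability statement, not this expectation bound. To repair your proposal, you would state and prove the evolving-graph version of the pointwise decay $P^i(x,y)\le 4d/\sqrt{i+1}$ for the lazy walk and then sum over $i\le t$.
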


To prove the above proposition, let $P$ denote the transition probability matrix of such a random walk and let $\pi$ denote the stationary distribution of the walk. 

The basic bound we use is the estimate from Lyons lemma (see Lemma~3.4 in \cite{Lyons}). We show below that the Lyons lemma also holds for a regular evolving graph. %For simplicity we start to show it for any $d$-regular graphs. The essential result required for regular evolving graphs is shown inside the proof.  
\begin{lemma}\label{lem:lyons}
Let $Q$ denote the transition probability matrix of a $d$-regular evolving graph, with self-loop probability $\alpha > 0$. Let $c= \min{\{\pi(x) Q(x,y) : x \neq y \mbox{ and }Q(x,y)>0\}} > 0\,$. Note that here $c = \frac{1}{n d}$, as $\pi$ is uniform distribution. Then for any vertex $x$ and all $k > 0$, a positive integer (denoting time),

%\begin{equation}
%\label{kernel_decay} 
$$\bigl|\frac{Q^k(x,x)}{\pi(x)} - 1\bigr| \le
\min\Bigl\{\frac{1}{\alpha c \sqrt{k+1}}, \frac{1}{2\alpha^2 c^2(k+1)} \Bigr\}\,.$$
%\end{equation}
\end{lemma}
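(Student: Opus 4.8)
The plan is to follow the structure of Lyons' original proof (Lemma~3.4 in \cite{Lyons}) and check that every ingredient survives the passage from a fixed reversible chain to a time-inhomogeneous sequence of $d$-regular transition matrices. The key observation that makes this possible is that, although the graph changes each round, each individual $Q_t$ is the transition matrix of a simple random walk (with self-loop probability $\alpha$) on a $d$-regular graph, so each $Q_t$ is \emph{symmetric} (doubly stochastic and in fact $Q_t = Q_t^T$), shares the \emph{same} uniform stationary distribution $\pi = \mathbf{1}/n$, and is reversible with respect to $\pi$. Thus the product $Q^k := Q_1 Q_2 \cdots Q_k$ is a product of symmetric matrices all fixing the same vector $\mathbf{1}/\sqrt n$. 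First I would set up the inner product $\langle f, g\rangle_\pi = \sum_x \pi(x) f(x) g(x)$ and note that with respect to it each $Q_t$ is self-adjoint with all eigenvalues in $[-1,1]$, and in fact in $[-1+2\alpha\cdot(\text{something}),1]$ — more importantly, writing $Q_t = (1-\alpha)I + \alpha R_t$ where $R_t$ is the simple-walk matrix only matters for controlling the negative part; the self-loop parameter $\alpha$ enters exactly as in Lyons' lemma.

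Next I would express $Q^k(x,x)/\pi(x) - 1$ spectrally-per-step. Let $\delta_x$ be the point mass at $x$ and $u = \delta_x/\pi(x) - \mathbf{1} = \delta_x \cdot n - \mathbf 1$, so that $\langle \mathbf 1, u\rangle_\pi = 0$ and $Q^k(x,x)/\pi(x) - 1 = \langle u, Q_1\cdots Q_k \, \delta_x/\pi(x)\rangle_\pi \cdot(\ldots)$ — more cleanly, $\frac{Q^k(x,x)}{\pi(x)} = \frac{1}{\pi(x)}\langle \delta_x, Q_1\cdots Q_k\,\delta_x\rangle$, and after subtracting the rank-one projection onto $\mathbf 1$ one is left with $\langle \phi, Q_1\cdots Q_k\,\phi\rangle_\pi$ where $\phi = \delta_x/\pi(x) - \mathbf 1$ lives in the orthogonal complement of $\mathbf 1$. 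The two bounds then come from two different ways of controlling this quantity: (i) the $\frac{1}{\alpha c\sqrt{k+1}}$ bound follows from a \emph{return-probability / Cauchy--Schwarz} argument — bounding $Q^{k}(x,x)$ by a telescoping/monotonicity estimate on $\|Q_1\cdots Q_j \phi\|_\pi$, which is non-increasing in $j$ because each $Q_t$ is a contraction on $\mathbf 1^\perp$, combined with the Carne--Varopoulos-type or heat-kernel decay that the self-loop probability forces; (ii) the $\frac{1}{2\alpha^2 c^2(k+1)}$ bound is the refined version obtained by squaring, i.e. writing the return probability at time $k$ in terms of the $\ell^2$ norm of the density at time $\lceil k/2\rceil$ via $Q^k(x,x) = \langle Q^{\lceil k/2\rceil}\delta_x, Q^{\lfloor k/2\rfloor}\delta_x\rangle \le \|Q^{\lfloor k/2\rfloor}\delta_x\|_\pi^2 / \pi(x)$ using symmetry and reversibility with the common $\pi$, and then applying (i) at the half-time. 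The constant $c = 1/(nd)$ enters because a single non-lazy step out of $x$ puts mass at least $\alpha \cdot \frac1d$ on each neighbor and $\pi(x) = 1/n$, giving $\pi(x)Q(x,y) \ge \alpha/(nd)$; we use $c = \min\{\pi(x)Q(x,y)\}$ as stated, and here $\pi$ is uniform so this minimum is $\Theta(1/(nd))$.

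The step I expect to be the main obstacle is making the \emph{monotonicity / contraction} argument genuinely time-inhomogeneous: in Lyons' static proof one uses the spectral decomposition of the single matrix $Q$ and sums a geometric-type series in the eigenvalues, but with a product $Q_1\cdots Q_k$ of distinct (though simultaneously $\mathbf 1$-fixing and self-adjoint) matrices there is no common eigenbasis. The fix is to avoid eigen-decomposition entirely and work only with the operator-norm/contraction facts: each $Q_t$ restricted to $\mathbf 1^{\perp}$ has $\|Q_t\|_{\pi} \le 1$, and more quantitatively one shows $\|Q_1\cdots Q_k\,\phi\|_\pi^2$ is a non-increasing, convex-ish sequence whose decay rate is governed by $\alpha$ and $c$ through the one-step inequality $\|Q_t g\|_\pi^2 \le \|g\|_\pi^2 - \alpha c \,(\text{Dirichlet form of }g)$-type estimates that hold uniformly over $t$ because every $G_t$ is connected $d$-regular. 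This is exactly the place where Corollary~\ref{cor:cor1} and the spirit of Lemma~\ref{lem:lemma2} are invoked — the contraction is uniform over the sequence. Once this uniform one-step decay is in hand, the two claimed bounds follow by the same summation Lyons performs, with $\lambda$-powers replaced by the uniform contraction estimate; I would then simply quote the arithmetic from \cite{Lyons} rather than redo it.
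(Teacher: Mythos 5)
Your plan follows essentially the same route as the paper's proof: adapt Lyons' Nash-inequality argument to the evolving setting by exploiting that every $Q_t$ is symmetric, doubly stochastic, and shares the uniform stationary distribution, replace any use of a common eigenbasis by per-step Dirichlet-form/contraction estimates that hold uniformly over the sequence (because each $G_t$ is connected and $d$-regular with the same $c=1/nd$ and the same $\alpha$), establish monotonicity of the relevant norm along the time-inhomogeneous product, and then quote Lyons' summation and duality arithmetic --- which is precisely the structure of the paper's argument, whose genuinely new ingredients are the per-graph Nash-type inequality and the sup-norm monotonicity under products of doubly stochastic matrices (Claim~\ref{clm:claim-norm}). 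Two details to state explicitly when executing your plan: the telescoping step needs monotonicity of the $\ell^\infty$ norm along the product (not only the $\ell^2_\pi$ contraction on $\mathbf{1}^\perp$ that you cite), and in the half-time/duality step the adjoint of $Q_1\cdots Q_m$ is the reversed product $Q_m\cdots Q_1$; both are harmless since doubly stochastic matrices contract every $\ell^p$ norm and the estimates are uniform over arbitrary admissible sequences, but they are exactly where the time-inhomogeneity must be addressed.
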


%The Lyons Lemma still holds for dynamic regular graphs. All parts of the lemma is true for dynamic graphs except that $ \lVert Q^{k+1} f \rVert_\infty \le \lVert Q^k f \rVert_\infty $ where $Q$ is the adjacency matrix of the graph. This we prove in the following lemma for regular dynamic graph. 

\begin{proof} %[Proof of Lemma~\ref{lem:lyons}]
Let $G = (V, E)$ be any $d$-regular graph and $Q$ be the transition probability matrix of it. Write $$c_2(x, y) := \pi(x) Q^2(x, y)$$ and note that for $(x, y) \in E$, we have 
\begin{align*} 
c_2(x, y) \ge \pi(x) \bigl[Q(x, x) Q(x, y) + Q(x, y) Q(y, y)\bigr] \ge 2\alpha c.
\end{align*}
We write $\ell^2(V, \pi)$ for the vector space $\mathbb{R}^{V}$ equipped with the inner product defined by $$(f_1, f_2)_{\pi} := \sum_{x \in V} f_1(x) f_2(x)\pi(x).$$ We regard elements of $\mathbb{R}^{V}$ as functions from $V$ to $\mathbb{R}$. Therefore we will call eigenvectors of the matrix $Q$ as eigenfunctions. Recall that the transition matrix $Q$ is reversible with respect to the stationary distribution $\pi$. The reason for introducing the above inner product is
\begin{claim}\label{lem:lavin}
Let $Q$ be a reversible transition matrix with respect to $\pi$. Then the inner product space $\langle \ell^2(V, \pi), (\cdot , \cdot)_{\pi}\rangle$ has an orthonormal basis of real-valued eigenfunctions $\{f_i \}_{j=1}^{\vert V \vert}$ corresponding to real eigenvalues $\{\lambda_j\}$. 
\end{claim}
\begin{proof}
Denote by $(\cdot , \cdot)$ the usual inner product on $\mathbb{R}^V$, given by $(f_1, f_2) := \sum_{x \in V} f_1(x) f_2(x)$. For a regular graph, $Q$ is symmetric. The more general version proof is given in Lemma~12.2 in~\cite{Levin} where $Q$ need not be symmetric. The spectral theorem for symmetric matrices guarantees that the inner product space $\langle \ell^2(V, \pi), (\cdot , \cdot) \rangle$ has an orthonormal basis $\{\varphi_j\}_{j=1}^{\vert V \vert}$ such that $\varphi_j$ is an eigenfunction with the real eigenvalue $\lambda_j$. It is known that $\sqrt{\pi}$ is an eigenfunction of $Q$ corresponding to the eigenvalue $1$; we set $\varphi_1 = \sqrt{\pi} $ and $\lambda_1 =1$. If $D_{\pi}$ denote the diagonal matrix with diagonal entries $D_{\pi}(x, x) = \pi(x)$, then $Q = D_{\pi}^{\frac{1}{2}} Q D_{\pi}^{-\frac{1}{2}}$. Let $f_j = D_{\pi}^{-\frac{1}{2}} \varphi_j$, then $f_j$ is an eigenfunction of $Q$ with eigenvalue $\lambda_j$. Infact: 
\begin{align*} 
Q f_j = Q D_{\pi}^{-\frac{1}{2}} \varphi_j =  D_{\pi}^{-\frac{1}{2}} (D_{\pi}^{\frac{1}{2}} Q D_{\pi}^{-\frac{1}{2}}) \varphi_j = D_{\pi}^{-\frac{1}{2}} Q \varphi_j  = D_{\pi}^{-\frac{1}{2}} \lambda_j \varphi_j = \lambda_j f_j
\end{align*}
Although the eigenfunctions $\{f_j\}$ are not necessarily orthonormal with respect to the usual inner product, they are orthonormal with respect to the inner product $(\cdot, \cdot)_{\pi}$: 
$$\delta_{ij} = (\varphi_i, \varphi_j) = (D_{\pi}^{\frac{1}{2}} f_i , D_{\pi}^{\frac{1}{2}} f_j) = (f_i, f_j)_{\pi}, $$ the first equality follows since $\{\varphi_j\}$ is orthonormal with respect to the usual inner product. 
\end{proof}

Let $\ell_0^2(V, \pi)$ be the orthogonal complement of the constants in $\ell^2(V, \pi)$. Note that $\textbf{1}$ is an eigenfunction of $Q$ and that $\ell_0^2(V, \pi)$ is invariant under $Q$. Now we show in the following claim that each $f$ has at least one nonnegative value and at least one nonpositive value, such as $f \in \ell^2_0(V, \pi)$. 

\begin{claim}\label{clm:dirichlets}
Let $G$ be an undirected connected $d$-regular graph on $n$ vertices with transition matrix $A_G$. Let $\lambda_1 \geq \lambda_2 \geq \ldots \geq \lambda_n$ be the eigenvalues and $X_1, X_2, \ldots,X_n$ are the corresponding eigenvectors  of $A_G$. Then for each eigenvector $X_i, i= 2,3,  \ldots,n$, $($other than $X_1)$, has at least one negative and at least one positive co-ordinates.  
\end{claim}
\begin{proof}
It is known that $\lambda_1 = 1$ and the normalized eigenvector corresponding to $\lambda_1$ is $X_1 = \frac{1}{\sqrt{n}}(1, 1, \ldots , 1)$. The set of eigenvectors $\{X_i : i = 1, 2, \ldots,n\}$ form a orthonormal basis of the eigenspace. Then for any normalized eigenvector $X \in \{X_i : i=2, 3, \ldots, n \}$, we have $X \perp X^1$. Hence, $\sum_{i=1}^n x_{i}^2 = 1$ and $ \sum_{i=1}^n x_i = 0$, where $X = (x_1, x_2, \ldots ,x_n)$. Let $x_l$ and $x_s$ be the respectively largest and smallest co-ordinates of $X$. Then clearly $ x_l \geq 1/\sqrt{n}$ and $x_s < 0$. 
\end{proof}

Let $x_0$ be a vertex where $\vert f \vert$ achieves its maximum. Then clearly it follows from the Claim~\ref{clm:dirichlets}
\begin{align}
\label{equ-equation1}
\lVert f \rVert_{\infty} & =  \vert f(x_0) \vert \le \frac{1}{2} \sum_{(x,y) \in E} \lvert f(x) - f(y) \rvert \le \frac{1}{2} \sum_{x, y \in V} c_2(x, y) \lvert f(x) - f(y) \rvert/(2 \alpha c)
\end{align} 
where $\lVert \cdot \rVert_{\infty}$ denote the supremum norm and the factor $1/2$ arises from counting each pair $(x,y)$ in each order. Take $f \in \ell^2_0(V, \pi)$. Notice that $\sum_{x,y \in V} c_2(x, y) = \sum_{x \in V} \pi(x) = 1$. Thus, we have from equation~(\ref{equ-equation1}) by using Cauchy-Schwartz inequality that  
\begin{align*}
(2\alpha c)^2 \lVert f \rVert_{\infty}^2 &  \le \frac{1}{2} \sum_{x, y \in V} c_2(x, y) [f(x) - f(y)]^2 \\
& = \frac{1}{2} \sum_{x, y \in V} f(x)^2 \pi(x) Q^2(x ,y) \\ & - \sum_{x, y \in V} f(x) f(y) \pi(x) Q^2(x, y) \\ & + \frac{1}{2} \sum_{x, y \in V} f(y)^2 \pi(x) Q^2(x, y)
\end{align*}
By reversibility, $\pi(x) Q^2(x, y) = \pi(y)Q^2(y, x)$, and the first and last terms above are equal to common value
$$ \frac{1}{2} \sum_{x \in V} f(x)^2 \pi(x) \sum_{y \in V} Q^2(x, y) = \frac{1}{2} \sum_{x \in V} f(x)^2 \pi(x). $$ 
Therefore the above inequality becomes,
\begin{align*}
(2\alpha c)^2 \lVert f \rVert_{\infty}^2 & \le \sum_{x \in V} f(x)^2 \pi(x) - \sum_{x \in V} f(x) \bigl[ \sum_{y \in V} f(y) Q^2(x, y) \bigr] \pi(x) \\
& = (f, f)_{\pi} - (f, Q^2 f)_{\pi} \\
& = ((\mathbb{I} - Q^2)f, f)_{\pi}.
\end{align*}
Alternatively, we may apply~(\ref{equ-equation1}) to the function $\mbox{sgn}(f)f^2$. Using the trivial inequality $$\lvert \mbox{sgn}(s)s^2 - \mbox{sgn}(t)t^2 \rvert \le \lvert s - t \lvert \cdot (\lvert s \rvert + \lvert t \rvert),$$ valid for any real numbers $s$ and $t$, we obtain that  
\begin{align*}
(2\alpha c)^2 \lVert f \rVert_{\infty}^4 & \le \left(\frac{1}{2} \sum_{x, y \in V} c_2(x, y) \lvert f(x) - f(y)\rvert \cdot (\lvert f(x) \rvert + \lvert f(y) \rvert) \right)^2 \\
& \le \left(\frac{1}{2} \sum_{x, y \in V} c_2(x, y) [f(x) - f(y)]^2 \right) \cdot \left(\frac{1}{2} \sum_{x, y \in V} c_2(x, y) [\lvert f(x) \rvert + \lvert f(y) \rvert]^2 \right) \\
& = ((\mathbb{I} - Q^2)f, f)_{\pi} \cdot ((\mathbb{I} + Q^2)\lvert f \rvert, \lvert f \rvert)_{\pi}
\end{align*}
by the Cauchy-Schwartz inequality and same algebra as above. Therefore, if $(f,f)_{\pi} \le 1$, we have $2(\alpha c)^2 \lVert f \rVert_{\infty}^4 \le ((\mathbb{I} - Q^2)f, f)_{\pi}$. \\
Putting both these estimates together, we get 
\begin{equation}
\label{equ-ineque2}
2(\alpha c)^2 \max \{2\lVert f \rVert_{\infty}^2, \lVert f \rVert_{\infty}^4 \} \le ((\mathbb{I} - Q^2)f, f)_{\pi}
\end{equation} 
for $(f,f)_{\pi} \le 1$. Now we show that the above inequality is also holds for the regular evolving graph. 

\begin{claim}\label{clm:claim-norm}
Let $\mathcal{G} = G_1, G_2, \ldots$ be a $d$-regular, connected evolving graph with the same set $V$ of nodes. Let $A_{G_i}$ be the transpose of the transition matrix of $G_i$. Let the column vector $f = (p_1, p_2, \ldots ,p_n)^T$ be any probability distribution on $V$. Then 
$ \lVert (A_{G_{i+1}}A_{G_{i}} \ldots A_{G_1}) f \rVert_\infty \le \lVert (A_{G_i}A_{G_{i-1}}\ldots A_{G_1}) f \rVert_\infty $ for all $i \geq 1$.
\end{claim}
\begin{proof}
It is known that the transition matrix of any regular graph is doubly stochastic and if a matrix $Q$ is doubly stochastic then so is $Q^2$. Let $(A_{G_i}A_{G_{i-1}} \ldots A_{G_1}) f = (p_1^i, p_2^i, \ldots, p_n^i)^T$ and $\lVert (A_{G_i}A_{G_{i-1}} \ldots A_{G_1}) f  \rVert_\infty = \max \{p_l^i : l= 1,2, \ldots, n \} = \vert p_k^i \vert$ (say). Then 
\begin{align*}
(A_{G_{i+1}}A_{G_i} \ldots A_{G_1}) f = \bigl(\sum_{j \in N(1)} a_{1j}p_j^i, \sum_{j\in N(2)} a_{2j}p_j^i, \ldots, \sum_{j\in N(n)} a_{nj}p_j^i  \bigr)^T
\end{align*} where $N(v)$ is the set of neighbors of $v$ and $a_{ij}$ is the $ij$-th entries of the matrix $(A_{G_{i+1}}A_{G_i} \ldots A_{G_1})$. We show that the absolute value of any co-ordinates of  $(A_{G_{i+1}}A_{G_i}\ldots A_{G_1})f $ is $\leq \vert p_k^i \vert$. Infact for any $l$, 
\begin{align*} \vert \sum_{j\in N(l)} a_{lj}p_j^i \vert \leq \sum_{j\in N(l)} \lvert a_{lj}\rvert \lvert p_j^i \rvert & \leq \lvert p_k^i \rvert \sum_{j\in N(l)} a_{lj} = \lvert p_k^i \vert,
\end{align*} since the matrix is doubly stochastic, the last sum is $1$. 
\end{proof}

Now apply the inequality~(\ref{equ-ineque2}) to $Q^l f$ for $l= 0,1, \ldots, k$.  Summing these inequalities and using Claim~\ref{clm:claim-norm} to obtain,
\begin{align*}
(k+1)2(\alpha c)^2\max \{2\lVert Q^k f \rVert_{\infty}^2, \lVert Q^k f \rVert_{\infty}^4 \} & \le 2(\alpha c)^2 \max \{2\sum_{l=0}^k \lVert Q^l f \rVert_{\infty}^2, \sum_{l=0}^k \lVert Q^k f \rVert_{\infty}^4 \} \\
& \le \sum_{l=0}^k ((\mathbb{I} - Q^2)Q^l f, Q^l f)_{\pi} \\ & = \sum_{l=0}^k ((\mathbb{I} - Q^2)Q^{2l} f, f)_{\pi} \\
& = ((\mathbb{I} - Q^{2k + 2})f, f)_{\pi} \le 1
\end{align*}
for $(f,f)_{\pi} \le 1$. This shows that the norm of $Q^k : \ell^2_0(V, \pi) \rightarrow \ell^{\infty}(V)$ is bounded by 
$$\beta_k := \min\{ [(2\alpha c)^2(k+1)]^{-1/2}, [(\alpha c)^2(2k + 2)]^{-1/4} \}. $$
Let $T: \ell^2(V, \pi) \rightarrow \ell^2_0(V, \pi)$ be the orthogonal projection $Tf := f - (f, \textbf{1})_{\pi} \textbf{1}$. Given what we have shown, we see that the norm of $Q^k T : \ell^(V, \pi) \rightarrow \ell^{\infty}(V)$ is bounded by $\beta_k$. By duality, the same bound holds for $TQ^k : \ell^1(V, \pi) \rightarrow \ell^2(V, \pi)$. Therefore by composition of mapping we deduce that the norm of $Q^k T Q^k : \ell^1(V, \pi) \rightarrow \ell^{\infty}(V)$ is at most $\beta_k^2$ and the norm of $Q^k T Q^{k+1} : \ell^1(V, \pi) \rightarrow \ell^{\infty}(V)$ is at most $\beta_k \beta_{k+1}$. Applying these inequalities to $f := \textbf{1}_x/\pi(x)$ gives the required bound.  
\end{proof}

The more general case is proved in Lyons (see
Lemma~3.4 and Remark~4  in \cite{Lyons}). Sometimes, it is more convenient to use the following bound; %For $k =O(\frac{1}{1-\bar{\lambda_2}})$ 
For $k= O(\tau)$ and small $\alpha$, the above can be simplified to the following bound; see Remark~3 in \cite{Lyons}.
\begin{equation}
\label{one_sided_decay} Q^k(x,y)  \le \frac{4\pi(y)}{c \sqrt{k+1}} =
\frac{4d}{\sqrt{k+1}}\,.
\end{equation}

Note that given a simple random walk on a graph $G$, and a
corresponding matrix  $P$, one can always switch to the lazy version
$Q=(I+P)/2$, and interpret it as a walk on graph $G'$, obtained by
adding  self-loops  to vertices in $G$ so as to double the degree of
each vertex. In the following, with abuse of notation we assume our
$P$ is such a lazy version of the original one.

\begin{proof}[Proof of Proposition~\ref{proposition:first-moment}]
Remember that the evolving graph is $\mathcal{G} = G_1, G_2, \ldots$. Let $X_0, X_1, \ldots $ describe the random walk, with $X_i$
denoting the position of the walk at time $i\ge 0$ on $G_{i+1}$, and let
$\bone_A$ denote the indicator (0-1) random variable, which takes
the value 1 when the event $A$ is true. In the following we also use
the subscript $x$ to denote the fact that the probability or
expectation is with respect to starting the walk at vertex $x$.
%Let $X_0=x$
First the expectation.
\begin{align*}
\e[N_t^x(y)] =  \e_x[  \sum_{i=0}^t \bone_{\{X_i=y\}}] & = \sum_{i=0}^t P^i(x,y) \\
& \le  4 d \sum_{i=0}^t \frac{1}{\sqrt{i+1}} ,  && \mbox{ (using the above inequality  (\ref{one_sided_decay})) } \\
& \le 8 d \sqrt{t+1}\,.
\end{align*}
\end{proof}
Using the above proposition, we bound the number of visits of each walk at each node, as follows. 

\begin{lemma}\label{lemma:whp one walk one node bound}
For $t = O(\tau)$ and any vertex $y \in \mathcal{G}$, the random walk
started at $x$ satisfies:
\begin{equation*}
\Pr\bigl(N^x_t(y) \ge  32  \ d \sqrt{t+1}\log n \bigr) \le \frac{1}{n^2} \,.
\end{equation*}
\end{lemma}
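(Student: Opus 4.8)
The plan is to obtain the high-probability bound from the first-moment estimate of Proposition~\ref{proposition:first-moment} via a standard ``geometric decay of return probabilities'' argument. The key structural fact I would use is that, starting from any vertex, the number of visits $N^x_t(y)$ to a fixed vertex $y$ in $t$ steps can be controlled by a repeated-trials argument: each time the walk reaches $y$, the conditional expected number of \emph{future} visits to $y$ within the remaining $\le t$ steps is again at most $8 d\sqrt{t+1}$ (by applying Proposition~\ref{proposition:first-moment} from the new starting point $y$, using that $P^i(y,y)$ obeys the same bound~(\ref{one_sided_decay}) regardless of which $G_j$'s are traversed). So first I would fix $M := 8 d \sqrt{t+1}$ and note $\e[N^x_t(y)] \le M$.

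Next, by Markov's inequality applied from the worst-case fresh start at $y$, the probability that the walk, once at $y$, visits $y$ at least $2M$ more times (within the remaining horizon) is at most $1/2$. Iterating this: conditioned on having already visited $y$ at least $2M \cdot j$ times, the probability of reaching $2M\cdot(j+1)$ total visits is at most $1/2$, by the Markov/strong-Markov property applied at the $(2Mj)$-th visit. Hence $\Pr[N^x_t(y) \ge 2M j] \le 2^{-j}$. Choosing $j = 2\log n$ gives
\[
\Pr\bigl(N^x_t(y) \ge 4 M \log n\bigr) \le 2^{-2\log n} \le \frac{1}{n^2},
\]
and since $4M\log n = 32\, d\sqrt{t+1}\log n$, this is exactly the claimed bound (here $\log$ is to base $2$; with natural log one absorbs the constant into the $32$, or takes $j = \lceil 2\ln n\rceil$).

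The main obstacle — and the point that needs care in the dynamic setting — is justifying the ``fresh start'' step: that once the walk is at $y$ at some time $i_0$, the number of subsequent visits to $y$ in steps $i_0+1,\dots,t$ is stochastically dominated by $N^y_{t}(y)$ computed for a walk on the \emph{shifted} evolving graph $G_{i_0+1},G_{i_0+2},\dots$. This is where one invokes the Markov property of the walk conditioned on the (adversarially fixed in advance) graph sequence: since the adversary is oblivious and commits to $\langle G_t\rangle$ before the walk starts, the process $X_{i_0},X_{i_0+1},\dots$ is, conditioned on $X_{i_0}=y$, a simple random walk on the evolving graph $G_{i_0+1},G_{i_0+2},\dots$, and bound~(\ref{one_sided_decay}) — which holds for \emph{every} regular evolving graph — applies uniformly to it. Once this is in place, the expectation bound of Proposition~\ref{proposition:first-moment} transfers verbatim to every shifted start, and the geometric iteration above goes through. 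I would write this shift-invariance as a one-line remark and then present the Markov-inequality iteration as the body of the proof.
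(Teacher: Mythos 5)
Your proposal is correct and is essentially the paper's own argument: the paper also iterates the first-moment bound of Proposition~\ref{proposition:first-moment} via restarts at $y$, phrasing it through the hitting time $L^x_r(y)$ of the $r$-th visit and splitting the walk into $\log n$ independent subwalks each accumulating $r^*=32\,d\sqrt{t+1}$ visits (probability $\le 1/4$ each), which is the same geometric-decay computation as your Markov-iteration with blocks of $2M$ visits and factor $1/2$. If anything, you are more explicit than the paper about the one point that needs care in the dynamic setting, namely that the expectation bound transfers to time-shifted starts because the oblivious adversary fixes the sequence $\langle G_t\rangle$ in advance and the bound~(\ref{one_sided_decay}) holds uniformly for every regular evolving graph.
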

\begin{proof}
First, it follows from the Proposition that
\begin{equation} 
\Pr\bigl(N^x_t(y) \ge  4\cdot 8 \ d \sqrt{t+1}\bigr) \le \frac{1}{4} \,.\label{eq:simple bound}
\end{equation}

For any $r$, let $L^x_r(y)$ be the time that the random walk
(started at $x$) visits $y$ for the $r^{th}$ time. Observe that, for
any $r$, $N^x_t(y)\geq r$ if and only if $L^x_r(y)\leq t$.
Therefore,
\begin{equation}
\Pr(N^x_t(y)\geq r)=\Pr(L^x_r(y)\leq t).\label{eq:visits eq length}
\end{equation}

Let $r^*=32  \ d \sqrt{t+1}$. By \eqref{eq:simple bound} and
\eqref{eq:visits eq length}, $\Pr(L^x_{r^*}(y)\leq t)\leq
\frac{1}{4}\,.$ We claim that
\begin{equation}
\Pr(L^x_{r^*\log n}(y)\leq t)\leq \left(\frac{1}{4}\right)^{\log
n}=\frac{1}{n^2}\,.\label{eq:hp length bound}
\end{equation}
To see this, divide the walk into $\log n$ independent subwalks,
each visiting $y$ exactly $r^*$ times. Since the event $L^x_{r^*\log
n}(y)\leq t$ implies that all subwalks have length at most $t$,
\eqref{eq:hp length bound} follows.
Now, by applying \eqref{eq:visits eq length} again,
\[\Pr(N^x_t(y)\geq r^*\log n) = \Pr(L^x_{r^*\log n}(y)\leq t)\leq
\frac{1}{n^2}\] as desired.
\end{proof}

We now extend the above lemma to bound the number of visits of {\em
all} the walks at each particular node.

\begin{lemma}\label{lemma:k walks one node bound}
For $t = O(\tau)$, and for any vertex $y \in
\mathcal{G}$, the random walk started at $x$ satisfies:
\begin{equation*}
\Pr\bigl(\sum_{i=1}^k N^{x_i}_t(y) \ge  32  \ d \sqrt{kt+1} \log n+k\bigr) \le \frac{1}{n^2} \,.
\end{equation*}
\end{lemma}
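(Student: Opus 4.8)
The plan is to derive Lemma~\ref{lemma:k walks one node bound} from Lemma~\ref{lemma:whp one walk one node bound} by the same two-step device (first-moment bound, then geometric amplification via independent subwalks), applied now to the \emph{concatenation} of the $k$ walks rather than to a single walk. First I would consider the sum $S := \sum_{i=1}^k N^{x_i}_t(y)$ and observe that, by linearity of expectation together with Proposition~\ref{proposition:first-moment} applied to each of the $k$ walks,
\[
\e[S] = \sum_{i=1}^k \e[N^{x_i}_t(y)] \le k \cdot 8 d\sqrt{t+1} \le 8 d\sqrt{kt+1}\cdot\sqrt{k}.
\]
Hmm — that last bookkeeping is slightly off; the clean way is to note $k\sqrt{t+1}\le \sqrt{k}\sqrt{kt+1}$, so I should instead aim for the crude bound $\e[S]\le 8d\sqrt{kt+1}\cdot\sqrt{k}$ and be a little more careful, or better, absorb the extra $\sqrt{k}$ by observing that what we really want to amplify is a constant-probability tail bound. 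So the first step is: by Markov's inequality, $\Pr\bigl(S \ge 4\cdot 8 d\sqrt{kt+1}\cdot\sqrt{k}\bigr)\le 1/4$ — but to match the statement I actually want the cleaner route below.

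The cleaner route, and the one I would actually carry out: treat the $k$ walks as a single walk of total length $kt$ by running walk $1$ for $t$ steps, then walk $2$ for $t$ steps (from its own start $x_2$), and so on; call the number of visits to $y$ in this length-$kt$ concatenated object $\tilde N(y) = S$. The first-moment bound from Proposition~\ref{proposition:first-moment} does not literally apply to a concatenation of walks with resets, but the bound $\e[N^{x_i}_t(y)]\le 8d\sqrt{t+1}$ does apply walk-by-walk, giving $\e[S]\le 8kd\sqrt{t+1}$. Since $\sqrt{t+1}\le\sqrt{kt+1}$ for $k\ge 1$, we get $\e[S]\le 8kd\sqrt{kt+1}$; Markov then yields $\Pr(S\ge 32kd\sqrt{kt+1})\le 1/4$. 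Next, mimicking the argument of Lemma~\ref{lemma:whp one walk one node bound}: for each $i$, let $L^{x_i}_r(y)$ be the time walk $i$ reaches its $r$th visit to $y$, and use the identity $\{N^{x_i}_t(y)\ge r\} = \{L^{x_i}_r(y)\le t\}$ to boost the constant-probability bound to a high-probability one by splitting each walk into $\log n$ independent subwalks, exactly as in \eqref{eq:hp length bound}. Combining over the $k$ walks, $\Pr(S \ge 32 d\sqrt{kt+1}\log n + k)\le 1/n^2$, where the additive $+k$ accounts for the ``current-position'' visit at the start of each of the $k$ walks (each walk contributes at least $1$ to its own visit count at time $0$ if $x_i=y$), matching the statement.

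I expect the main obstacle to be the bookkeeping that converts the naive first-moment bound $\e[S]\le 8kd\sqrt{t+1}$ — which carries a factor $k$ rather than the desired $\sqrt{kt}$ scaling — into the stated bound $32d\sqrt{kt+1}\log n + k$. The resolution is that we do \emph{not} amplify a union of $k$ separate tail events with the naive $k\sqrt{t+1}$ threshold; instead we amplify the single event $\{S\ge 32kd\sqrt{kt+1}\}$ (constant probability $\le 1/4$ by Markov) into $\{S\ge 32kd\sqrt{kt+1}\log n\}$ with probability $\le (1/4)^{\log n} = 1/n^2$, using the subwalk-splitting trick applied to the concatenated walk, and then check that $32kd\sqrt{kt+1}$ is in fact no larger than the claimed $32d\sqrt{kt+1}\log n$ — wait, that inequality goes the wrong way unless one is careful about where the factor $k$ sits. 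The correct fix, which I would verify in the writeup, is to split the concatenated length-$kt$ walk so that \emph{each} subwalk visits $y$ exactly $32d\sqrt{kt+1}$ times (this is legitimate because the concatenated walk, conditioned on the reset structure, still enjoys the one-walk first-moment estimate on each block), whence $L_{r^*\log n}(y)\le kt$ forces all $\log n$ subwalks to have length $\le kt$, each of which has probability $\le 1/4$ by the Markov bound applied at the scale $r^* = 32 d\sqrt{kt+1}$; this is precisely the $\sqrt{kt}$-scaling we want, with the extra $+k$ coming from the $k$ initial positions. So the delicate point — and the one step I would write out in full — is justifying that the Markov bound $\Pr(L_{r^*}(y)\le kt)\le 1/4$ holds with $r^*=32d\sqrt{kt+1}$ for the concatenated walk, which follows because the expected number of visits to $y$ in any length-$kt$ stretch of the concatenation, regardless of resets, is at most $8d\sqrt{kt+1}$ by applying Proposition~\ref{proposition:first-moment} to each constituent block and summing, using $\sum_i \sqrt{t_i+1}\le \sqrt{k}\sqrt{\sum_i(t_i+1)}\le\sqrt{kt+k}$ together with the fact that the blocks have total length $kt$.
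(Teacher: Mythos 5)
Your overall plan (reduce to a single object of length $kt$, then reuse the constant-probability-plus-amplification machinery of Lemma~\ref{lemma:whp one walk one node bound}) is in the right spirit, but the step on which everything hinges is unjustified and in fact false. You concatenate the $k$ walks \emph{with resets at the starting points} $x_1,\dots,x_k$ and claim that the expected number of visits to $y$ in any length-$kt$ stretch of this object is at most $8d\sqrt{kt+1}$, justified by ``$\sum_i\sqrt{t_i+1}\le\sqrt{k}\sqrt{\sum_i(t_i+1)}\le\sqrt{kt+k}$''. The last inequality silently drops a factor $\sqrt{k}$: Cauchy--Schwarz gives $\sum_i\sqrt{t_i+1}\le\sqrt{k\sum_i(t_i+1)}$, which for $t_i=t$ is $k\sqrt{t+1}$, not $\sqrt{kt+k}$. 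The gap is not merely computational: if every $x_i=y$ (say on a lazy cycle), each block contributes $\Theta(d\sqrt{t})$ expected visits, so the reset-concatenation has expected visit count $\Theta(kd\sqrt{t})$, exceeding $d\sqrt{kt}$ by a factor $\sqrt{k}$. Hence the Markov bound at scale $r^*=32d\sqrt{kt+1}$ for your concatenated object --- the very input to your subwalk-splitting amplification --- is exactly the $\sqrt{kt}$-versus-$k\sqrt{t}$ difficulty the lemma is supposed to overcome, and your argument assumes it rather than proves it. (Your amplification step also quietly needs the pieces of the reset-concatenation between successive batches of visits to $y$ to behave like independent copies of a walk obeying the same estimate, which is unclear since the resets sit at fixed positions inside the concatenation.)

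The paper's proof avoids resets altogether: from each walk $W_i$ it keeps only the segment $W_i'$ between the \emph{first} and the \emph{last} visit to $y$ (so every kept segment starts and ends at $y$, and no visits are lost except the $k$ first ones, which is precisely where the additive $+k$ comes from), stitches these segments into a single walk started at $y$, and pads it to length $kt$. This yields, for every $r$,
$\Pr\bigl(\sum_{i=1}^k N^{x_i}_t(y)\ge r-k\bigr)\le\Pr\bigl(N^y_{kt}(y)\ge r\bigr)$,
after which Lemma~\ref{lemma:whp one walk one node bound}, applied verbatim to a single walk of length $kt$ started at $y$, gives the stated bound --- no fresh first-moment estimate for any concatenated object is needed. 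The essential idea missing from your proposal is this clipping to $y$-to-$y$ segments, which converts the $k$-walk visit count into the visit count of one genuine walk from $y$; without it, the reset-concatenation simply does not satisfy the single-walk estimates at the $\sqrt{kt}$ scale.
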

\begin{proof}
First, observe that, for any $r$, $$\Pr\bigl(\sum_{i=1}^k
N^{x_i}_t(y) \geq r-k\bigr)\leq \Pr[N^y_{kt}(y)\geq r].$$ To see
this, we construct a walk $W$ of length $kt$ starting at $y$ in the
following way: For each $i$, denote a walk of length $t$ starting at
$x_i$ by $W_i$. Let $\tau_i$ and $\tau'_i$ be the first and last
time (not later than time $t$) that $W_i$ visits $y$. Let $W'_i$ be
the subwalk of $W_i$ from time $\tau_i$ to $\tau_i'$. We construct a
walk $W$ by stitching $W'_1, W'_2, ..., W'_k$ together and complete
the rest of the walk (to reach the length $kt$) by a normal random
walk. It then follows that the number of visits to $y$ by $W_1, W_2,
\ldots, W_k$ (excluding the starting step) is at most the number of
visits to $y$ by $W$. The first quantity is $\sum_{i=1}^k
N^{x_i}_t(y)-k$. (The term `$-k$' comes from the fact that we do not
count the first visit to $y$ by each $W_i$ which is the starting
step of each $W'_i$.) The second quantity is $N^y_{kt}(y)$. The
observation thus follows.

Therefore,
\begin{align*}
& \Pr\bigl(\sum_{i=1}^k N^{x_i}_t(y)\geq 32 \ d
\sqrt{kt+1}\log n + k\bigr) \\ & \leq \Pr\bigl(N^y_{kt}(y)\geq 32 \ d
\sqrt{kt+1}\log n\bigr) \\ & \leq \frac{1}{n^2}
\end{align*}
where the last inequality follows from Lemma~\ref{lemma:whp one walk
one node bound}.
\end{proof}

%Lemma~\ref{lemma:visits bound}
Now the Random Walk Visits Lemma (cf. Lemma~\ref{lem:visit-bound}) follows immediately from
Lemma~\ref{lemma:k walks one node bound} by union bounding over all
nodes. \\
 
The above lemma says that the number of visits to each node can be bounded.
However, for each node, we are only interested in the case where it is used as a connector (the stitching points). The lemma below shows that the number of visits as a connector can be bounded as well; i.e., if any node appears $t$ times in the walk, then it is likely to appear roughly $t/\lambda$ times as connectors.

\begin{lemma}\label{lem:connector-bound}
For any vertex $v$, if $v$ appears in the walk at most $t$ times then it appears as a connector node at most $t(\log n)^2/\lambda$ times with probability at least $1-1/n^2$.
\end{lemma}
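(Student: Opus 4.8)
The plan is to fix a vertex $v$ and condition on the event that $v$ is visited at most $t$ times by the length-$\tau$ walk that Phase 2 stitches together. I will argue that each such visit has only a small chance of coinciding with a stitching point (a connector), and that these events are sufficiently independent across the distinct visits that a Chernoff-type tail bound applies. The key structural observation is that the connectors are exactly the endpoints of the short walks from Phase 1, and each short walk has length $\lambda + r_i$ with $r_i$ chosen uniformly and independently in $[0,\lambda-1]$. Hence, from the point of view of the long walk, the stitching points are spaced by these random gaps; if we look at any particular occurrence of $v$ in the long walk, the probability that this occurrence happens to be a stitching endpoint is governed by where the random offset $r_i$ of the current short walk lands, which is roughly $1/\lambda$ uniformly.

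First I would set up notation: let the long walk be the concatenation of short walks $S_1, S_2, \dots$, where $S_j$ has length $\lambda + r_j$ with the $r_j$ i.i.d.\ uniform on $[0,\lambda-1]$, and the connectors are the shared endpoints $\{ \text{end of } S_j \}$. Let $v$ appear at positions $p_1 < p_2 < \dots < p_m$ in the long walk with $m \le t$. For each occurrence $p_a$, let $Z_a$ be the indicator that $p_a$ is a connector. I would then show $\Pr[Z_a = 1 \mid \text{everything determined before the current short walk}] \le 1/\lambda$ (or a comparable $O(1/\lambda)$ bound): given the short walk currently in progress started at some known time and the walk has reached $v$, the walk is declared to end here only if the random length $\lambda + r_j$ of that short walk equals the current elapsed length, and $r_j$ is uniform over $\lambda$ values independent of the walk's trajectory. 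This gives a dominating sequence of independent Bernoulli$(1/\lambda)$ trials, so $\sum_a Z_a$ is stochastically dominated by $\mathrm{Bin}(t, 1/\lambda)$ (more carefully, by a sum of $t$ independent indicators each with success probability $\le 1/\lambda$, conditioned appropriately — a standard martingale/stochastic-domination argument handles the conditioning).

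Then I would apply a Chernoff bound: with mean $\mu \le t/\lambda$, the probability that $\sum_a Z_a$ exceeds $t(\log n)^2/\lambda$ is at most $2^{-\Omega(t(\log n)^2/\lambda)}$ when $t(\log n)^2/\lambda \ge c\,t/\lambda \cdot \log n$, i.e.\ the deviation is a $\log n$ factor above the mean, which drives the failure probability below $1/n^2$ (here one uses that $t/\lambda \ge 1$, or else the claim is trivial since there is at most one connector occurrence; if $t < \lambda$ the bound $t(\log n)^2/\lambda$ may be less than $1$, in which case we instead note $v$ can appear as a connector at most $\min\{t, m\}$ times and handle the small-$t$ regime separately, or simply observe the statement is about the high-probability upper tail and a single connector is absorbed into the $(\log n)^2$ slack). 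I would phrase the final step as a clean union-bound-ready statement for one vertex.

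The main obstacle I anticipate is making the independence/stochastic-domination argument rigorous in the dynamic, adversarial setting: the short walk lengths $r_j$ are indeed independent of the adversary's graph choices (oblivious adversary) and of the walk's trajectory, which is what makes the $1/\lambda$ bound legitimate, but one must be careful that conditioning on ``$v$ appears at most $t$ times'' does not secretly correlate the offsets $r_j$ with the positions $p_a$ in a way that inflates the connector probability. The right way to handle this is to expose randomness in the correct order — reveal the walk trajectory and the adversary's graphs first (these determine the $p_a$'s as a function of the $r_j$'s only through which short walk is active), then reveal each $r_j$ — and to bound $\Pr[Z_a=1]$ conditioned on the history up to the start of the relevant short walk, so that the offset $r_j$ is still fresh and uniform; a Doob martingale / Azuma argument over the sequence of short walks, or the stochastic-domination lemma for sums of conditionally-bounded-probability indicators, then yields the Chernoff bound cleanly.
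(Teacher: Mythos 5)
Your proposal is correct and follows essentially the same route as the paper: the paper's own proof is only a sketch that credits the random offsets $r_i$ with breaking periodicity, describes the equivalent process of sampling one connector point per $\lambda$-length window, and defers the details to Lemma~2.7 of \cite{DasSarmaNPT10}. Your write-up fleshes out exactly that idea --- each connector is conditionally uniform over a window of $\lambda$ positions, so the hit probability is at most (visits in the window)$/\lambda$, and a martingale/Chernoff argument over the sequence of short walks (with the randomness exposed so that each offset is fresh when used) yields the stated tail bound.
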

\begin{proof}
Intuitively, this argument is simple, since the connectors are spread out in steps of length approximately $\lambda$. However, there might be some periodicity that results in the same node being visited multiple times but exactly at $\lambda$-intervals. To overcome this we crucially use the fact that the algorithm uses short walks of length $\lambda + r$ (instead of fixed length $\lambda$) where $r$ is chosen uniformly at random from $[0, \lambda -1]$. Then the proof can be shown via constructing another process equivalent to partitioning the $\tau$ steps into intervals of $\lambda$ and then sampling points from each interval. The detailed proof follows immediately from the proof of the Lemma~2.7 in~\cite{DasSarmaNPT10}.
\end{proof}

%The proof of the above lemmas are in the Appendix. 
Now we are ready to proof the main result (Theorem~\ref{thm:maintheorem}) of this section. \\ %The theorem is restated below with proof.  

\noindent\textbf{Proof of the Theorem \ref{thm:maintheorem} (restated below)}
\begin{theorem}
The algorithm {\sc Single-Random-walk} (cf. Algorithm \ref{alg:single-random-walk}) solves the Single Random Walk problem and with high probability finishes in $\tilde{O}(\sqrt{\tau \Phi})$ rounds. 
\end{theorem}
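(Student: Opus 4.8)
The plan is to dispatch correctness immediately by invoking Lemma~\ref{lem:correctness}, which already establishes that the node output by {\sc Single-Random-Walk} is the destination of a genuine $\tau$-step random walk from $s$ on an evolving graph, and hence (by Theorem~\ref{thm:mixtime}) a sample close to the uniform stationary distribution. What remains is to bound the number of rounds and to argue that the algorithm never gets stuck. I would fix $\lambda = \Theta(\sqrt{\tau\Phi})$ (and, in the degenerate regime $\Phi = O(\polylog n)$, enlarge $\lambda$ to $\Theta(\sqrt{\tau}\,\polylog n)$; this does not affect the stated bound since $\Phi = \tilde\Omega(1)$), and then estimate the two phases separately.

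For Phase~1 the bound $O(\lambda)$ rounds is exactly Lemma~\ref{lem:phase1}: each node launches $d$ coupons, uniformity of the stationary distribution keeps the expected number of coupons at each node equal to $d$ and thus the expected load on each edge at $O(1)$, a Chernoff bound gives $O(\log n)$ coupons per edge per round w.h.p., which fits the CONGEST$(\log^2 n)$ model, so the $2\lambda$ forwarding rounds run without delay. For Phase~2, each iteration of the while loop extends the completed walk by the length $\lambda + r_i \in [\lambda, 2\lambda - 1]$ of one short walk, so after $O(\tau/\lambda)$ iterations the completed length exceeds $\tau - 2\lambda$ and the loop stops; the closing naive walk adds at most $2\lambda$ more steps, i.e. $O(\lambda)$ rounds. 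Each iteration performs a single Sample-Coupon/flooding step costing $O(\Phi)$ rounds by Lemma~\ref{lem:lemma2.3}. Hence Phase~2 runs in $O(\Phi\cdot\tau/\lambda + \lambda)$ rounds, and the total is $O(\lambda + \Phi\tau/\lambda)$, which for $\lambda = \Theta(\sqrt{\tau\Phi})$ is $O(\sqrt{\tau\Phi})$; absorbing the $\polylog n$ factors from the high-probability events gives $\tilde O(\sqrt{\tau\Phi})$.

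The step I expect to be the \emph{main obstacle} is showing that the stitching process never stalls, i.e. that whenever a node $v$ is selected as a connector it still holds an unused coupon, equivalently that $v$ is used as a connector at most $d$ times. Here I would first note that the completed walk has length $\ell \le 2\tau + 2\lambda = O(\tau)$, so the Random Walk Visits Lemma (Lemma~\ref{lem:visit-bound} with $k=1$), whose proof is the technically involved part since it needs Lyons' estimate to be pushed through to evolving graphs, applies and yields that w.h.p. $v$ is visited at most $32 d\sqrt{\ell+1}\log n + 1 = \tilde O(d\sqrt{\tau})$ times over the whole walk. Then Lemma~\ref{lem:connector-bound} — which is precisely the reason the short walks are given \emph{randomized} lengths $\lambda + r_i$ rather than a fixed length $\lambda$, since otherwise an adversarially engineered periodic return pattern could place every visit to $v$ on a stitching point — upgrades this to: w.h.p. $v$ is a connector at most $\tilde O(d\sqrt{\tau})\cdot(\log n)^2/\lambda = \tilde O(d\sqrt{\tau}/\lambda)$ times. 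With $\lambda = \Theta(\sqrt{\tau\Phi})$ and the $\polylog$ floor above, this quantity is at most $d$, and a union bound over the $n$ nodes shows that w.h.p. no node ever runs out of coupons. Combining this feasibility guarantee with Lemma~\ref{lem:correctness} and the round count of the previous paragraph completes the proof.
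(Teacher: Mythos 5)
Your proposal is correct and follows essentially the same route as the paper's proof: correctness is delegated to Lemma~\ref{lem:correctness}, Phase~1 is charged $O(\lambda)$ rounds via Lemma~\ref{lem:phase1}, Phase~2 is charged $O(\Phi\tau/\lambda)$ rounds via Lemma~\ref{lem:lemma2.3}, the non-stalling of the stitching is obtained by combining Lemma~\ref{lem:visit-bound} (with $k=1$) and Lemma~\ref{lem:connector-bound} to bound connector visits by $\tilde O(d\sqrt{\tau}/\lambda)\le d$, and the balance $\lambda=\Theta(\sqrt{\tau\Phi})$ (the paper takes $\lambda=32\sqrt{\tau\Phi}(\log n)^3$, which already makes the connector bound $\approx d/\sqrt{\Phi}\le d$ without your extra polylog padding) yields the stated $\tilde O(\sqrt{\tau\Phi})$ bound. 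Your write-up is, if anything, more explicit than the paper about the coupon-exhaustion feasibility step, but the argument is the same.
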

\begin{proof}%[Proof of the Theorem \ref{thm:maintheorem}]
First, we claim, using Lemma \ref{lem:visit-bound} and
\ref{lem:connector-bound}, that each node is used as a connector node
at most $\frac{32 \ d \sqrt{\tau}(\log n)^3}{\lambda}$ times with
probability at least $1-2/n$. To see this, observe that the claim
holds if each node $x$ is visited at most
$t(x)=32 \ d \sqrt{\tau+1}\log n$ times and consequently appears as a
connector node at most $t(x)(\log n)^2/\lambda$ times. By
Lemma~\ref{lem:visit-bound}, the first condition holds with
probability at least $1-1/n$. By Lemma~\ref{lem:connector-bound} and
the union bound over all nodes, the second condition holds with
probability at least $1-1/n$, provided that the first condition
holds. Therefore, both conditions hold together with probability at
least $1-2/n$ as claimed.

Now, we choose $\lambda=32 \sqrt{\tau \Phi}(\log n)^3$.
By Lemma~\ref{lem:phase1}, Phase~1 finishes in $O(\lambda) = \tilde O(\sqrt{\tau \Phi})$ rounds with high probability.
For Phase~2, {\sc Sample-Coupon} is invoked
$O(\frac{\tau}{\lambda})$ times (only when we stitch the walks) and
therefore, by Lemma~\ref{lem:lemma2.3}, contributes
$O(\frac{\tau \Phi}{\lambda})=\tilde O(\sqrt{\tau \Phi})$ rounds.

Therefore, with probability at least $1-2/n$, the rounds are $\tilde
O(\sqrt{\tau \Phi})$ as claimed.
\end{proof}

%Notice that in worst case the value of $\tau$ could be $O(n^2)$. Therefore putting $\tau = O(n^2)$ in the above bound, we see that our algorithm sample a node from uniform distribution through random walk in $\tilde O(n \sqrt{\Phi})$ rounds w.h.p. 
%\vspace{-0.15in}
\subsection{Generalization to non-regular evolving graphs}
\label{sec:nonregular}
%\vspace{-0.03in}
%It is known that the cover time could be exponential for a simple random walk on undirected dynamic graphs~\cite{AKL08}. 
%If we assume a  lazy random walk, then that guarantees polynomial cover time regardless of the changes made by the adversary. 
By using a {\em lazy} random walk strategy, we can generalize our results to work for a non-regular dynamic graph also. The lazy random walk strategy ``converts"  a random walk on an non-regular graph to a slower random walk on a regular graph. 

\begin{definition}\label{def:lazy-rw}
At each step of the walk pick a vertex $v$ from $V$ uniformly at random and if there is an edge from the current vertex to the vertex $v$ then we move to $v$, otherwise we stay at the current vertex. 
\end{definition} 

This strategy of lazy random walk in fact makes the graphs $n$-regular: every edge adjacent to the current vertex is picked with the probability $1/n$ and with the remaining probability we stay at the current vertex. 
Using this strategy, we can obtain the same results on non-regular graphs as well, but with a factor of $n$
slower. In fact, we can do better, if nodes know an an upper bound $d_{max}$ on the maximum degree of the dynamic network. Modify the lazy walk such that at each step of the walk stay at the current vertex $u$ with probability $1 - (d(u)/(d_{max} + 1))$ and with the remaining probability pick a neighbors uniformly at random. This only results in a slow down by a factor of $d_{max}$ compared to the regular case. %Another random walk strategy can mix the probability distribution on vertex set slightly faster. Suppose all the node knows an upper bound $d_{max}$ on the maximum degree of the dynamic network. Then at each step of the walk stay at the current vertex $u$ with probability $1 - (d(u)/(d_{max} + 1))$ and with the remaining probability pick a neighbors uniformly at random. \\
%We now move to showing how these techniques can be further extended for performing several random walks efficiently.
 
%\vspace{-0.07in}
\section{Algorithm for $k$ Random Walks}\label{sec:k-algo}
The previous section was devoted to performing a single random walk of length $\tau$ (mixing time) efficiently to sample from the stationary distribution. In many applications, one typically requires a large number of random walk samples. A larger amount of samples allows for a better estimation of the problem at hand. In this section we focus on obtaining several random walk samples.  Specifically, we consider the scenario when we want to compute $k$ independent walks each of
length $\tau$ from different (not necessarily distinct) sources $s_1, s_2, \ldots, s_k$. We show that {\sc Single-Random-Walk} (cf. Algorithm~\ref{alg:single-random-walk}) can be extended to solve this problem. In particular, the algorithm {\sc Many-Random-Walks} (for pseudocode cf. Algorithm~\ref{alg:many-random-walk}) to compute $k$ walks is essentially repeating the {\sc Single-Random-Walk} algorithm on each source with one common/shared phase, and yet through overlapping computation, completes faster than $k$ times the previous bound. The crucial observation is that we have to do Phase 1 only once and still ensure all walks are independent. The high level analysis is following. 

\paragraph{{\sc Many-Random-Walks} :} Let $\lambda=(32 \sqrt{k\tau \Phi+1}\log n+k)(\log n)^2$. If
$\lambda \ge \tau$ then run the naive random walk algorithm. %, i.e., the sources find walks of length $\tau$ simultaneously by sending tokens. 
Otherwise, do the following. First, modify Phase~2 of {\sc Single-Random-Walk} to create multiple walks, one at a time; i.e., in the second phase, we stitch the short walks together to get a
walk of length $\tau$ starting at $s_1$ then do the same thing for $s_2$, $s_3$, and so on. We show that {\sc Many-Random-Walks} algorithm finishes in $\tilde O\left(\min(\sqrt{k\tau \Phi}, k+\tau)\right)$ rounds with high probability. This result is also stated in the Theorem \ref{thm:kwalks} (Section \ref{sec:results}), but the formal proof is given below. The details of this specific extension is similar to the previous ideas even for the dynamic setting.

\subsection{Proof of the Theorem \ref{thm:kwalks} (restated below)}
\begin{theorem} {\sc Many-Random-Walks} (cf. Algorithm~\ref{alg:many-random-walk}) finishes in
$\tilde O\left(\min(\sqrt{k\tau \Phi}, k+\tau)\right)$
rounds with high probability.
\end{theorem}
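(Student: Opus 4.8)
The plan is to follow the same two‑case structure used in the statement itself: bound the running time by $\tilde O(k+\tau)$ via the naive algorithm, and bound it by $\tilde O(\sqrt{k\tau\Phi})$ via the stitching algorithm with the parameter $\lambda=(32\sqrt{k\tau\Phi+1}\log n+k)(\log n)^2$, and then observe that the algorithm runs whichever is faster (i.e.\ it checks whether $\lambda\ge\tau$ and picks the naive algorithm in that case). The first case is trivial: $k$ independent naive walks of length $\tau$ each, run one after another (or in parallel with the congestion argument), finish in $\tilde O(k\tau)$ trivially, but more carefully in $O(k+\tau)$ rounds if the $k$ walks are pipelined — since once a walk has taken one step the next walk can start, so after $\tau$ rounds the first walk is done and every subsequent round finishes one more walk. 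Actually we only need $O(k+\tau)$ as claimed; I will note that when $\lambda\ge\tau$ we are in the regime where $k+\tau=\tilde O(\sqrt{k\tau\Phi})$ up to the constants, so reporting $\tilde O(\min(\cdot))$ is consistent.

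\textbf{For the main case} ($\lambda<\tau$), I would mirror the proof of Theorem~\ref{thm:maintheorem}. Phase~1 is executed exactly once: every node $v$ generates $d$ coupons (short walks of length in $[\lambda,2\lambda-1]$). By the same congestion analysis as Lemma~\ref{lem:phase1} — expected two coupons per edge per round, Chernoff, $O(\log^2 n)$ bits per edge w.h.p. — Phase~1 still finishes in $O(\lambda)=\tilde O(\sqrt{k\tau\Phi})$ rounds, and crucially $d$ coupons per node suffice to serve \emph{all} $k$ walks. This is where the Random Walk Visits Lemma (Lemma~\ref{lem:visit-bound}) in its $k$‑walk form is used: with $\ell=\tau$ and $k$ starting points, w.h.p.\ every node $y$ satisfies $\sum_{i=1}^k N_\tau^{x_i}(y)\le 32 d\sqrt{k\tau+1}\log n+k$, and then by Lemma~\ref{lem:connector-bound} (applied to the union of all $k$ walks) each node is used as a connector at most $\frac{(32 d\sqrt{k\tau+1}\log n+k)(\log n)^2}{\lambda}=O(d)$ times — which is exactly why the single batch of $d$ coupons per node is never exhausted across all $k$ stitching processes. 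I would state this as a claim and cite those two lemmas, noting the union bound costs only an extra $1/n$ in the failure probability.

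\textbf{Then the Phase~2 accounting:} each of the $k$ walks needs $\Theta(\tau/\lambda)$ stitches, each stitch being one invocation of {\sc Sample-Coupon} costing $O(\Phi)$ rounds by Lemma~\ref{lem:lemma2.3}, plus a final naive stretch of at most $2\lambda$ steps. The walks are done sequentially in Phase~2, so the total Phase~2 time is $O\!\left(k\cdot\frac{\tau}{\lambda}\cdot\Phi + k\lambda\right)$; with $\lambda=\tilde\Theta(\sqrt{k\tau\Phi})$ the first term is $\tilde O\!\left(\frac{k\tau\Phi}{\sqrt{k\tau\Phi}}\right)=\tilde O(\sqrt{k\tau\Phi})$, and the $k\lambda$ term is subsumed when $\lambda<\tau$ by the choice of regime. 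Adding Phase~1's $O(\lambda)$ gives $\tilde O(\sqrt{k\tau\Phi})$ overall, w.h.p. Correctness (that each of the $k$ returned nodes is distributed as the endpoint of a genuine independent $\tau$‑length walk on the evolving graph) follows exactly as in Lemma~\ref{lem:correctness}, using that distinct coupons index independent short walks and that stitching preserves independence across walks.

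\textbf{The main obstacle} I anticipate is making rigorous the claim that a single Phase~1 (just $d$ coupons per node, generated once) simultaneously supplies all $k$ stitching processes without any node running out of coupons: this requires the $k$‑walk version of the visits bound \emph{together with} the connector‑spreading argument applied across all walks at once, and one must be careful that the coupons consumed by one walk (and deleted) are precisely the ones not reused, so the per‑node consumption is the \emph{sum} over all $k$ walks of connector‑visits to that node — which is bounded by $O(d)$ by the calculation above, but only because of the specific choice of $\lambda$. I would isolate this in a single claim and invoke Lemma~\ref{lem:visit-bound} and Lemma~\ref{lem:connector-bound} to discharge it, flagging that the rest is bookkeeping analogous to the single‑walk proof.
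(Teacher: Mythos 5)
Your overall route is the paper's: the same case split on $\lambda\ge\tau$ versus $\lambda<\tau$, a single shared Phase~1 costing $O(\lambda)$, Phase~2 charged $\tilde O(k\Phi\tau/\lambda)$ for the stitches via Lemma~\ref{lem:lemma2.3}, and coupon sufficiency via Lemmas~\ref{lem:visit-bound} and~\ref{lem:connector-bound} (the paper leaves that last point implicit in its choice of $\lambda$; making it explicit is correct and welcome). Two of your steps, however, do not go through as stated. First, in the case $\lambda\ge\tau$ the ``pipelining'' picture is not the right mechanism: the naive algorithm sends the $k$ tokens simultaneously from different sources, and the obstruction to finishing in about $\tau$ rounds is edge congestion in the CONGEST model, not sequential staging --- staggering the starts does not prevent many tokens from needing the same edge in the same round. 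The paper's argument is precisely the congestion argument you mention only in passing: Lemma~\ref{lem:visit-bound} bounds the total visits to any node by $\tilde O(d(\sqrt{k\tau}+k))$, hence the per-edge congestion is $\tilde O(\sqrt{k\tau}+k)$ (as in Lemma~\ref{lem:phase1}), and congestion plus dilation $\tau$ gives $\tilde O(\sqrt{k\tau}+k+\tau)=\tilde O(k+\tau)$ since $2\sqrt{k\tau}\le k+\tau$. You need to carry that out; ``one more walk finishes per round'' is not justified.

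Second, your claim that the $k\lambda$ term (the $k$ sequential naive tails, each of length at most $2\lambda$) ``is subsumed when $\lambda<\tau$'' is arithmetically false: with $\lambda=\tilde\Theta(\sqrt{k\tau\Phi}+k)$ one has $k\lambda=\tilde\Omega(k\sqrt{k\tau\Phi})$, which exceeds the target $\tilde O(\sqrt{k\tau\Phi})$ by roughly a factor $k$, and in fact dominates the stitching cost $k\Phi\tau/\lambda$ because $\lambda^2\ge\tau\Phi$. The paper's own proof charges Phase~2 only for the stitches and never mentions the tails, so you have actually noticed a term the paper glosses over --- but you cannot discharge it by fiat. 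To recover the stated bound you would have to handle the tails explicitly, for instance by deferring all $k$ tails and running them concurrently, bounding the resulting congestion again by Lemma~\ref{lem:visit-bound}; in the regime $\lambda<\tau$ one has $k\le k\Phi\tau/\lambda=\tilde O(\sqrt{k\tau\Phi})$, hence $\lambda$, $\sqrt{k\tau}$ and $k$ are all $\tilde O(\sqrt{k\tau\Phi})$ and such a congestion-plus-dilation bound stays within the claim. As written, though, this step of your argument fails, and the Case~1 justification needs to be replaced by the visits-lemma congestion bound.
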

\begin{proof}
Recall that we assume $\lambda=(32 \sqrt{k\tau \Phi+1}\log n+k)(\log n)^2$. First, consider the case where $\lambda \ge \tau$. In this case, $\min(\sqrt{k\tau \Phi}+k, \sqrt{k\tau}+k+\tau)=\tilde O(\sqrt{k\tau}+k+\tau)$. By Lemma~\ref{lem:visit-bound}, each
node $x$ will be visited at most $\tilde O(d (\sqrt{k\tau}+k))$ times. Therefore, using the same argument as Lemma~\ref{lem:phase1},
the congestion is $\tilde O(\sqrt{k\tau} + k)$ with high probability. Since the dilation is $\tau$, {\sc Many-Random-Walks}
takes $\tilde O(\sqrt{k\tau} + k +\tau)$ rounds as claimed. Since $2\sqrt{k\tau} \le k + \tau$, this bound reduces
to $\tilde O(k +\tau)$. 

Now, consider the other case where $\lambda < \tau$. In this case,
$\min(\sqrt{k\tau \Phi} +k, \sqrt{k\tau}+k+\tau)=\tilde O(\sqrt{k\tau \Phi}+k)$. Phase~1 takes $O(\lambda) = \tilde O(\sqrt{k\tau \Phi}+k)$. The stitching in Phase~2 takes $\tilde O(k \Phi\tau /\lambda) = \tilde O(\sqrt{k\tau \Phi})$. Since $k \Phi\tau /\lambda \geq k\Phi \geq k$, the
total number of rounds required is $\tilde O(\sqrt{k\tau \Phi})$ as claimed.
\end{proof}

\begin{algorithm}[H]\label{many-walks algorithm}
\caption{\sc Many-Random-Walks($\{s_j \}$, $1\leq j \leq k$, $\tau$)}
\label{alg:many-random-walk}
\textbf{Input:} Starting nodes $s_1, s_2, \ldots, s_k$, (not necessarily distinct) and desired walks length $\tau$ and parameter $\lambda$.\\
\textbf{Output:} Each destination node of the walks outputs the ID of its corresponding source.\\

\textbf{Case~1.} When $\lambda \ge \tau$. [we assumed $\lambda=(32 \sqrt{k\tau \Phi+1}\log n+k)(\log n)^2$]
\begin{algorithmic}[1] 
\STATE  Run the naive random walk algorithm, i.e., the sources find walks of length $\tau$ simultaneously by sending tokens.

\end{algorithmic}

\textbf{Case~2.} When $\lambda < \tau$. \\
\textbf{Phase 1: (Each node $v$ performs $d$ random walks of length $\lambda + r_i$ where $r_i$ (for each $1\leq i \leq d$) is chosen independently at random in the range $[0, \lambda -1]$. At the end of the process, there are $d$ (not necessarily distinct) nodes holding a ``coupon" containing the ID of $v$.)}
\begin{algorithmic}[1]
\FOR{each node $v$}
\STATE  Perform $d$ walks of length $\lambda + r_i$, as in Phase~1 of algorithm {\sc Single-Random-Walk}. 
\ENDFOR

\end{algorithmic}

\textbf{Phase 2: (Stitch $\Theta (\tau/\lambda)$ short walks for each source node $s_j$)}
\begin{algorithmic}[1]
\FOR{j = 1 to k}
\STATE  Consider source $s_j$. Use algorithm {\sc Single-Random-Walk} to perform a walk of length $\tau$ from $s_j$.
\STATE When algorithm {\sc Single-Random-walk} terminates, the sampled destination outputs ID of the source $s_j$. 
\ENDFOR
\end{algorithmic}

\end{algorithm}
%\vspace{-0.1in}
\section{Applications}\label{sec:apps}
While the previous sections focused on performing the fundamental primitive of random walks efficiently in a dynamic network, in this section we show that these techniques actually directly help in specific applications in dynamic networks as well. 
%The following application results are specifically strong and novel for the dynamic setting, and  rely crucially on our round complexity guarantees and strong bounds in this dynamic distributed network framework. Hence previous results on random walks in static networks did not yield the following performance improvements in well-studied problems.
\subsection{Information Dissemination (or $k$-Gossip)}
\label{sec:info-dissem}
We present a fully distributed algorithm for the {\em $k$-gossip} problem in $d$-regular evolving graphs (full pseudocode is given in Algorithm \ref{alg:token-dissemination}). 
Our distributed algorithm is based on the centralized algorithm of \cite{DPRS-arxiv} which consists of two phases. The first phase consists
of sending some $f$ copies (the value of the parameter $f$ will be fixed in the analysis) of each of the $k$ tokens to a set of {\em random} nodes. 
%In \cite{DPRS-arxiv}, this is implemented by a centralized algorithm assuming that the algorithm knows the entire sequence of graphs in advance. 
%Here, we show that
%this can be efficiently implemented in a distributed and localized fashion using our ``many" random walks algorithm (cf. Section \ref{sec:k-algo}) --- which shows how to efficiently perform many independent random walks simultaneously.
 %In the first phase we send some $f$ copies of each token $t$ to random nodes. 
 We use algorithm {\sc Many-Random-Walk} (cf. algorithm~\ref{alg:many-random-walk}) to efficiently do this. In the second phase we simply broadcast each token $t$ from the random places to reach all the nodes. We show that if every node having a token $t$ broadcasts it for $O(n\log n/f)$ rounds, then with high probability all the nodes will receive the token $t$.%The pseudo code (cf. Algorithm \ref{alg:token-dissemination}), analysis, and the full proof of the result (cf. Theorem \ref{thm:token-bound}) are placed in the Appendix.
 
\begin{algorithm}[H]
\caption{\sc K-Information-Dissemination($\mathcal{G}$, $k$)}
\label{alg:token-dissemination}
\textbf{Input:} An evolving graphs $\mathcal{G}: G_1, G_2, \ldots$ and $k$ token in some nodes.\\
\textbf{Output:} To disseminate $k$ tokens to all the nodes.\\

\textbf{Phase 1: (Send $f = n^{2/3} (k/\tau \Phi)^{1/3}$ copies of each token to random places)}
\begin{algorithmic}[1]
%\FOR{each token $t$}
\STATE  Every node holding token $t$, send $f = n^{2/3} (k/\tau \Phi)^{1/3}$ copies of each token to random nodes using algorithm {\sc Many-Random-Walk}.
%\ENDFOR

\end{algorithmic}

\textbf{Phase 2: (Broadcast each token for $O(n\log n/f)$ rounds)}
\begin{algorithmic}[1]
\FOR{each token $t$}
\STATE  For the next $2 n\log n/f$ rounds, let all the nodes has token $t$ broadcast the token.
%\STATE When algorithm {\sc Single-Random-walk} terminates, the sampled destination outputs ID of the source $s_j$. 
\ENDFOR
\end{algorithmic}

\end{algorithm}

We show that our proposed $k$-gossip algorithm finishes in $\tilde{O}(n^{1/3}k^{2/3}(\tau \Phi)^{1/3})$ rounds w.h.p. 
To make sure that the algorithm terminates in $O(nk)$ rounds, 
we run the above algorithm in parallel with the trivial algorithm (which is just broadcast each of the $k$ tokens sequentially; clearly this will take $O(nk)$ rounds in total) and stops when one of the two algorithm stop. Thus the claimed bound in Theorem \ref{thm:token-bound} holds. The formal proof is below.\\

\noindent \textbf{Proof of the Theorem \ref{thm:token-bound} (restated below)}
\begin{theorem}
The algorithm~(cf. algorithm~\ref{alg:token-dissemination}) solves $k$-gossip problem with high probability \\ in $\tilde{O}(\min\{n^{1/3}k^{2/3}(\tau \Phi)^{1/3}, nk\})$ rounds. 
\end{theorem}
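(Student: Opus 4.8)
The plan is to analyze the two phases of Algorithm~\ref{alg:token-dissemination} separately and then combine, setting $f = n^{2/3}(k/\tau\Phi)^{1/3}$ as in the algorithm. First I would bound Phase~1: we are sending $fk$ tokens in total (namely $f$ copies of each of the $k$ tokens) to random destinations, each destination being the endpoint of an independent $\tau$-length random walk. By Theorem~\ref{thm:mixtime} (and the correctness of our random-walk algorithm, Lemma~\ref{lem:correctness}), each such walk lands at a node distributed close to the uniform distribution; since we are dispatching $K := fk$ independent walks, this is exactly an instance of the $k$~Random~Walks problem with parameter $K$. Invoking Theorem~\ref{thm:kwalks}, {\sc Many-Random-Walks} completes in $\tilde O(\min(\sqrt{K\tau\Phi}, K+\tau))$ rounds. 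Plugging $K = fk = n^{2/3}k^{4/3}(\tau\Phi)^{-1/3}$ into $\sqrt{K\tau\Phi}$ gives $\sqrt{n^{2/3}k^{4/3}(\tau\Phi)^{2/3}} = n^{1/3}k^{2/3}(\tau\Phi)^{1/3}$, which is the target bound; one checks that in the regime where this phase is the bottleneck the $\min$ is indeed achieved by the first term.

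Next I would bound Phase~2. Here I must show the claim that if every node currently holding a given token $t$ broadcasts it for $O(n\log n/f)$ rounds, then w.h.p.\ every node receives $t$. The key structural fact is that the $f$ copies of $t$ were placed at (close to) uniformly random nodes in Phase~1; so for any target node $w$ and any time $t_0$, the set of $f$ source copies is ``spread out'' enough that at least one of them is within dynamic-distance $O(n\log n / f)$ of $w$ along the evolving graph. More precisely, fix $w$ and run the broadcast backwards: the set of nodes that can reach $w$ within $r$ rounds in the evolving graph grows by at least one new node per round as long as it has not captured all of $V$ (since each $G_t$ is connected), so after $r$ rounds this ``influence set'' has size $\min(r+1, n)$. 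Since the $f$ copies are placed uniformly at random and independently, the probability that none of them falls into a fixed set of size $r+1$ is at most $(1 - (r+1)/n)^f \le e^{-f(r+1)/n}$; taking $r = \Theta(n\log n/f)$ makes this $\le 1/n^{\Omega(1)}$, and a union bound over the (at most $n$) target nodes and $k$ tokens finishes the phase. This costs $\tilde O(n/f) = \tilde O(n \cdot n^{-2/3}(\tau\Phi/k)^{1/3}) = \tilde O(n^{1/3}(\tau\Phi)^{1/3}k^{-1/3})$ rounds per token, and summing over the $k$ tokens broadcast sequentially (or noting each edge carries at most one token per round, so tokens can be pipelined) yields $\tilde O(n^{1/3}k^{2/3}(\tau\Phi)^{1/3})$ — matching Phase~1.

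Finally, adding the two phases gives the $\tilde O(n^{1/3}k^{2/3}(\tau\Phi)^{1/3})$ bound, and running this in parallel with the trivial sequential-broadcast algorithm (which terminates in $O(nk)$ rounds deterministically) and halting when either finishes yields the stated $\tilde O(\min\{n^{1/3}k^{2/3}(\tau\Phi)^{1/3}, nk\})$ w.h.p. One cosmetic point: if $f$ as defined exceeds $n$ (or is smaller than $1$), the formula must be clamped, but in those degenerate parameter ranges the $nk$ term of the $\min$ already dominates, so the bound holds trivially. I expect the main obstacle to be the Phase~2 coverage argument: one must be careful that the ``influence set grows by one per round'' argument is valid under an oblivious adversary (it is, since connectivity of every $G_t$ is all that is used, and the random placement from Phase~1 is independent of the adversary's choices), and that congestion in Phase~2 — multiple tokens or copies traversing the same edge — does not blow up the round count; this is handled by either sequential processing of tokens or a standard pipelining/random-delay argument giving only polylog overhead.
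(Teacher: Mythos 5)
Your proposal is correct and follows essentially the same route as the paper: Phase~1 is treated as an instance of the $k$ Random Walks problem with $K=fk$ walks costing $\tilde{O}(\sqrt{fk\tau\Phi})$ rounds, Phase~2 uses the same connectivity-based coverage argument (a set of $\Omega(n\log n/f)$ nodes can reach any fixed target within $O(n\log n/f)$ rounds, so the $f$ randomly placed copies hit it w.h.p., with a union bound over nodes and tokens), the phases are balanced by the same choice $f=n^{2/3}(k/\tau\Phi)^{1/3}$, and the $\min$ with $nk$ comes from running the trivial sequential broadcast in parallel. The only point the paper makes explicit that you pass over quickly is that the Phase-1 destinations are only \emph{close} to uniform; the paper handles this by noting the per-node sampling probability can be made $1/n\pm 1/n^2$ (by slightly lengthening the walks), after which your hitting-probability estimate goes through unchanged.
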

%\begin{proof}[Proof of the Theorem \ref{thm:token-bound}]
%begin{theorem}
%The algorithm~(cf. algorithm~\ref{alg:token-dissemination}) solves $k$-gossip problem with high probability\\ in $\tilde{O}(\min\{n^{1/3}k^{2/3}(\tau \Phi)^{1/3}, nk\})$ rounds. 
%\end{theorem}
\begin{proof}%[Proof of the Theorem \ref{thm:token-bound}]
We are running both the trivial and our proposed algorithm in parallel. Since the trivial algorithm finishes in $O(nk)$ rounds, therefore we concentrate here only on the round complexity of our proposed algorithm. \\
We are sending $f$ copies of each $k$ token to random nodes which means we are sampling $k f$ random nodes from uniform distribution. So using the {\sc Many-Random-Walk} algorithm, phase 1 takes $\tilde{O}(\sqrt{k f \tau \Phi})$ rounds. 

Now fix a node $v$ and a token $t$. Let $S$ be the set of nodes which has the token $t$ after phase 1. Since the token $t$ is broadcast for $2 n \log n/f$ rounds, there is a set $S_v^t$ of atleast $2 n \log n/f$ nodes from which $v$ is reachable within $2 n \log n/f$ rounds. This is follows from the fact that at any round at least one uninformed node will be informed as the graph being always connected. It is now clear that if $S$ intersects $S_v^t$, $v$ will receive token $t$. The elements of the set $S$ were sampled from the vertex set through the algorithm {\sc Many-Random-Walk} which sample nodes from close to uniform distribution, not from actual uniform distribution. We can make it though very close to uniform by extending the walk length multiplied by some constant. Suppose {\sc Many-Random-Walk} algorithm samples nodes with probability $1/n \pm 1/n^2$ which means each node in $S$ is sampled with probability $1/n \pm 1/n^2$. So the probability of a single node $w \in S$ does not intersect $S^t_v$ is at most $(1 - |S^t_v|(\frac{1}{n} \pm \frac{1}{n^2})) = (1 - \frac{2n\log n}{f} \times \frac{n \pm 1}{n^2})$. Therefore the probability of any of the $f$ sampled node in $S$ does not intersect $S^t_v$ is at most $(1 - \frac{2(n \pm 1)\log n}{n f})^f \leq \frac{1}{n^{2 \pm 2/n}}$. Now using union bound we can say that every node in the network receives the token $t$ with high probability. This shows that phase~2 uses $k n\log n/f$ rounds and sends all $k$ tokens to all the nodes with high probability. Therefore the algorithm finishes in $\tilde{O}(\sqrt{k f \tau \Phi} + k n/f)$ rounds. Now choosing $f = n^{2/3} (k/\tau \Phi)^{1/3}$ gives the bound as $\tilde{O}(n^{1/3} k^{2/3} (\tau \Phi)^{1/3})$. Hence, the $k$-gossip problem solves with high probability in $\tilde{O}(\min\{n^{1/3}k^{2/3}(\tau \Phi)^{1/3}, nk\})$ rounds. 
\end{proof}
Note that the mixing time $\tau$ of a regular dynamic graph is at most $O(n^2)$ (follows from Theorem~\ref{thm:mixtime} and Corollary~\ref{cor:second-eigen-bound}). Putting this in Theorem  \ref{thm:token-bound}, yields a better  bound for $k$-gossip problem in a regular dynamic graph.

%**Note: I think a more careful analysis of lemma~\ref{lem:node-counting} improves the round complexity by a factor of $d$ since we consider $d$-regular graphs. 

\subsection{Decentralized Estimation of Mixing Time}
\label{sec:mixest}
We focus on estimating the {\em dynamic mixing time} $\tau$ of a $d$-regular connected non-bipartite evolving graph $\mathcal{G}= G_1, G_2, \ldots$. We discussed in Section \ref{mixing_time} that $\tau$ is maximum of the mixing time of any graph in $\{G_t : t \geq 1 \}$. To make it appropriate for our algorithm, we will assume that all graphs $G_t$ in the graph process $\mathcal{G}$ have the same mixing time $\tau_{mix}$. Therefore $\tau = \tau_{mix}$. While the definition of $\tau$ (cf. Definition \ref{def:mix-dynamic}) itself is consistent, estimating this value becomes significantly harder in the dynamic context. The intuitive approach of estimating distributions continuously and then adapting a distribute-closeness test works well for static graphs, but each of these steps becomes far more involved and expensive when the network itself changes and evolves continuously. Therefore we need careful analysis and new ideas in obtaining the following results. We introduce related notations and definitions in Section~\ref{mixing_time}. 

The goal is to estimate $\tau^x_{mix}$ (mixing time for source $x$). Notice that the definition $\tau^x_{mix}$ and dynamic mixing time, $\tau$ (cf. Section \ref{mixing_time}) are consistent for a $d$-regular evolving graph $\mathcal{G} = G_1,G_2, \dots$ due to the monotonicity property (cf. Lemma~\ref{lem:monotonicity}) of distributions.

We now present an algorithm to estimate $\tau$. The main idea behind this approach is, given a source node, to run many random walks of some length $\ell$ using the approach described in Section \ref{sec:k-algo}, and use these to estimate the distribution induced by the $\ell$-length random walk. We then compare the the distribution at length $\ell$, with the stationary distribution to determine if they are close, and if not, double $\ell$ and retry.     

For the case of static graph (with diameter $D$), Das Sarma et al.~\cite{DasSarmaNPT10} shows that the one can approximate mixing time in $\tilde O(n^{1/4} \sqrt{D\tau^x(\epsilon)})$ rounds. We show here that this bound also holds to approximate mixing time even for the dynamic graphs which is $d$-regular.  We use the technique of Batu et al.~\cite{BFFKRW} to determine if the distribution is $\epsilon$-near to uniform distribution. Their result is restated in the following theorem. 

\begin{theorem}[\cite{BFFKRW}]\label{thm:batu}
For any $\epsilon$, given $\tilde{O}(n^{1/2}poly(\epsilon^{-1}))$ samples of a distribution $X$
over $[n]$, and a specified distribution $Y$, there is a test that outputs PASS with high probability if $|X-Y|_1\leq \frac{\epsilon^3}{4\sqrt{n}\log n}$, and outputs FAIL with high probability if $|X-Y|_1\geq 6\epsilon$.
\end{theorem}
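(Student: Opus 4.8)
The plan is to prove this as an $L_1$ identity test built on top of an $L_2$ (collision-based) closeness test, crucially exploiting that the comparison distribution $Y$ is given explicitly. First I would set up the collision estimator for the squared $L_2$ distance. Writing $\|X-Y\|_2^2 = \|X\|_2^2 - 2\langle X,Y\rangle + \|Y\|_2^2$, the term $\|Y\|_2^2 = \sum_i Y_i^2$ is computed exactly since $Y$ is known; the cross term $\langle X,Y\rangle = \sum_i X_i Y_i = \e_{s\sim X}[Y_s]$ is an expectation over a single $X$-sample, so it is estimated by the empirical mean of $Y_s$ over the drawn samples $s$, an unbiased estimator whose values lie in $[0,\|Y\|_\infty]$ and hence concentrate by a Hoeffding/Chernoff bound; and $\|X\|_2^2 = \Pr_{s,s'\sim X}[s=s']$ is estimated from the number of pairwise collisions among the $m$ samples, whose expectation is $\binom{m}{2}\|X\|_2^2$. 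Combining these yields an estimator $\widehat{\|X-Y\|_2^2}$, and the whole difficulty is to control its fluctuation with only $\tilde O(\sqrt{n})$ samples.

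The obstacle is that the variance of the collision count depends on the higher moments of $X$ (essentially $\|X\|_2^2$ and $\|X\|_3^3$), so without structure $\Omega(n)$ samples could be needed. To keep the count at $\tilde O(\sqrt n\,\mathrm{poly}(\epsilon^{-1}))$, I would partition $[n]$ into a heavy part $H=\{i:\max(X_i,Y_i)>t\}$ and a light part $L=[n]\setminus H$, for a threshold $t=\Theta(\epsilon^2/\sqrt n)$ detectable from the samples together with knowledge of $Y$. Since every mass in $H$ exceeds $t$ we have $|H|=O(1/t)$, so each $X_i$ with $i\in H$ can be estimated to small additive error directly and compared against the known $Y_i$; this catches any $L_1$ discrepancy concentrated on heavy coordinates. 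On $L$ the restricted masses satisfy $\|X|_L\|_\infty\le t$, which bounds the relevant higher moments and makes the collision estimator concentrate with $\tilde O(\sqrt n)$ samples — precisely the regime in which the $L_2$ test is efficient.

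The final step translates the $L_2$ guarantee on $L$ into an $L_1$ guarantee. By Cauchy--Schwarz over the at most $n$ light coordinates, $\|X|_L-Y|_L\|_1\le \sqrt n\,\|X|_L-Y|_L\|_2$, which is where the factor $\sqrt n$ in the PASS threshold originates; folding together the heavy-part additive errors, the light-part $L_2$ slack, and the choice $t=\Theta(\epsilon^2/\sqrt n)$ produces the $\epsilon^3$ scaling and the constant in the stated closeness bound $\epsilon^3/(4\sqrt n\log n)$. On the FAIL side, a gap $\|X-Y\|_1\ge 6\epsilon$ forces either a heavy-coordinate discrepancy (detected by the direct estimates) or a light-part $L_2$ distance above the test's detection threshold, so the test reports FAIL. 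Finally, to upgrade ``constant success probability'' to ``with high probability'' I would run $O(\log n)$ independent repetitions and take the majority vote, which contributes the remaining $\log n$ factor.

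I expect the main obstacle to be the variance analysis of the collision estimator together with the calibration of the heaviness threshold $t$: one must pick $t$ simultaneously small enough that the light-part collision test concentrates with $\tilde O(\sqrt n)$ samples, and large enough that the $L_2$-to-$L_1$ conversion and the heavy-part direct estimation jointly close the gap between the PASS and FAIL guarantees with the precise constants claimed. Balancing these competing requirements is the technical crux, and it is exactly what fixes the unusual $\epsilon^3/(4\sqrt n\log n)$ versus $6\epsilon$ separation in the statement.
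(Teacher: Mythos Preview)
The paper does not prove this theorem; it is quoted from Batu et al.\ and accompanied only by a one-paragraph algorithmic sketch. That sketch differs from your outline: it describes partitioning $[n]$ into buckets according to the \emph{known} probabilities $Y(i)$, and then, for each bucket touched by the samples, comparing the empirical $X$-sample count against the exact $Y$-count computed from the explicit description of $Y$. Your proposal instead uses a single heavy/light threshold together with a collision-based $L_2$ estimator on the light part and direct frequency estimation on the heavy part. Both routes lead to $L_1$ identity testing with $\tilde O(\sqrt n\,\mathrm{poly}(\epsilon^{-1}))$ samples, and your explanation of where the $\sqrt n$ (Cauchy--Schwarz on the light part) and the $\epsilon^3$ (balancing the threshold against the $L_2$-to-$L_1$ loss) come from is sound; the bucketing approach buys a cleaner analysis because the partition is determined entirely by the known $Y$, while your approach is closer in spirit to later, more streamlined identity testers.

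One point to tighten: your heavy set $H=\{i:\max(X_i,Y_i)>t\}$ depends on the unknown $X_i$ and is therefore not directly computable. You must either (i) define heaviness using $Y$ alone, which is essentially what the bucketing method does, or (ii) spend an initial batch of samples to flag the empirically $X$-heavy coordinates and then run the collision test on a fresh batch, making the two phases and their independence explicit. Without one of these fixes the partition step is not well defined.
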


The distribution $X$ in our context is some distribution on nodes and $Y$ is the stationary distribution, i.e., $Y(v) = 1/n$ (assume $\vert V \vert = n$ in the
network). %In this case, the algorithm used in the above theorem can be simulated in a distributed network in �O(D + 2/ log(1 + ?)) rounds, as in the following theorem.
We now give a very brief description of the algorithm of Batu et. al.~\cite{BFFKRW} to illustrate that it can in fact be simulated on the distributed network efficiently. The algorithm partitions the set of nodes in to buckets based on the steady state probabilities. Each of the $\tilde{O}(n^{1/2}poly(\epsilon^{-1}))$ samples from $X$ now falls in one of these buckets. Further, the actual count of number of nodes in these buckets for distribution $Y$ are counted. The exact count for $Y$ for at most $\tilde{O}(n^{1/2}poly(\epsilon^{-1}))$ buckets (corresponding to the samples) is compared with the number of samples from $X$; these are compared to determine if $X$ and $Y$ are close. Note that the total number of nodes n and $\epsilon$ can be broadcasted to all nodes in $O(\Phi)$ rounds and each node can determine which bucket it is in in $O(\Phi)$ rounds.We refer the reader to their paper~\cite{BFFKRW} for a precise description.

Our algorithm starts with $\ell=1$ and runs $K=\tilde{O}(\sqrt{n})$ walks of length $\ell$ from the specified source $x$. As the test of comparison with the steady state distribution outputs FAIL (for choice of $\epsilon=1/12e$), $\ell$ is doubled. This process is repeated to identify the largest $\ell$ such that the test outputs FAIL with high probability and the smallest $\ell$ such that the test outputs PASS with high probability. These give lower and upper bounds on the required $\tau^x_{mix}$ respectively. Our resulting theorem is presented below. \\

\noindent \textbf{Proof of the Theorem \ref{thm:complexity_bound_mixing_time} (restated below)}
\begin{theorem}
Given connected $d$-regular evolving graphs with dynamic diameter $\Phi$, a node $x$ can find, in $\tilde{O}(n^{1/4}\sqrt{\Phi \tau^x(\epsilon)})$ rounds, a time
$\tilde{\tau}^x_{mix}$ such that $\tau^x_{mix}\leq \tilde{\tau}^x_{mix} \leq \tau^x(\epsilon)$, where $\epsilon = \frac{1}{6912e\sqrt{n}\log n}$.
% where $T$ is the smallest time such that $r_x(T)\leq \frac{1}{6912e\sqrt{n}\log n}$.
%This can be done in $\tilde{O}(n^{1/2} + n^{1/4}\sqrt{Dt_{mix}})$ rounds.
%
%that is w.h.p. between the $6\epsilon$-near mixing time and $\frac{\epsilon^3}{4\sqrt{n}\log n}$-near mixing time in $\tilde{O}(n^{1/2}poly(\epsilon^{-1}) + n^{1/4}poly(\epsilon^{-1})\sqrt{Dt_{mix}})$ rounds.
%
%If the degree distribution is unknown to the nodes, a node can find an $\epsilon$-close mixing time in $\tilde{O}(n^{2/3}poly(\epsilon^{-1}) + n^{1/3}poly(\epsilon^{-1})\sqrt{Dt_{mix}})$ rounds.
\end{theorem}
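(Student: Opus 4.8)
The plan is to cast the informal recipe from this section as a distributed \emph{doubling search} over the walk length $\ell$. In phase $j$ we run $K=\tilde O(\sqrt n)$ independent random walks of length $\ell=2^{j-1}$ starting from $x$, using algorithm {\sc Many-Random-Walks} with all $K$ sources set to $x$ and target length $\ell$ (so the $K$ destinations are i.i.d.\ samples of the length-$\ell$ walk distribution $\pi_x(\ell)$, by the analysis of that algorithm); the number of samples $K$ is chosen to meet the sample complexity $\tilde O(\sqrt n\,\mathrm{poly}(\epsilon_B^{-1}))$ of Theorem~\ref{thm:batu} with the constant tester parameter $\epsilon_B=1/12e$. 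We then simulate the tester of Batu et al.\ at node $x$ to compare the empirical distribution of these $K$ samples with the (fixed, uniform) stationary distribution $\pi=\mathbf{1}/n$ of the $d$-regular evolving graph: broadcasting $n$ and $\epsilon$ so each node can compute its Batu bucket in $O(\Phi)$ rounds, then aggregating the bucket statistics up to $x$. We double $\ell$ as long as the tester reports FAIL w.h.p., binary-search inside the last FAIL--PASS window, and output $\tilde\tau^x_{mix}$, the smallest tested length at which the tester reports PASS w.h.p.

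For correctness, the key structural fact is the monotonicity property (Lemma~\ref{lem:monotonicity}): $t\mapsto\lVert\pi_x(t)-\pi\rVert$ is non-increasing, so for every threshold $\delta$ the set $\{t:\lVert\pi_x(t)-\pi\rVert<\delta\}$ is an up-interval. By Theorem~\ref{thm:batu} with $\epsilon_B=1/12e$, the tester FAILs w.h.p.\ whenever the true distance exceeds $6\epsilon_B=1/2e$ and PASSes w.h.p.\ whenever it is at most $\epsilon_B^3/(4\sqrt n\log n)=\epsilon$. Hence for every $\ell<\tau^x_{mix}=\tau^x(1/2e)$ the tester FAILs w.h.p., which forces $\tilde\tau^x_{mix}\ge\tau^x_{mix}$; and for every $\ell\ge\tau^x(\epsilon)$ the tester PASSes w.h.p., so the smallest PASS length satisfies $\tilde\tau^x_{mix}\le\tau^x(\epsilon)$. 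The value of $\epsilon$ in the statement is chosen precisely so that the PASS window of Batu's test coincides with the definition of $\tau^x(\epsilon)$; reconciling the $\ell_1$/$\ell_2$ conventions in the definition of $\tau^x(\cdot)$ and boosting each tester invocation to high probability by $O(\log n)$ independent repetitions are routine and only cost $\polylog n$ factors.

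For the running time, one phase with walk length $\ell$ costs $\tilde O(\min\{\sqrt{K\ell\Phi},\,K+\ell\})=\tilde O(n^{1/4}\sqrt{\ell\Phi})$ rounds for the $K$ walks by Theorem~\ref{thm:kwalks} (instantiated with $k=K=\tilde O(\sqrt n)$ and target length $\ell$), plus $O(\Phi)$ rounds for the distributed bucketing and for aggregating the (combinable) bucket/collision statistics to $x$. Since $\ell$ only ranges over the geometric sequence $1,2,4,\dots,O(\tau^x(\epsilon))$ and both cost terms are monotone in $\ell$, the total over all $O(\log\tau^x(\epsilon))=\tilde O(1)$ phases (including the $O(\log\tau^x(\epsilon))$ binary-search steps) is dominated up to $\polylog n$ factors by the final phase, yielding the claimed $\tilde O(n^{1/4}\sqrt{\Phi\,\tau^x(\epsilon)})$ rounds.

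The step I expect to be the main obstacle is the faithful distributed simulation of Batu et al.'s sublinear tester inside the dynamic CONGEST model within the stated budget. The tester needs node $x$ to compare, for each bucket that receives a sample, the number of samples landing there with the exact number of vertices that $\pi$ assigns to that bucket; doing this naively would require convergecasting all $\tilde O(\sqrt n)$ sample locations to $x$, and in a network whose topology changes every round one cannot fix a single BFS tree to pipeline along. I would handle this by observing that against the uniform target the relevant statistics are \emph{combinable} aggregates (e.g.\ per-node sample multiplicities $c_v$ and the resulting collision count $\sum_v\binom{c_v}{2}$, or small per-bucket counts), so they can be summed up a re-established spanning structure in $O(\Phi)$ rounds per round of communication and thus $\tilde O(\Phi)$ rounds total; this keeps the tester-simulation cost dominated by the $\tilde O(n^{1/4}\sqrt{\ell\Phi})$ cost of the random-walk phase given by Theorem~\ref{thm:kwalks}, so the dominant term is as claimed.
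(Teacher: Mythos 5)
Your proposal matches the paper's proof essentially step for step: the same doubling-plus-binary search over $\ell$ justified by the monotonicity lemma, the same use of {\sc Many-Random-Walks} with $K=\tilde O(\sqrt n)$ sources at $x$ costing $\tilde O(\sqrt{K\ell\Phi})=\tilde O(n^{1/4}\sqrt{\ell\Phi})$ per phase, and the same invocation of the Batu et al.\ tester with $\epsilon=1/12e$ simulated by broadcasting $n$ and $\epsilon$ so nodes bucket themselves and upcast counts in $O(\Phi)$ rounds. If anything, your discussion of aggregating the tester statistics and of the $\ell_1$-versus-$\ell_2$ convention is more explicit than the paper's own (rather terse) argument, so the proposal is correct and follows the same route.
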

\begin{proof}
Our goal is to check when the probability distribution (on vertex set $V$) of the random walk becomes stationary distribution which is uniform here. If a source node knows the total number of nodes in the network (which can be done through flooding in $O(\Phi)$ rounds), we only need
$\tilde{O}(n^{1/2}poly(\epsilon^{-1}))$ samples from a distribution to
compare it to the stationary distribution.  This can be achieved by
running {\sc MultipleRandomWalk} to obtain $K = \tilde{O}(n^{1/2}poly(\epsilon^{-1}))$ random walks. We choose $\epsilon = 1/12e$.
To find the approximate mixing time, we try out
increasing values of $\ell$ that are powers of $2$.  Once we find the
right consecutive powers of $2$, the monotonicity property admits a
binary search to determine the exact value for the specified $\epsilon$.
%of $\epsilon$-near mixing
%time. Note that we can apply binary search as $\epsilon$-near mixing
%time is a monotonic property.

\iffalse
The result
in~\cite{BFFKRW} can also be adapted to compare with the steady state distribution even if the source does not know the entire distribution. As described previously, the source only needs to know the {\em count} of number of nodes with steady state distribution in given buckets. Specifically, the buckets of interest are at most $\tilde{O}(n^{1/2}poly(\epsilon^{-1}))$ as the count is required only for buckets were a sample is drawn from. Since each node knows its own steady state probability (determined just by its degree), the source can broadcast a specific bucket information and recover, in $O(D)$ steps, the count of number of nodes that fall into this bucket. Using the standard upcast technique previously described, the source can obtain the bucket count for each of these at most $\tilde{O}(n^{1/2}poly(\epsilon^{-1}))$ buckets in $\tilde{O}(n^{1/2}poly(\epsilon^{-1}) + D)$ rounds.
\fi

We have shown previously that a source node can obtain $K$ samples from $K$ independent random walks of length $\ell$ in $\tilde{O}(\sqrt{K\ell \Phi})$ rounds. Setting $K=\tilde{O}(n^{1/2}poly(\epsilon^{-1}))$ completes the proof.
\end{proof}

Suppose our estimate of $\tau^x_{mix}$ is close to the dynamic mixing time of the network defined as $\tau = \max_{x}{\tau^x_{mix}}$, then this would allow us to estimate several related quantities. Given a dynamic mixing time $\tau$, we can approximate the spectral gap ($1-\lambda$) and the conductance ($\Psi$) due to the
%following known relations. The spectral gap is the $1-\lambda_2$ where $\lambda_2$ is the second eigenvalue of the connected transition matrix. It is
known relations that $\frac{1}{1-\lambda}\leq \tau \leq \frac{\log n}{1-\lambda}$ and $\Theta(1-\lambda)\leq \Psi \leq \Theta(\sqrt{1-\lambda})$ as shown in~\cite{JS89}. %Note that the spectral gap $(1 - \lambda)$ is the smallest spectral gap of all the graphs in $\{ G_t : t \geq 1 \}$ as our second largest eigenvalue $\lambda$ is the maximum of  all second largest eigenvalue of those graphs.  

%\vspace{-0.09in}
\section{Conclusion}\label{sec:conc}
We presented fast and fully decentralized algorithms for performing several random walks in distributed dynamic networks. Our algorithms satisfy strong round complexity guarantees and is the first work to present robust techniques for this fundamental graph primitive in dynamic graphs.
 We further extend the work to show how it can be used for efficient sampling and other applications such as token dissemination. 
 %Our bounds for the token dissemination problem improve on previously best known algorithms under a suitably general dynamic graph update model. 
Our work opens several interesting research directions. 
 In the recent years, several fundamental graph operatives are being explored in various distributed dynamic models, and it would be interesting to explore further along these lines and obtain new approaches for identifying sparse cuts or graph partitioning, and similar spectral quantities. 
As a specific question, it remains open whether the random walk techniques and subsequent bounds presented in this paper are optimal. 
 %Specifically are the bounds for estimating the dynamic mixing time, as well as the gossip dissemination problem tight? If yes, can one improve upon the bounds for somewhat more restrictive graph update models? 
 Finally, these algorithmic ideas may be useful building blocks in designing fully dynamic self-aware distributed graph systems. 
 It would be interesting to additionally consider total message complexity costs for these algorithms explicitly, even though they are implicitly encapsulated within the local per-edge bandwidth constraints of the CONGEST model. 

\newpage

  \let\oldthebibliography=\thebibliography
  \let\endoldthebibliography=\endthebibliography
  \renewenvironment{thebibliography}[1]{%
    \begin{oldthebibliography}{#1}%
      \setlength{\parskip}{0ex}%
      \setlength{\itemsep}{0ex}%
  }%
  {%
    \end{oldthebibliography}%
  }
%{ \small
{%\tiny
\bibliographystyle{abbrv}
\bibliography{Distributed-RW}
}

\iffalse
\newpage
\section{Related proposed works}
For further work of the paper
\begin{itemize}
\item generalize to any dynamic graphs
\item extend it to $K$ random walks
\item Estimating spectral gap 
\item $k$ packets/loads send to different places by using random walk. Can we do it better than flooding?
\item Metropolis- Hasting 
\item Sampling in churn-tolerant P2P network
\end{itemize}

\newpage
\section*{Appendix}
%\begin{center}
%\large \textbf{Appendix}
%\end{center}
\appendix
%\section{Moved Proofs}
%\Readsolutionfile{movedProofs}
\input{appendix1}
\fi
\end{document}